\documentclass{jot}

\usepackage{booktabs} 
\usepackage{graphicx}
\usepackage{import}
\usepackage[size=scriptsize]{todonotes}
\usepackage{mathtools}
\usepackage{xcolor}
\usepackage{latexsym}
\usepackage{amsmath}
\usepackage{hyperref}
\usepackage{mathpartir}
\usepackage{multirow}
\usepackage{csquotes}
\usepackage{xspace}
\usepackage{graphicx}
\usepackage{subcaption}
\usepackage{afterpage}
\usepackage{placeins}
\usepackage{amssymb}
\usepackage{textcomp}
\usepackage{comment}
\usepackage{thmtools,thm-restate}
\usepackage{ifthen}
\usepackage{stackengine}
\usepackage{hyperref}
\usepackage[T1]{fontenc}
\usepackage{breakurl}
\usepackage{rotating}
\usepackage{wrapfig}
\newcommand{\qed}{$\hfill\blacksquare$}
\newtheorem{definition}{Definition}
\newtheorem{example}{Example}[section]
\newtheorem{proposition}{Proposition}
\newtheorem{lemma}{Lemma}
\newenvironment{proof}{\noindent\textit{Proof. }}{\qed\\}

\allowdisplaybreaks
\renewcommand{\todo}[1]{}

\newcommand{\enab}{\textit{enab}}
\newcommand{\sink}{\mathit{Sink}}

\newcommand{\id}{\mathit{id}}

\newcommand{\IV}{\mathit{IV}}
\newcommand{\LV}{\mathit{MV}}
\newcommand{\CV}{\mathit{CV}}

\newcommand{\OG}{{\mathit{OG}}}
\newcommand{\IG}{{\mathit{IG}}}
\renewcommand{\L}{{\mathcal{L}}}
\renewcommand{\S}{{\mathcal{S}}}
\newcommand{\B}{{\mathcal{B}}}
\newcommand{\BB}{{\mathbf{B}}}

\newcommand{\OC}{$\stackanchor[1pt]{\Op}{\Cl}$}
\newcommand{\N}{{\ensuremath{\mathsf{N}}\xspace}}

\newcommand{\iv}{\mathit{iv}}
\newcommand{\ivbar}{\bar{\mathit{\imath v}}}

\newcommand{\loc}{\textit{Loc}}
\newcommand{\val}{\textit{Val}}
\renewcommand{\sl}{\mathit{sl}}
\newcommand{\tl}{\mathit{tl}}
\newcommand{\il}{\mathit{il}}

\newcommand{\f}{{\ensuremath{\mathbf{f}}\xspace}}
\newcommand{\Op}{{\ensuremath{\circ}\xspace}}
\newcommand{\Cl}{{\ensuremath{\bullet}\xspace}}
\newcommand{\definedon}{\downarrow}
\newcommand{\undefinedon}{\uparrow}
\newcommand{\upshift}{{\Uparrow}}
\newcommand{\attime}[1]{^{(#1)}}

\newcommand{\step}[1]{\textbf{\textit{#1}}}

\newenvironment{scenarios}[1][]{%
\footnotesize
  \begin{enumerate}[
     series=scenarios,
     resume=scenarios,
     label=\textbf{Scenario \arabic*},
     align=left,
     itemsep=3pt,
     parsep=1pt,
     leftmargin=*,
     labelwidth=3cm,
     itemindent=1.5cm,
     topsep=\medskipamount]%
}{%
  \end{enumerate}
}
\newenvironment{steps}{%
  \newcommand{\Given}{\item\step{Given}\xspace}
  \newcommand{\When}{\item\step{When}\xspace}
  \newcommand{\Then}{\item\step{Then}\xspace}
  \renewcommand{\And}{\\\step{And}\xspace}
  \begin{itemize}[nosep]
}{%
  \end{itemize}
}

\newcommand{\VU}{\mathcal V}
\newcommand{\TU}{\mathcal T}
\newcommand{\UU}{\mathcal U}
\newcommand{\GU}{\mathcal G}
\newcommand{\GUU}{\mathcal{GU}}
\newcommand{\IAU}{\mathcal I}

\newcommand{\bracket}[1]{[\![ {#1} ]\!]}
\let\sem\bracket
\newcommand{\vu}{\vartheta}

\newcommand{\restr}[1]{{\restriction_{#1}}}
\newcommand{\setof}[1]{\{{#1}\}}
\newcommand{\gensetof}[2]{\setof{#1\mid{#2}}}
\newcommand{\tupof}[1]{\langle{#1}\rangle}

\newcommand{\set}[1]{\{{#1}\}}
\newcommand{\tup}[1]{\langle{#1}\rangle}

\newcommand{\trans}[1]{
  \inlinetrans{#1}
}
\newcommand{\inlinetrans}[1]{
  \xrightarrow{\lowerlabel{#1}}
}

\newcommand{\ntrans}[1]{\not\trans{\enspace#1}}

\newcommand{\code}[1]{\ensuremath{ \mathsf{#1}\xspace}}

\newcommand{\compat}{\mathrel{\Lleftarrow\mskip-12mu\Rrightarrow}}
\newcommand{\ncompat}{\mathrel{\Lleftarrow\mskip-12mu\not\Rrightarrow}}
\newcommand{\ssby}{\sqsubseteq}

\newcommand{\Bool}{\code{Bool}}

\newcommand{\lowerlabel}[1]{\smash{\mathchoice
		{\raisebox{-1.5pt}{$\scriptstyle#1$}}
		{\raisebox{-1.5pt}{$\scriptstyle#1$}}
		{\raisebox{-1pt}{$\scriptstyle#1$}}
		{\raisebox{-1pt}{$\scriptscriptstyle#1$}}
}}

\newcommand{\ini}{\mathit{ini}}
\newcommand{\switch}{(sl,\alpha,\phi,a,tl)}
\newcommand{\ststuple}{\tupof{L, V, G,{\rightarrow},il,\ini}}
\newcommand{\bddtstuple}{\tupof{L,V,G,{\rightarrow}, il, IG, OG}}
\newcommand{\bddtstupleind}[1]{\tup{L_{#1},V,G,{\rightarrow}_{#1}, il_{#1}, IG_{#1}, OG_{#1}}}
\newcommand{\bddtstosts}[2]{translateToSTS(#1,#2)}
\newcommand{\isomap}{\mathit{Map}}

\newcommand{\DEF}{\enspace{::=}\enspace}

\newcommand{\arrow}{{\rightarrow}}

\newcommand{\Loc}{L}

\newcommand{\prog}[1]{{\textnormal{\texttt{#1}}}}

\newcommand{\suc}{\textit{succ}}

\newcommand{\True}{\mathbf{true}}
\newcommand{\False}{\mathbf{false}}
\DeclareRobustCommand{\Models}{\vdash}

\newcommand{\var}{\textit{vars}}
\newcommand{\univ}{\mathcal{U}}
\newcommand{\sstate}{\mathit{SymState}}
\newcommand{\cstate}{\mathit{SemState}}

\newcommand{\satK}{\textit{sat}}

\newcommand{\passes}{\mathrel{\mathsf{passes}}}
\newcommand{\fails}{\mathrel{\mathsf{fails}}}

\newboolean{saturated}
\setboolean{saturated}{true}

\newcommand{\TC}{\mathit{TC}}

\newcommand{\sat}[1]{{#1}^\satK}

\newcommand{\ISS}[2]{\text{STS}(#1,#2)}

\newcommand{\seg}[1]{\textit{seg}(#1)}
\newcommand{\Seg}{\textit{seg}}

\newcommand{\ogimp}[2]{\textit{OGImp}(#1,#2)}
\newcommand{\GI}{\mathit{GI}}

\newcommand{\gimp}[2]{\GI_{#1}(#2)}
\newcommand{\EC}{\mathit{EC}}

\newcommand{\Proj}[2][]{\textit{Inj}_{#1}(#2)}

\newcommand{\GL}[1]{\textit{GL}(#1)}

\newcommand{\PC}[2]{\textit{LP}(#1,#2)}
\newcommand{\PCB}[3]{\textit{LP}_{#1}(#2,#3 )}

\newcommand{\discomp}{\B_1 \bigtriangledown \B_2}

\newcommand{\discompthree}{\B_3 \bigtriangledown (\B_1 \bigtriangledown \B_2)}

\newcommand{\discomptwoone}{(\B_3 \bigtriangledown \B_1) \bigtriangledown \B_2}

\newcommand{\prm}[2]{ {#1}{'}}

\newcommand{\segarrow}{\leadsto }
\newcommand{\segtrans}[1]{\overset{#1}{\leadsto} }

\newcommand{\goaltrace}{{\sigma}}

\newcommand{\mapcv}{\rho}

\newcommand{\sattop}{\ell_{\top}}
\newcommand{\satbot}{\ell_{\bot}}

\newcommand{\etapi}{\eta^\ini_{\pi}}

\addto\extrasenglish{

}

\usepackage{multirow}

\jotdetails{
    volume=vv,      
    number=nn,       
    articleno=aa,   
    year=yyyy,      
    license=ccbyncnd    
}

\articletype{regular} 

\title{Disjunction Composition of BDD Transition Systems for Model-Based Testing}

 \author{Tannaz Zameni}
 \author{Petra van den Bos}
 \author{Arend Rensink}
 \affil{University of Twente, Enschede, The Netherlands}

\begin{abstract}
We introduce a compositional approach to model-based test generation in Behavior-Driven Development (BDD). BDD is an agile methodology in which system behavior is specified through textual scenarios that, in our approach, are translated into transition systems used for model-based testing.
This paper formally defines \emph{disjunction composition}, to combine BDD transition systems that represent alternative system behaviors.  
Disjunction composition allows for modeling and testing the integrated behavior while ensuring that the testing power of the original set of scenarios is preserved. This is proved using
a symbolic semantics for BDD transition systems, with the property that the symbolic equivalence of two BDD transition systems guarantees that they fail the same test cases. 
Also, we demonstrate the potential of disjunction composition by applying the composition in an industrial case study. 
\end{abstract}

\keywords{BDD Transition Systems, Symbolic Transition Systems, Disjunction Composition, Behavior-Driven Development, Model-Based Testing.}

\runningtitle{Disjunction Composition of BDD Transition Systems for Model-Based Testing}
\runningauthor{Zameni et al.}

\acknowledgment{This publication is part of the project \textit{TiCToC -Testing in Times of Continuous Change-} with project number 17936 of the research program \textit{MasCot-Mastering Complexity-} 
which is supported by NWO. 

This research was supported by NWO project OCENW.M.23.155 \emph{Evidence-Driven Black-Box Checking (EVI)}.}

\begin{document}

\maketitle
\urlstyle{rm}

\section{Introduction}
Testing often is the primary method for verifying system behavior and is a critical part of software development. But as systems grow in complexity, ensuring the correctness of all integrated behaviors becomes increasingly challenging. Model-based testing (MBT) offers a powerful solution by automating the generation of diverse test cases derived from formal models: precise specifications that define what the system should and should not do. Additionally, using partial models and composing them offers an effective way to ease model construction and obtain larger models that detect bugs that are difficult and expensive to find through manual testing. However, creating and maintaining such formal models can be time-consuming and requires specialized skills, which often makes MBT hard to adopt in everyday software development.


On the other hand, Behavior-driven Development (BDD) is a widely adopted agile methodology that offers a more accessible way to describe system behavior for testing. In BDD, different stakeholders, such as product owners, developers and testers, collaboratively discover system behaviors that are captured in structured documents. A common structure is the \emph{Given-When-Then} format of the Gherkin language \cite{bddbook-formulation,RSpec-book}: \emph{Given} a precondition for the required system state, \emph{When} an action is performed on/by the system, \emph{Then} specified actions and resulting state are expected. Tools like Cucumber \cite{Cucumber} and Reqnroll \cite{Reqnroll} enable users to manually implement the steps of BDD scenarios. However, they lack formal semantics: due to use of natural language, BDD scenarios may be ambiguous. Moreover, BDD scenarios lack mechanisms to automatically compose scenarios and capture their integrated behavior.

In this paper we combine BDD and MBT. We start from BDD Transition Systems (BDDTSs): formal models that can be derived from BDD scenarios (see, e.g., \cite{ZameniIntLang,ZameniBDDtoTestGen}). This paper introduces \emph{disjunction composition} for BDDTS. It combines BDDTSs that represent alternative valid ways for a system to behave. When applied, it integrates the overlapping, but partly different behaviors of two BDDTSs into one BDDTS.  In particular, if both BDDTS describe the same action in their \step{When} step, the composition takes this action as well. Actions that only appear in the \step{When} step of one of the BDDTS are included in the composition as well. This way we faithfully model that \step{When} steps describe which actions a system \emph{may} perform. In contrast, \step{Then} steps describe \emph{expected} behavior. The disjunction composition allows all actions of \step{Then} steps specified by both BDDTS. Other actions are discarded, since the BDDTS disagree on those expectations (and their tests would as well), hence they cannot both be true. (The consequence of discarding an action from the model is that, if a System under Test does perform that action, the test fails.) Hence, disjunction composition not only facilitates the integration of two BDDTS; it also reveals modeling inconsistencies.

Concrete advantages of generating tests on the basis of a composed BDDTS, which covers more behavior, are that the System under Test has to be initialised (i.e., set up) less frequently, and the resulting tests more often give a definite verdict (pass or fail) rather than being inconclusive (as happens when the system displays behavior not covered by the model).

The name \emph{disjunction composition} reflects the intuition that any behavior that either of the composed features can perform is allowed in the When steps, reflecting alternative (disjunctive) possibilities. However, some aspects of the construction (such as discarding non-overlapping actions of \step{Then} steps) actually have a more conjunction-like flavour.

\begin{figure}[b!]
    \centering
    \includegraphics[scale=0.5]{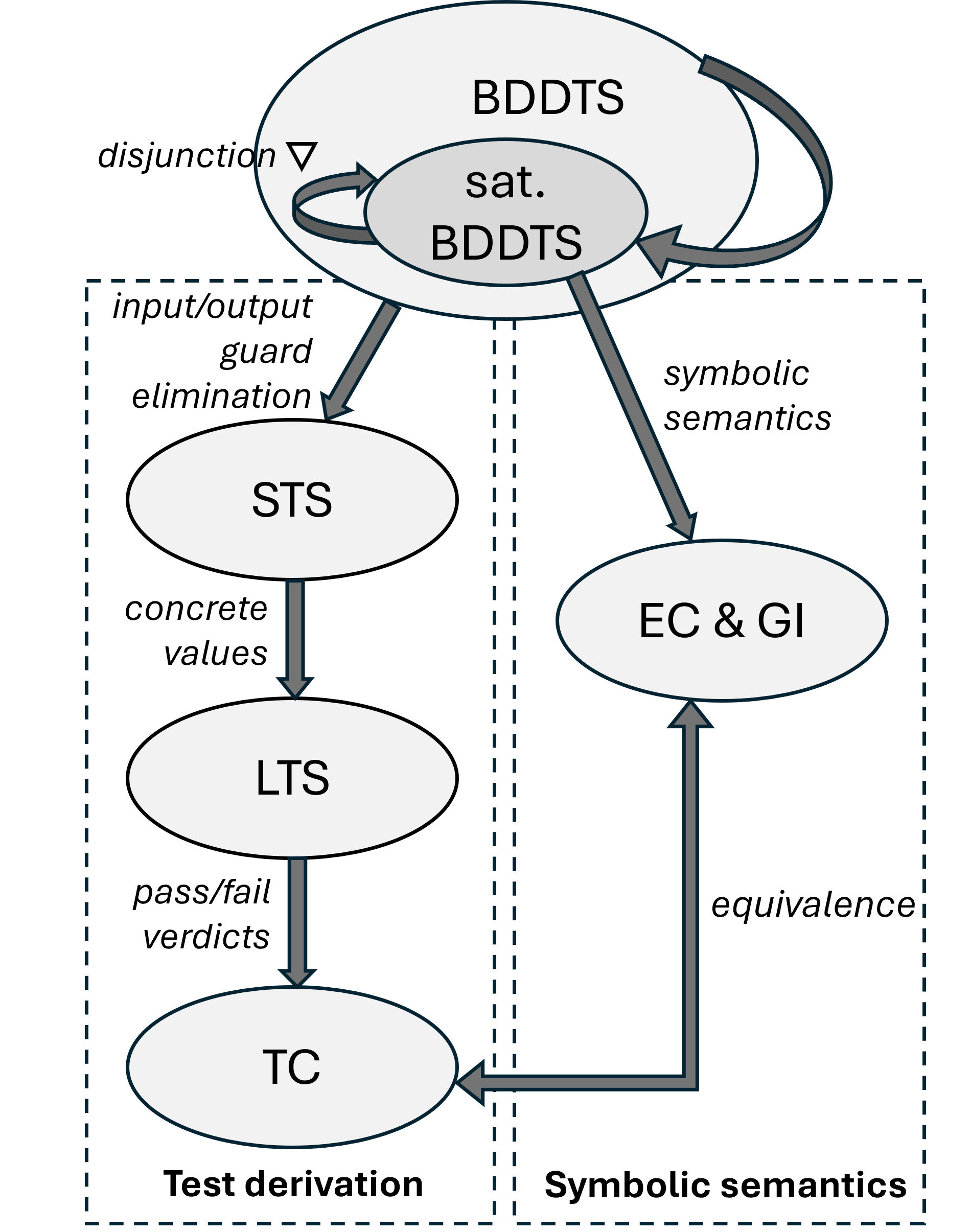}
    \caption{Overview of our paper}
    \label{fig:overall}
\end{figure}

This paper makes four main contributions: 
\begin{enumerate}[noitemsep]
    \item We formally define disjunction composition;
    \item We define a symbolic semantics for BDDTS to reason symbolically about equivalence for testing, and use this to show that the composition of two BDDTS is equivalent to the conjunction of the semantics of those two BDDTS;
    \item We prove correspondence between symbolic semantics of a BDDTS and the concrete test cases that can be derived from it;
    \item We apply disjunction composition on an industrial case study, where we apply and compare our approach with the Dutch Railway's approach of using BDD~scenarios for testing their information boards that display train departures.
\end{enumerate}
%
\autoref{fig:overall} shows an overview of the main elements of our paper.
At the top it shows BDDTS (\autoref{sec:transitionsystems}), the model of this paper.
The \emph{concrete semantics} of BDDTS (\autoref{sec:testcases}) on the left of the overview, is based on an interpretation from Symbolic Transition Systems (STSs) \cite{covsts} in terms of Labeled Transition Systems (LTSs) \cite{mbtlts}, where transitions are instantiated with concrete data values, similar to \cite{testgenerationSTS}. This semantics is well-suited for test case (TC) generation and understanding system behaviors at the instance level.
The symbolic semantics (\autoref{sec:sym-semantics}) describes the system in two ways. First, Execution Conditions (ECs) capture the conditions under which a sequence of transitions can happen. Second, Goal Implications (GIs) express how these conditions lead to expected outcomes, corresponding to the goals defined in the \step{Then} steps of a BDDTS. In other words, ECs tell us when things can happen, and GIs tell us what should be the result if they do happen. Symbolic semantics is based on saturated BDDTS (\autoref{sec:transitionsystems}), which is a version of the BDD~transition system where every possible output and the relevant inputs are explicitly represented. 
This semantics abstracts away from specific data values, allowing reasoning over sets of behaviors in a compact, generalized form. We use this to reason about equivalence of two BDDTS with their disjunction composition (\autoref{sec:disjunction}). Finally, we discuss the case study (\autoref{sec:casestudy-result}), related work (\autoref{sec:relatedwork}) and conclusions (\autoref{sec:conclusion}). Proofs of the results can be found in the appendices.
\section{BDD Transition Systems}
\label{sec:transitionsystems}
To define BDDTS formally, we need to introduce several notations, namely for functions, for terms and assignments, and for states and labeled transitions.

\paragraph{Functions.}
We write a function $f$ from domain $X$ to codomain $Y$ as $f:X \rightarrow Y$, and a partial function as $f: X \hookrightarrow Y$. We also write $Y^X$ for the set of all such functions, and $Y^X_\bot$ for the partial functions. With $f\definedon x$ we denote that $f$ is defined for $x$, and with $f \undefinedon x$ that $f$ is undefined for $x$. This is extended to $f\definedon Z$ and $f\undefinedon Z$ (with $Z\subseteq X$) to mean (un)definedness for \emph{all} $x\in Z$.

Given two partial functions $f_A: A \hookrightarrow B$ and $f_C: C \hookrightarrow D$ such that $f_A \downarrow x $ and $f_C \downarrow x$ implies $f_A(x)=f_C(x)$ for all $x \in  A \cap C$, we define their union $f_A \sqcup f_C: (A \cup C) \hookrightarrow (B \cup D)$ as:
\[f_A \sqcup f_C: x \mapsto
\begin{cases}
    f_A(x) & \text{if } x \in A \text{ and } f_A \downarrow x\\
    f_C(x) & \text{if } x \in C \text{ and } f_C \downarrow x.
\end{cases}
\]
%
%
\paragraph{Terms and assignments.}
In the following, we briefly review common algebraic concepts like 
terms, variables and assignments, and their semantics, without going into full formal detail.



\begin{itemize}
\item We build \emph{terms} from functions and variables (as usual). 
The set of all variables is denoted $\VU$, and the set of terms over variables $V\subseteq\VU$ is denoted 
$\TU(V)$.
A term $t$ is called \emph{ground} if $t\in \TU(\emptyset)$. 
We assume that terms are well-typed. Terms of type $\Bool$, over variables $V$, are denoted $\TU_\Bool(V)$.

\item The set of variables actually occurring in a term $ t \in  \TU(V)$ are denoted $\var(t)$.
\item  We write $t[x/y]$ with $t\in \TU(X)$ and $x,y\in X$ for the substitution of all $y$ by $x$ in $t$.
\item An \emph{assignment} is a partial function $a \in \TU(V)^Y_\bot$ for some $V, Y\subseteq \VU$. $\id_V \in \TU(V)^V$ is the \textit{identity assignment} over $V$, defined as $\id(v)=v$ for all $v \in V$. 
%
%
We will also use $a_1\sqcup a_2$ if $a_1$ and $a_2$ are \emph{compatible} ($a_1\compat a_2$, see below). We apply an assignment $a$ to a term $t$ by writing $t[a]$ (which syntactically substitutes, in $t$, all $x$ for which $a\definedon x$ by $a(x)$). This notation is extended pointwise to assignments: $a_1[a_2]$ is the assignment defined by $\{ x\mapsto a_1(x)[a_2] \mid a_1\definedon x\}$.

\item The value of a term $t\in \TU(V)$ is generated by a \emph{valuation} $\vu:V\rightarrow \UU$, where $\UU$ denotes the domain of (semantic) values (of all types).
%
For $t\in \TU(V)$ and $\vu\in \UU^V_\bot$ with $\vu\definedon\var(t)$ we write $\bracket t_\vu\in \UU$ to denote the value of $t$ under $\vu$, i.e. the value that is obtained by substituting the values for variables according to $\vartheta$ in the term $t$, and then evaluating the result.
Likewise, for assignments $a\in \TU(V)^V_\bot$, $\bracket a_\vu\in \UU^V_\bot$ is the valuation that maps each $x\in V$ to $\bracket{a(x)}_\vu$. If $t$ resp.\ the images of $a$ are ground, we don't need $\vu$ and will just write $\bracket t$ and $\bracket a$.

\item Terms $t_i\in \TU(V_i)$ for $i=1,2$ are \emph{semantically equivalent}, denoted $t_1\equiv t_2$, if $\bracket{t_1}_\vu=\bracket{t_2}_\vu$ for all $\vu\in \UU^{V_1\cup V_2}$. Assignments $a_i\in \TU(V_i)^{Y_i}$ for $i=1,2$ are \emph{semantically compatible}, denoted $a_1\compat a_2$, if $a_1(x)\equiv a_2(x)$ for all $x\in Y_1\cap Y_2$. If $a_1\compat a_2$ then $a_1\sqcup a_2$ is well-defined, as announced above.

\item For predicates $t_i\in\TU_\Bool(V)$ and valuations $\vu$ with $\vu\definedon\var(t_i)$ for $i=1,2$, we use $\vartheta\models t_1$ (``$\vartheta$ satisfies $t_1$'') as an alternative notation for $\sem{t_1}_{\vartheta}=\True$, and $t_1\Rrightarrow t_2$ to denote semantic implication, i.e., $\vu\models t_1$ implies $\vu\models t_2$ for all $\vu\in \UU^V$.
\end{itemize}
%
%
In the above, we ignore the choice of the particular functions and predicates from which terms are built. This is an issue treated in more detail in \cite{ZameniIntLang}, where we introduced the concept of a \emph{Domain-Specific Interpretation}: essentially an algebra that captures relevant information about the domain, such as the available types, values and operations on them.

\paragraph{States and labeled transitions.}
In this paper, we often use \emph{transition relations}, which are always defined as $\arrow\subseteq Q\times A\times Q$ for a set $Q$ of states (sometimes called locations) and a set $A$ of labels. For such transition relations, we adopt some common notations. To start, $q_0\trans{a_0\cdots a_{n-1}}q_n$ for $n\geq0$ denotes $(q_i,a_i,q_{i+1})\in{\rightarrow}$ for $0\leq i <n$; next, $q\trans\sigma P$ with $P\subseteq Q$ and $\sigma\in A^*$ denotes that there is some $q'\in P$ such that $q\trans\sigma q'$; and finally, $q\trans\sigma{}$ abbreviates $q\trans\sigma Q$. A state $q$ is called a \emph{sink}, denoted $\sink(q)$, if $\nexists a:q\trans a$. The subset of sink states in $P\subseteq Q$ is denoted $P_\sink$. 

A set of labels $A$ is usually partitioned in a set of inputs $A_i$  and outputs $A_o$, i.e. $A=A_i\cup A_o$, and $A_i \cap A_o = \emptyset$. An output denotes an action of the System under Test, that is provided to, or observed by its environment (e.g., a system user, another system, or the testing tooling). Conversely, an input denotes an action performed by the environment that is provided to the System under Test. 

In several cases, the set of states $Q$ will be partitioned into $Q=Q^\Op\cup Q^\Cl$ of \emph{open} and \emph{closed} elements, to record the intended interpretation of ``missing outputs'', i.e., output actions for which there is no outgoing transition from a given state. If a system actually performs such a missing output, contrary to what the model specifies, how should this be dealt with? From closed$^\Cl$ states this is considered forbidden (and will lead to a failing test) whereas from open$^\Op$ states this is considered unspecified (and will lead to inconclusive). These different interpretations are inspired by the meaning of actions specified in \step{then} vs.\ \step{when} steps of a BDD scenario. We call sets $Q$ that are partitioned in this way \emph{\OC-natured}. For a \OC-natured set $Q$ we will use $\N:Q\to\setof{\Op,\Cl}$ to retrieve the nature.

For the case of BDDTSs, defined below, the transitions are defined between \emph{locations} (which we denote $L$ rather than $Q$ as above), and its labels (which we denote $\Lambda$ rather than $A$) are structures consisting of several elements, among which are so called \emph{interactions}. The transitions themselves are called \emph{switches} to reflect their more complex structure. An interaction $\alpha = g\,\ivbar$ consists of
\emph{(i)} a \emph{gate} $g\in\GU$, where $\GU$ is a universe of gates, partitioned into input gates $\GU_i$ and output gates $\GU_o$, and
\emph{(ii)} a sequence of \emph{interaction variables} $\ivbar=\iv_{1}\cdots \iv_{n}\in \IV^*$, where $\IV\subseteq \VU$ is a special set of variables.
We use $g_\alpha$ and $\ivbar_\alpha$ to denote the constituent elements of an interaction $\alpha$. The universe of interactions $\IAU\subseteq \GU\times \IV^*$ forms a one-to-one relation, i.e., gates have fixed interaction variables. For a subset $G \subseteq \GU$, we use $\IAU(G)=\{\alpha\in \IAU\mid g_\alpha\in G\}$ to denote the subset of interactions with gates in $G$. In terms of the general setup discussed above, the partitioning of $\Lambda$ into input labels $\Lambda_i$ and output labels $\Lambda_o$ is determined by the input/output-nature of the interaction gate contained in the label. 

\paragraph{BDDTS.}
From these basic elements we now define the core notion of a \emph{BDD Transition System} (BDDTS), inspired by \cite{ZameniIntLang}, which formally captures the behavior prescribed by a single BDD scenario or (alternatively) the composition of several BDD scenarios. In \autoref{exmp:bddts} we will discuss an example of \autoref{def:BDDTS}.
\begin{definition}[BDD Transition System]\label{def:BDDTS}
A \emph{BDD Transition System} (BDDTS) is a tuple $\B = \tupof{L,V,G,\rightarrow,\il,\IG,\OG}$, where
\begin{itemize}[topsep=\smallskipamount]
\item $L$ is a \OC-natured set of \emph{locations};
\item $V$ is a set of \emph{location variables}, disjoint from $\IV$ and partitioned into $\LV$ (model variables) and $\CV$ (context variables);
\item $G\subseteq \GU$ is a set of \emph{gates}. We write $G_i=G\cap \GU_i$ and $G_o=G\cap \GU_o$, and use $\IV$ to denote the set of interaction variables of $G$ (and hence of $\B$).
\item $\arrow \subseteq L \times \Lambda \times L$ is a \emph{switch relation}, with label set $\Lambda=\IAU(G) \times \TU_\Bool(V\cup\IV) \times \TU(V\cup\IV)_\bot^{\LV}$;
\item $il\in L^\Op$ is the \emph{initial location} (which is open);
\item $IG \in \TU_{\Bool}(V)$ is the \emph{input guard} (which can also be thought of an \emph{initial guard});
\item $OG: L \hookrightarrow \TU_{\Bool}(V)$ is a partial map from locations to \emph{output guards}.
\end{itemize}
$\B$ is \emph{deterministic}, meaning that: if $t_i=(\sl,\alpha,\phi_i,a_i,\tl_i)\in \arrow$ for $i=1,2$, then either $t_1=t_2$ or $\phi_1\wedge \phi_2\equiv\False$. The \emph{active variables} $V_l\subseteq V$ of a location $l\in L$ are recursively defined as the smallest set such that $\var(\OG(l))\subseteq V_l$ if $\OG\definedon l$, and $\var(\phi_t)\cup V_{\tl_t}\subseteq V_{\sl_t}\cup \IV$ for all $t\in\arrow$. For all $t\in\arrow$ we have $a_t\definedon V_{\tl_t}$.
\end{definition}
We use $L_\B$, $V_\B$ etc.\ to denote the components of a BDDTS $\B$; for $\B_i$, we further abbreviate $L_{\B_i}$, $V_{\B_i}$ etc.\ to $L_i$, $V_i$. When considering multiple BDDTSs $\B_i$, we assume they are \emph{compatible}, i.e., share the same $G_i$ and $V_i$. Hence, we write simply $G$ and $V$. The components of a switch $t\in{\rightarrow}$ are denoted $\sl_t$ (the \emph{source location}), $\alpha_t=(g_t,\ivbar_t)$ (the \emph{interaction}), $\phi_t$ (the \emph{switch guard}, a boolean term), $a_t$ (the \emph{assignment}, a partial function from $\LV$ to terms over the active variables of the target location) and $\tl_t$ (the \emph{target location}); for a switch denoted $t_i$, we write $\sl_i$, $\alpha_i$ etc. We call a location $l$ a \emph{goal location} if $\OG \definedon l$.

An important aspect is the distinction between model variables and context variables. Context variables are assumed to be directly observable or read from the System Under Test (SuT); they reflect the actual runtime state or outputs of the system. Model variables, in contrast, belong to the formal model and represent expected or hypothesized state. As a consequence, model variables can be changed through assignments in BDDTS switches (such an assignment reflects an expectation of how the system state changes), in contrast to context variables.

A BDDTS does contain an initialization of its model variables or context variables; however, the Input Guard typically constrains their expected initial value (possibly even to a particular constant). Upon instantiation of a BDDTS (in our setting, when it is transformed to an STS and from there to a Test Case), initial values have to be provided that satisfy the Input Guard. When a context variable occurs in a switch guard, its value is taken to be unchanged with respect to the initial state; when it occurs in an output guard, however, it is compared with the actual runtime state at that moment. (How this comparison is performed is fixed in the transformation to an STS.)

While model variables are not inherently tied to the SuT, they may be compared against context variables --- e.g., in guards or assertions --- to express that the observed system behavior conforms to the model’s expectations.
\begin{figure*}[t!]
\begin{subfigure}[h]{0.35\textwidth}
\begin{scenarios}
\item \label{scenario:door} Door access with badge
    \begin{steps}
        \Given the badge ID 1234 is authorized,
        \And badge ID 1234 is presented at the door,
        \When the system verifies the badge,
        \Then the system sends the command to open the door
        \And the state of the door is now changed to open.
    \end{steps}
\end{scenarios}
\caption{\ref{scenario:door} describes the verification of a badge to get access through a door.}
\label{fig:door-scenario}
\end{subfigure}
\begin{subfigure}[h]{0.65\textwidth}
    \includegraphics[width=\linewidth]{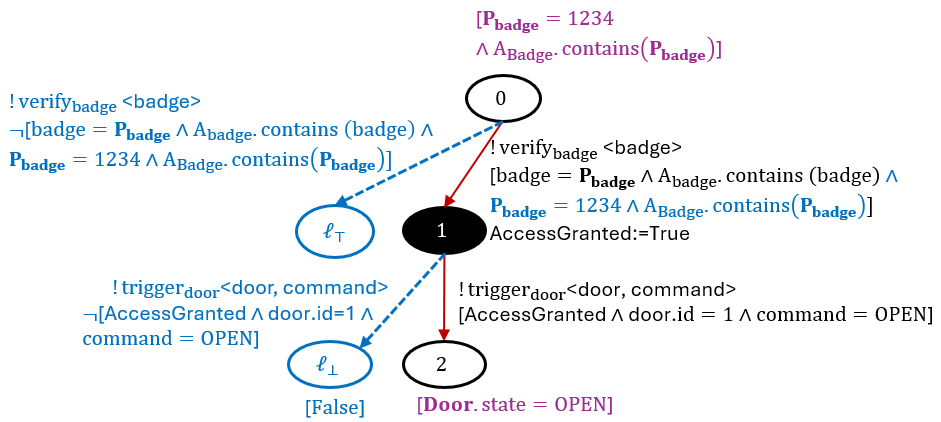}
    \caption{BDDTS $\B$ has open locations 0 and 2, with input and output guard resp., closed location 1, and the two red switches. The saturation of $\B$ also includes the dashed switches and blue expressions.}
    \label{fig:door-bddts}
\end{subfigure}
\caption{A BDD scenario (a) and its corresponding BDDTS (b)}
\end{figure*}
\begin{example} \label{exmp:bddts}
\autoref{fig:door-bddts} shows the BDDTS for \ref{scenario:door} of \autoref{fig:door-scenario}.
The \step{Given} step is represented by the input guard on location 0.
The When step is represented by the $\code{!verify_{badge}}$ output switch, and the first part of \step{Then}, i.e. before \step{And}, is represented by the $\code{!trigger_{door}}$ output switch. Observe that such a \step{Then} switch  starts from a closed location, while the switch resulting from a \step{When} step starts from an open location. The second part of the Then step, i.e. after \step{And}, is formalized as the output guard of location 2. 

Model variable $\code{A_{badge}}$ contains the list of all authorized badges, while $\code{P_{badge}}$ is a context variable of the badge provided in the \step{Given} step. $\code{P_{badge}}$ is a context variable as it must be read from the system, while $\code{A_{badge}}$ is a list stored in the model.
With model variable \code{AccessGranted} we memorize that access was verified in the $\code{!verify_{badge}}$ switch, so that this can be used to guard the $\code{!trigger_{door}}$ switch. The \code{Door} context variable provides the final status of the door, OPEN or CLOSED. 

\end{example}
BDDTSs were originally introduced as a basis for deriving test cases. In such a test case, the interpretation of a concrete system action that fails to satisfy any of the switch guards in a given location $l$ (in other words, a system action that is ``absent'' from the BDDTS) depends on the \OC-nature of $l$: for open states and for input actions, the test is inconclusive, whereas for output actions from closed states, the test fails. Instead of directly going from BDDTSs to test cases, however, we can first compose several (similar) BDDTSs, resulting in a more comprehensive BDDTS that combines the testing power of the original ones. (This is the core contribution of this paper.) For the composition to work as required, we need to first \emph{saturate} the BDDTS by adding switches to capture the intended interpretation of ``absent'' system actions.

\begin{definition}[saturated BDDTS] \label{def:saturated}
A BDDTS $\B$ is called \emph{saturated} if:
\begin{enumerate}[nosep]
\item\label{open-sat} For any $l\in L_\B$ and $\alpha \in \IAU(G)$, the set $\set{t_1,\ldots,t_n}\subseteq \arrow_\B$ of all switches with $\sl_i=l$ and $\alpha_i=\alpha$ is either empty or satisfies $\bigvee \phi_i \equiv \True$;

\item\label{closed-sat} For any $l\in L_\B^\Cl$ and $\alpha \in \IAU(G_o)$, there is a switch $l\trans{\alpha,\phi,a}_\B$ for some guard $\phi$ and assignment $a$.

\item\label{IG-sat} For any $\il_\B\trans{\alpha,\phi,a}_\B l$, either $\phi\Rrightarrow \IG$ or $l\in L_{\B,\sink}^\Op$ and $\OG\undefinedon l$.
\end{enumerate}
\end{definition}
Clause \ref{open-sat} requires that, for every location and interaction $\alpha$, either there is no outgoing $\alpha$-switch at all or there is a switch for all $\alpha$-based interaction values. Clause~\ref{closed-sat} requires that, for every \emph{closed} location and \emph{output} interaction $\alpha$, there is an outgoing $\alpha$-switch (so the first case of Clause~\ref{open-sat} does not occur). Clause~\ref{IG-sat}, finally, encapsulates the intuition that initial switches that correspond to behavior disallowed by the input guard always give rise to an inconclusive verdict. 
This is achieved by insisting that such switches immediately end in an open, non-goal sink location.

Turning an arbitrary BDDTS into a saturated one (i.e., \emph{saturating} it) consists of three steps:
\begin{enumerate}[noitemsep]
\item \textbf{Input guard propagation}: Since a BDDTS only applies to systems that satisfy its input guard, we make this explicit by adding the input guard to the guards of all existing initial switches. 
    
\item \textbf{Completion for existing interactions}: For all sets of switches $T$ from a certain location $l$ with the same interaction $\alpha$, we add a \emph{saturating switch} from $l$ for $\alpha$, with a guard equivalent to the negation of all $T$-guards.
    Consequently, after saturation, location $l$ has a switch whose guard evaluates to true for any valuation of variables.
Saturating switches from open locations lead to a state where any action is allowed ($\ell_\top$ in the definition below), whereas those from closed locations lead to a state ($\ell_\bot$) which has output guard $\False$ and hence is essentially a failure state.

\item \textbf{Completion for absent output interactions}: If $l$ in the above is a closed location, we also add switches for output gates that are not enabled, again to $\ell_\bot$, because these switches represent disallowed behavior.
\end{enumerate}
\begin{definition}[saturating a BDDTS] \label{def:saturation}
Given a BDDTS $\B = \tupof{L,V,G,\rightarrow,\il,\IG,\OG}$, its saturation is defined as 
\[ \sat{\B}= \tupof{L \cup \{\sattop,\satbot\},V,G,\rightarrow_\satK,\il,\IG,\sat{\OG}} \]
Here 
$\sattop,\satbot$ are fresh open locations such that $\sat\OG=\OG\sqcup \{(\ell_\bot,\False)\}$ (hence $\sat\OG\undefinedon \sattop$).
Switch relation $\arrow_\satK$ is constructed from $\IG$-enriched switches $\rightarrow_{IG}$ and saturating switches  $\arrow_\top$, $\arrow_\bot$ to (respectively) $\sattop$ and $\satbot$, using all enabled guards $\Phi_{l, \alpha}$ from a location $l$ for interaction $\alpha$, as follows:
\begin{align*}
\rightarrow_\IG
 & = \{t \in \arrow \mid \sl_t \neq il \} \\
 & \quad\cup \{(il,\alpha_t,IG \land \phi_t,a_t,\tl_t) \mid t \in \arrow, sl_t = il\}\\ 
 \Phi_{l, \alpha} &= \{\phi_t \mid t \in \arrow_{IG},\sl_t=l,\alpha_t=\alpha\}: \\
\arrow_\top
 & = \{(l, \alpha,  \neg\textstyle\bigvee \Phi_{l,\alpha}, \epsilon, \sattop) \mid (l \in L^\Op \vee g_\alpha\in G_i), l\trans{\alpha,\phi,a} {}\}\\
\arrow_\bot
 & = \{(l, \alpha,  \neg\textstyle\bigvee \Phi_{l,\alpha}, \epsilon, \satbot) \mid l \in L^\Cl, g_\alpha\in G_o \}\\
\arrow_\satK
 & = \arrow_{IG} \cup \arrow_\top \cup \arrow_\bot
\end{align*} 
\end{definition}
Note that, as a special case, $\arrow_\bot$ contains $(l,\alpha,\True,\epsilon,\satbot)$ for all $l\in L^\Cl$ and $\alpha\in \IAU(G_o)$ such that $\nexists t\in\arrow: \sl_t=l\wedge \alpha_t=\alpha$. $\sat\B$ is deterministic as required by \autoref{def:BDDTS}, and the following proposition holds by construction.
\begin{proposition}
For any BDDTS $\B$, $\sat\B$ is saturated.
\end{proposition}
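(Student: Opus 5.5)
The plan is to check the three clauses of \autoref{def:saturated} one at a time against the construction in \autoref{def:saturation}. Two preliminary observations will be used throughout.

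First, the fresh locations $\sattop$ and $\satbot$ have no outgoing switches in $\arrow_\satK$:
\begin{itemize}
\item $\arrow_\IG$ only contains switches whose source lies in $L$.
\item $\arrow_\top$ requires a switch $l\trans{\alpha,\phi,a}{}$ in the original relation, so again $l\in L$.
\item $\arrow_\bot$ requires $l\in L^\Cl$, and both fresh locations are open and not in $L$.
\end{itemize}
Hence $\sattop$ and $\satbot$ are sinks, and every clause is vacuous at them.

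Second, $\arrow_\IG$ has exactly the same pairs (source location, interaction) as $\arrow$. Only guards of initial switches are changed, by conjoining $\IG$. So "$l$ has an original $\alpha$-switch" means the same thing whether it is read in $\arrow$ or in $\arrow_\IG$. For each $l\in L$ and $\alpha$, the conditions "$l\in L^\Op$ or $g_\alpha\in G_i$" and "$l\in L^\Cl$ and $g_\alpha\in G_o$" are mutually exclusive and exhaustive. This case split drives the argument.

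\textbf{Clause~\ref{open-sat}.} Fix $l\in L$ and $\alpha$, and suppose some $\alpha$-switch from $l$ exists in $\arrow_\satK$.
\begin{itemize}
\item If $l$ has an original $\alpha$-switch, then the switches from $l$ with interaction $\alpha$ are those with guards in $\Phi_{l,\alpha}$, plus exactly one saturating switch with guard $\neg\bigvee\Phi_{l,\alpha}$. That switch lies in $\arrow_\top$ or in $\arrow_\bot$, according to which of the two exclusive cases holds. The disjunction of all these guards is $\bigvee\Phi_{l,\alpha}\vee\neg\bigvee\Phi_{l,\alpha}\equiv\True$.
\item Otherwise the switch cannot come from $\arrow_\IG$, nor from $\arrow_\top$ (which requires an original switch). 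So it comes from $\arrow_\bot$. Then $\Phi_{l,\alpha}=\emptyset$, its guard is $\neg\bigvee\emptyset\equiv\True$, and it is the only such switch.
\end{itemize}

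\textbf{Clause~\ref{closed-sat}.} This is immediate, since $\arrow_\bot$ contains a switch from every $l\in L^\Cl$ for every $\alpha\in\IAU(G_o)$.

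\textbf{Clause~\ref{IG-sat}.} Take a switch $\il\trans{\alpha,\phi,a}l$ in $\arrow_\satK$.
\begin{itemize}
\item If it lies in $\arrow_\IG$, then $\phi=\IG\wedge\phi_t$, so $\phi\Rrightarrow\IG$.
\item Otherwise it is saturating. Since $\il\in L^\Op$, it cannot lie in $\arrow_\bot$, so it lies in $\arrow_\top$ and $l=\sattop$. This location is open, is a sink by the first observation, and satisfies $\sat\OG\undefinedon\sattop$, because $\sat\OG$ only adds $\satbot$ to the domain of $\OG$.
\end{itemize}

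I expect no real obstacle here. The only delicate points are bookkeeping. One must ensure that the saturating switch lands in exactly one of $\arrow_\top$ or $\arrow_\bot$, so that there are no duplicate or conflicting completions. One must also ensure that $\Phi_{l,\alpha}$ is taken over $\arrow_\IG$, so that the initial switches are completed relative to their $\IG$-strengthened guards. Both points follow directly from the case split above.
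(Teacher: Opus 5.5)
Your proof is correct and takes essentially the same approach as the paper. The paper simply states that the proposition ``holds by construction'' (noting determinism of $\sat\B$ separately), and your clause-by-clause check is exactly the verification that remark leaves implicit: the fresh locations are sinks, the case split sends each saturating switch to exactly one of $\arrow_\top$ and $\arrow_\bot$, and an absent closed output gets the single guard $\neg\bigvee\emptyset\equiv\True$.
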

In \autoref{fig:door-bddts}, two dashed blue switches are added through saturation. Also, the input guard is conjoined, in blue, to the guard of the red switch from location 0.

\section{Semantics of BDDTS in terms of Goal Implications}
\label{sec:sym-semantics}
The essence of saturated BDDTSs is captured by the traces (i.e., sequences of interactions) leading to \textit{goal locations}. We capture the semantics of such traces by collecting the successive switch conditions into a so-called \emph{path condition} --- similar to the path conditions for STSs in \cite{covsts}. To define this formally, we need some more notation.

\begin{itemize}
\item To refer to the value of a variable $x$ at some point in the past, we will use $x\attime i$ with $i\in \mathbb{N}$. To be precise, we assume that for all $x$ and all $i>0$, $x\attime i$ is a distinct (fresh) variable referring to the value of $x$ at $i$ time instances in the past. $x\attime 0$ is the same as $x$. In addition, $x\upshift$ \emph{up-shifts} the time index, hence $x\attime i\upshift=x\attime{i+1}$.

\item Up-shifting is extended to expressions: $e\upshift$ denotes a copy of $e$ where all variables have been up-shifted;  in $e\upshift_X$ 
only variables $X$ have been up-shifted (hence the suffix $\upshift_X$ can be thought of as applying an assignment $\{ x^{(i)}\mapsto x^{(i+1)} \mid x\in X, i\in \mathbb{N} \}$).

\item We will use $\sigma\in \IAU(G)^*$ to denote sequences of interactions, and $\pi\in \Lambda^*$ to denote \emph{paths}, i.e., sequences of switch labels.
For a given $\pi=\lambda_1\cdots\lambda_n$, the corresponding sequence of interactions is denoted $\sigma_\pi=\alpha_1\cdots\alpha_n$.
\item A \emph{gate value} for $g\in \GU$ is a tuple $g\,\bar w$, where $\bar w=w_1\cdots w_n\in \UU^*$ is a sequence of values, such that the number of values and their types match with $g$'s interaction variables $\ivbar$ (from interaction $(g,\ivbar) \in \IAU(G)$). We write $\GUU$ for the set of all gate values. 
 \item Given a gate value sequence $ \omega=g_1\,\bar w_1, \ldots, g_n\,\bar w_n $, we write $\sigma_\omega$ to denote its corresponding interaction sequence such that $(g_i,\ivbar_i) \in \IAU(G)$ for $1\leq i \leq n$.
 \item Because $\ivbar$ is fixed for each $g$ (via its interaction $(g,\ivbar) \in \IAU(G)$), a gate value $g\,\bar w$ uniquely gives rise to a valuation $\vu_{g\,\bar w}$ mapping each $\iv_i$ to the corresponding $w_i$.
\end{itemize}
A path condition only contains interaction variables as free variables. Values of model variables are completely determined by their initial value in the initial location, plus assignments in subsequent switches. We construct a symbolic assignment function $a_\pi^\ini$ that maps model variables to expressions over interaction variables (on path $\pi$) and symbolic initial values as assigned by $\ini$.  
Values of context variables remain constant along the path, as fixed by the initial assignment $\ini$.  
We use upshifting to distinguish between interaction variables used on path $\pi$, since switches may use the same interaction variables.
\begin{definition}[Path condition] \label{def:path-condition}
Given a BDDTS, for arbitrary paths $\pi \in \Lambda^*$ and initial assignments $\ini \in \TU(\emptyset)^V$, the symbolic assignment $a_\pi^\ini\in \TU(\IV)^V$ and path condition $\eta_\pi^\ini \in \TU(\IV)_\Bool$ are defined inductively over the length of $\pi$:

\begin{minipage}{\columnwidth}
\begin{align*}
     a_\epsilon^\ini &= \ini & \eta_{\epsilon}^\ini &= \True \\
        a_{\pi\cdot(\alpha,\phi,a)}^\ini &= a[a_\pi^\ini\upshift] & \eta_{\pi\cdot(\alpha,\phi,a)}^\ini &= \eta_\pi^\ini\upshift \wedge \phi[a_\pi^\ini\upshift]
\end{align*}
\end{minipage}
\end{definition}
The intuition behind a path condition is that it is fulfilled precisely by those sequences of gate values that satisfy the successive switch conditions. 
For example, a path condition for the sequence of red switches in \autoref{fig:door-bddts} is:
\begin{align*}
& \code{badge}^1 = \ini(\code{P_{badge}}) \wedge \ini(\code{A_{badge}})\code{.contains(badge^1)}\\
& {}\wedge \code{P_{badge} = 1234} \wedge \ini(\code{A_{badge}).contains(\ini(P_{badge}))}\\
& {}\wedge \True \wedge \code{door.id} = 1 \wedge \code{command = OPEN}
\end{align*}
To formulate the symbolic semantics of a BDDTS, we collect goal paths: all paths $\pi$ to some goal location $gl$ sharing an interaction sequence $\sigma$. The goal implication then expresses that for each goal path $(\pi,gl)$ of interaction sequence $\sigma$, the path condition of $\pi$ should make the output guard of $gl$ true. Intuitively, this expresses that taking all switches of $\pi$ should make the output guard at the end true; in BDD terms, this means that the postcondition of the \step{Then} step is satisfied after executing the scenario.
%

\begin{definition}[$\EC$ and $\GI$]\label{def:goal-implication}
Let $\B$ be a saturated BDDTS and  $\ini \in \TU(\emptyset)^V$ an initialization. The \emph{execution condition function} $\EC_{\B,\ini}\colon \IAU(G)^* \rightarrow \TU_\Bool(V)$ and \emph{goal implication function} $\GI_{\B,\ini}\colon \IAU(G)^* \rightarrow \TU_\Bool(V)$ are defined, for any $\sigma \in \IAU(G)^*$, as:
\begin{align*}
\EC_{\B,\ini}: \sigma & \mapsto \IG \land
  \textstyle\bigvee \bigl\{\, \eta_\pi^\ini \;\big|\; \il\trans\pi,\; \sigma_\pi = \sigma \,\bigr\} \\
\GI_{\B,\ini}: \sigma & \mapsto
  \textstyle\bigwedge \bigl\{\, \eta_\pi^\ini\Rightarrow\OG(l)[a_\pi^\ini] \\ 
  & \qquad\qquad \;\big|\; \il\trans\pi l,\; \OG\downarrow l,\; \sigma_\pi = \sigma \,\bigr\} 
\end{align*}
\end{definition}
Hence, for every $\sigma$, the execution condition $\EC$ and goal implication $\GI$ are predicates over (upshifted) variables $\IV$. The first characterizes when $\sigma$ is executable, whereas the second expresses the property that, when $\sigma$ leads to a goal location $l$, the output guard of $l$ holds whenever $\sigma$ is executable (see \autoref{ex:2} and~\ref{ex:3}).

Note that we have defined \autoref{def:goal-implication} only for saturated BDDTSs. Though the execution conditions and goal implications themselves can be computed for arbitrary models, their intention is to capture the testing power of BDDTSs (\autoref{thm:sym-concrete-relation} and~\autoref{corr:testing-equivalence}); but that correspondence only holds for saturated models. Indeed, the saturation of a BDDTS in general preserves neither $\EC$ nor $\GI$.\todo{Does this need an example?}

Now we can use the execution condition and goal implication as symbolic BDDTS semantics. We define that on the level of \emph{sets} $\BB$ of BDDTSs. As a convenient notation to restrict such a set to those BDDTSs of which the input guard is satisfied by a given initialization $\ini\in \TU(\emptyset)^V$, we use:
\[   \BB\restr\ini = \{\B\in\BB\mid \sem\ini \models \IG_\B \} \enspace.\]

\begin{definition}[Testing equivalence]\label{def:testing-equivalence}
Two sets of saturated BDDTSs $\BB_1,\BB_2$ are \emph{testing equivalent}, denoted $\BB_1\simeq \BB_2$, if for all $\ini \in \TU(\emptyset)^V$ and $\sigma\in \IAU(G)^*$:

\begin{minipage}{\columnwidth}
\begin{align*}
\textstyle \bigvee_{\B\in\BB_1\restr\ini} \EC_{\B,\ini}(\sigma)
 & \equiv \textstyle \bigvee_{\B\in\BB_2\restr\ini} \EC_{\B,\ini}(\sigma) \\
\textstyle \bigwedge_{\B\in\BB_1\restr\ini} \GI_{\B,\ini}(\sigma)
 & \equiv \textstyle \bigwedge_{\B\in\BB_2\restr\ini} \GI_{\B,\ini}(\sigma) \enspace.
\end{align*}
\end{minipage}
\end{definition}
One of the main results of this paper (\autoref{corr:testing-equivalence}) is that 
testing equivalence guarantees that the derived tests always give the same test verdicts on the same behavior. In other words, sets of BDDTSs that are testing equivalent have the same testing power, i.e., the same capability to accept or reject Systems under Test on the basis of their observable behavior. 

\section {Disjunction Composition}
\label{sec:disjunction}
We now come to the main contribution of this paper: the definition of disjunction composition $\bigtriangledown$ for saturated BDDTSs.

\newcommand{\ERROR}{\mathrm{ERROR}}
\begin{definition} {\label{def:disjunction}}
Let $\B_1$ and $\B_2$ be two saturated BDDTSs Their disjunction composition is then defined as
 $\B_1 \bigtriangledown \B_2= \bddtstuple$, where
\begin{itemize}[noitemsep]
\item $L = (L_1 \cup \{\bot\}) \times  (L_2 \cup \{\bot\})$, with $L^\Op=(L_1^\Op\cup\{\bot\})\times (L_2^\Op\cup\{\bot\})$ 

\item $\arrow$ is the relation generated by the following rules:
\begin{enumerate}
\item Disjunction for shared interactions:
\[
	\dfrac{
		l_1 \trans{ \alpha, \phi_1  ,a_1 } l'_1 
        \quad l_2 \trans{\alpha, \phi_2,a_2 } l'_2 
        \quad a_1\compat a_2
	}
	{
        (l_1,l_2) \trans{ \alpha, ( \phi_1  \wedge \phi_2), a_1 \sqcup a_2 } (l'_1, l'_2)
	}
   \enspace  \textrm{1}
 \]

 \item Disjunction for non-shared interactions:
 \[
	\frac{
		l_1 \trans{ \alpha, \phi_1  ,a_1 } l'_1
        \quad l_2\ntrans\alpha
	}
	{
		(l_1,l_2) \trans{ \alpha, \phi_1,{a_1} } (l'_1,\bot) 
	}
  \enspace  \textrm{2-1}\]
  \[
	\frac{
		l_1\ntrans\alpha
		\quad l_2 \trans{\alpha, \phi_2 ,a_2 } l'_2 
  }
	{
		(l_1,l_2) \trans{\alpha, \phi_2  ,{a_2} } (\bot,l'_2) 
	}
  \enspace  \textrm{2-2}
	\]
\end{enumerate}

\item $il=(il_1,il_2)$
\item $IG= IG_1 \lor IG_2$
\item $\begin{array}[t]{@{}l}
     OG: (l_1, l_2)\mapsto \\
    \qquad\begin{cases}
		OG_1(l_1)
         & \text{if $OG_1  \downarrow l_1 $ and $ OG_2 \uparrow l_2$}
        
        \\
		OG_2(l_2)
        & 
        \text{if $OG_1  \uparrow l_1 $ and $ OG_2 \downarrow l_2 $}
        \\
		OG_1(l_1) \land OG_2(l_2) & 
        \text{{if $OG_1  \downarrow l_1 $ and $ OG_2 \downarrow l_2$}}
		
	\end{cases}\end{array}$
\end{itemize}
%
$\B_1\bigtriangledown B_2$ is considered to be \emph{ill-defined} if for any reachable $(l_1,l_2)\in L_1\times L_2$ there are $l_1 \trans{ \alpha, \phi_1  ,a_1 }$ and $l_2 \trans{\alpha, \phi_2,a_2 }$ such that $a_1\ncompat a_2$.
\end{definition}
In the sequel, we only consider disjunction compositions that are well-defined.

Some explanation is in order. The locations of a composed BDDTS $B_1 \bigtriangledown B_2$ are tuples $(l_1,l_2)$ of locations from $\B_1$ and $\B_2$, where either $l_i$ can be $\bot$ if the behaviour has moved outside $\B_i$. The switch relation for a location $(l_1,l_2)$ is constructed from the switches from $l_1$ (in $\B_1$) and $l_2$ (in $\B_2$), using two types of rules. Rule~1 applies if both locations have switches for the same interaction $\alpha$ and compatible assignments ($a_1\compat a_2$): such switches are synchronised while their guards are conjoined. Note that, due to saturation of the $\B_i$, if one such switch exists, then others are guaranteed to exist (for the same interaction) with complementary switch guards, causing the set of switches derived by Rule~1 to be saturated once more. Rules 2-1 and~2-2, on the other hand, deal with interactions for which either $l_1$ or $l_2$ has a switch but the other does not (which includes the case where one of the locations equals $\bot$): then the existing switch is copied from the relevant component.
As for the output guard of $(l_1,l_2)$, this is in principle the conjunction of the defined output guards of the $l_i$ according to $\B_i$.

If there is a reachable location $(l_1,l_2)$ with switches featuring the same interaction but incompatible assignments ($a_1\ncompat a_2$), this invalidates the composition. This means that $\bigtriangledown$ is only partially defined. All our results about disjunction composition are implicitly restricted to cases where it is defined.

\autoref{prop:satcomp} and \autoref{prop:disjunctioncommassoc} state expected, but essential, properties of disjunction composition.

\begin{restatable}{proposition}{saturatedcomposition} \label{prop:satcomp}
    Disjunction composition yields a saturated BDDTS.
\end{restatable}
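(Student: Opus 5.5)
The plan is to check the three clauses of \autoref{def:saturated} one by one for an arbitrary reachable location $(l_1,l_2)$ of $\B_1\bigtriangledown\B_2$, each time distinguishing whether $l_1$ and $l_2$ have outgoing $\alpha$-switches. The basic observation is this. For a fixed interaction $\alpha$, either both $l_i$ have $\alpha$-switches, and then only Rule~1 produces $\alpha$-switches; or exactly one has, and then only Rule 2-1 or 2-2 applies; or neither has, and then there are no $\alpha$-switches at all. Here $\bot$ counts as a location without switches.

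\textbf{Clause~\ref{open-sat}.} Suppose both components have $\alpha$-switches, with guards $\phi_1^1,\dots,\phi_1^n$ and $\phi_2^1,\dots,\phi_2^m$. By saturation of $\B_i$ we have $\bigvee_j\phi_i^j\equiv\True$. Well-definedness gives compatibility of every pair of assignments, so Rule~1 yields all pairwise switches. Their guards then satisfy
\[\textstyle\bigvee_{j,k}(\phi_1^j\wedge\phi_2^k)\equiv(\bigvee_j\phi_1^j)\wedge(\bigvee_k\phi_2^k)\equiv\True.\]
In the one-sided case the switches are copied verbatim, so the disjunction of guards is inherited from $\B_i$. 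Determinism should be noted on the way: conjoined guards of distinct pairs are disjoint because one of the components is.

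\textbf{Clause~\ref{closed-sat}.} A location $(l_1,l_2)$ is closed exactly when some $l_i$ is closed (and hence not $\bot$). By saturation that $l_i$ has an $\alpha$-switch for every output $\alpha$. Then either Rule~1 or Rule~2-$i$ fires, so $(l_1,l_2)$ has an $\alpha$-switch.

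\textbf{Clause~\ref{IG-sat}.} This is the main obstacle. Take an initial switch $(il_1,il_2)\trans{\alpha,\phi,a}(l_1',l_2')$.
\begin{itemize}
\item For Rule 2-$i$, $\phi=\phi_i$. Either $\phi_i\Rrightarrow\IG_i\Rrightarrow\IG_1\vee\IG_2$, or $l_i'$ is an open non-goal sink. In the latter case $(l_i',\bot)$ is an open sink with undefined output guard, since $\bot$ is open, has no switches, and carries no output guard.
\item For Rule~1 with $\phi=\phi_1\wedge\phi_2$: if either $\phi_i\Rrightarrow\IG_i$, then $\phi\Rrightarrow\IG$. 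Otherwise both $l_i'$ are open non-goal sinks, so $(l_1',l_2')$ has no switches (no rule applies), is open, and has $\OG$ undefined.
\end{itemize}

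I expect the delicate points to be getting the nature of pairs involving $\bot$ right. I also need to justify that ``reachable'' is enough, or else handle unreachable pairs, which satisfy the clauses by the same local argument anyway. Finally, I must confirm that every Rule~1 pair has compatible assignments, which is where the standing well-definedness assumption is used.
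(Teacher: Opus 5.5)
Your proposal is correct and follows the same route as the paper's proof. Like the paper, you check the saturation clauses one by one, splitting on whether a switch of the composition comes from Rule~1 or from Rule~2-$i$. Your Clause~2 argument, via ``closed iff some component is closed'', even covers the mixed closed/open case that the paper's version skips.

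The one thing to drop is your aside that unreachable pairs satisfy the clauses ``by the same local argument anyway''. Consider an unreachable $(l_1,l_2)$ whose $\alpha$-switches have incompatible assignments. Rule~1 fires only for compatible pairs, and Rule~2 does not fire at all, since both components have $\alpha$-switches. So Clause~1 can fail there when only some pairs are compatible, and Clause~2 can fail when the location is closed and every pair clashes. The correct fix is to restrict the composition to its reachable locations, which the paper also tacitly assumes.
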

\vspace*{-7pt}
\begin{restatable}{proposition}{disjunctioncommassoc} \label{prop:disjunctioncommassoc}
    Disjunction composition is commutative and associative.
\end{restatable}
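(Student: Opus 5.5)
The statement is to be read up to isomorphism of BDDTSs, with guards (and output guards, input guards) identified up to semantic equivalence $\equiv$. Literally, $(l_1,l_2)\neq(l_2,l_1)$ and $\phi_1\wedge\phi_2$ is syntactically different from $\phi_2\wedge\phi_1$. So for each claim I will exhibit a bijection $\isomap$ between location sets and show four things:
\begin{itemize}
\item it maps initial location to initial location and preserves the open/closed nature;
\item $\IG\equiv\IG'$;
\item $\OG'(\isomap(l))\equiv\OG(l)$, with definedness preserved;
\item $l\trans{\alpha,\phi,a}l'$ holds iff $\isomap(l)\trans{\alpha,\phi',a'}\isomap(l')$ with $\phi\equiv\phi'$ and $a\compat a'$ having the same domain.
\end{itemize}
I will also check that one composition is well-defined iff the other is.

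\textbf{Commutativity.} Take $\isomap(l_1,l_2)=(l_2,l_1)$, extended to $\bot$ components. Nature is preserved because $L^\Op$ is a product of open sets. The initial location maps to the initial location, $\IG_1\vee\IG_2\equiv\IG_2\vee\IG_1$, and the three cases defining $\OG$ are symmetric, with $\wedge$ commutative. For switches, Rule~1 is symmetric: $\compat$ is symmetric and $a_1\sqcup a_2=a_2\sqcup a_1$ whenever the two assignments are compatible, since they agree up to $\equiv$ on the shared domain. Rules 2-1 and 2-2 are mirror images of each other. The ill-definedness condition is also symmetric, so this part is routine.

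\textbf{Associativity.} Compare $(\B_1\bigtriangledown\B_2)\bigtriangledown\B_3$ with $\B_1\bigtriangledown(\B_2\bigtriangledown\B_3)$. The bijection is the re-bracketing $((l_1,l_2),l_3)\mapsto(l_1,(l_2,l_3))$. Care is needed because an outer $\bot$ in place of the inner composite must be identified with $(\bot,\bot)$.

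I will first prove a lemma: in any composition, $\bot$ and the pair $(\bot,\bot)$ have no outgoing switches. Hence $(\bot,\bot)$ can be substituted for $\bot$ without changing any rule premise $l\ntrans\alpha$ or conclusion. In addition, both have undefined $\OG$ and are open. With this lemma, I extend $\isomap$ by $(\bot,l_3)\mapsto(\bot,(\bot,l_3))$, and symmetrically on the other side. Since the unreachable part may contain spurious pairs, I will restrict the isomorphism to reachable locations if needed.

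The core step, and the main obstacle, is a case analysis at a triple $(l_1,l_2,l_3)$ for a fixed interaction $\alpha$, according to which subset of $\{l_1,l_2,l_3\}$ has $\alpha$-switches. There are eight cases. In each case I unfold which rules fire on each side.

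For example, suppose only $l_1$ and $l_3$ have $\alpha$-switches. On the left, Rule 2-1 fires inside, giving $(l_1',\bot)$, and then Rule~1 fires with $l_3$. On the right, Rule 2-2 fires inside, giving $(\bot,l_3')$, and then Rule~1 fires with $l_1$. Both yield the guard $\phi_1\wedge\phi_3$, the assignment $a_1\sqcup a_3$, and matching targets under $\isomap$.

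In every case, the resulting guard is the conjunction of the participating guards and the assignment is the $\sqcup$ of the participating assignments. Both operations are associative and commutative up to $\equiv$. For $\sqcup$, pairwise compatibility of $a_1,a_2,a_3$ is equivalent to compatibility of $a_1\sqcup a_2$ with $a_3$. The case where no component has an $\alpha$-switch generates nothing on either side. This pairwise-compatibility equivalence also shows that ill-definedness coincides on both sides.

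Finally, I verify that $\IG$ and $\OG$ agree. $\IG$ is an associative disjunction. For $\OG$, the three-case definition is exactly ``the conjunction of the defined components''. I will restate it in that form once, which makes associativity immediate, after checking that an inner composite's $\OG$ is defined iff some component's is.
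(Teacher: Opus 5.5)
Your proposal is correct and follows the same route as the paper. The paper also exhibits an isomorphism, given by the swap $(l_1,l_2)\mapsto(l_2,l_1)$ and by re-bracketing (it compares $\B_3\bigtriangledown(\B_1\bigtriangledown\B_2)$ with $(\B_3\bigtriangledown\B_1)\bigtriangledown\B_2$), and checks it by a case analysis on which of Rules~1, 2-1 and 2-2 fire. Your version is in fact tighter than the paper's: the paper silently identifies the outer $\bot$ with $(\bot,\bot)$, claims a bijection on the full location sets (which in general have different cardinalities, so restricting to reachable locations as you do is genuinely needed), omits the case where only the middle component has an $\alpha$-switch, and never checks that well-definedness coincides on both sides.
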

\begin{figure*}[t!]
    \centering
    \begin{subfigure}{0.48\textwidth}
\includegraphics[scale=0.55]{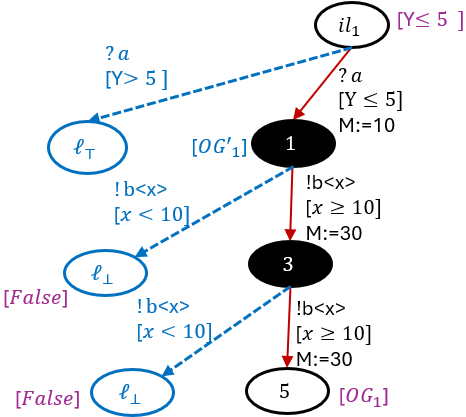}
        \caption{$B_1$}
        \label{fig:B1}
    \end{subfigure}
    \begin{subfigure}{0.48\textwidth}
        \includegraphics[scale=0.55]{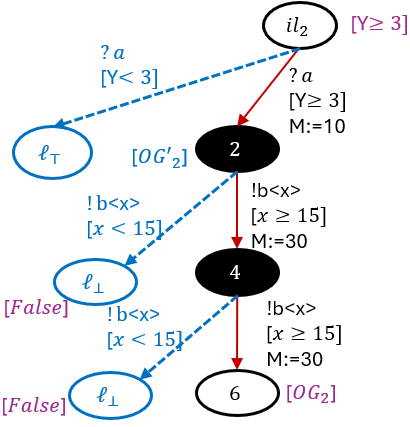}
        \caption{$B_2$}
        \label{fig:B2}
    \end{subfigure}
    \begin{subfigure}{\textwidth}
        \centering
\includegraphics [scale=0.55]{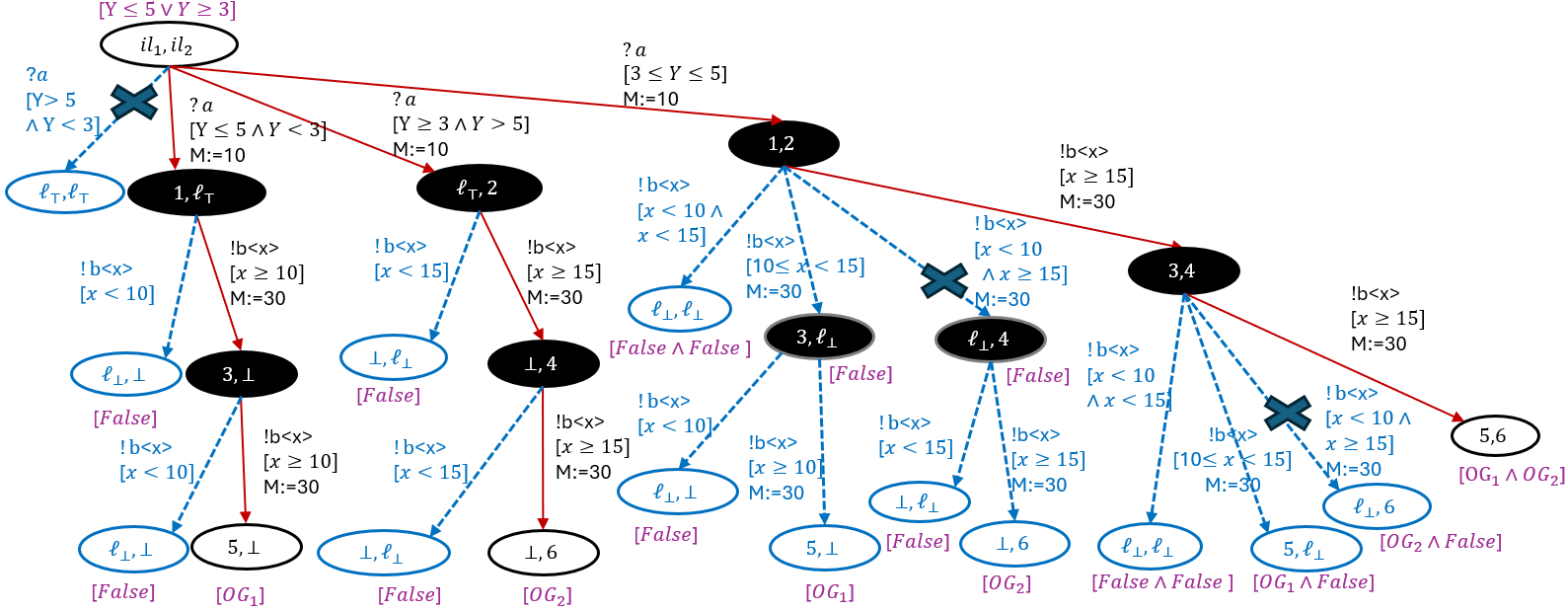}
\caption{$ B_1 \bigtriangledown B_2 $}
\label{fig:B1orB2}
    \end{subfigure}
    \caption{Two BDDTS and their disjunction composition}
    \label{fig:bddts-overlapping-guards}
\end{figure*}
\vspace*{-7pt}
\begin{example}\label{ex:2}
\autoref{fig:bddts-overlapping-guards} shows the disjunction composition of two BDDTS $B_1$ and $B_2$. 
As an example we also build the $\EC$ and $\GI$ for $\sigma=?a!b!b$ in $B_1$, 
the $\EC$ and $\GI$ for $B_2$ can be constructed similarly: 
\small{
\begin{align*}
\EC_{B_1,\ini}(\sigma) & = (\ini(Y)\leq5 \land x^{(1)}\geq 10 \land x \geq 10) \\
 & {} \lor (\ini(Y)\leq5 \land x^{(1)}\geq 10 \land x < 10),\\
\GI_{B_1,\ini}(\sigma) & = ((\ini(Y)\leq5 \land x^{(1)}\geq 10 \land x \geq 10)\Rightarrow\OG_1)  \\
 & {} \land ((\ini(Y)\leq5 \land x^{(1)}\geq 10 \land x < 10) \Rightarrow \False )
\end{align*}}
\end{example}
We can now state our first main theoretical result: the composition of two BDDTSs is testing equivalent to the two original BDDTSs. This is the justification of the construction: by using the composed $\B_1 \bigtriangledown \B_2$ rather than the individual $\B_i$, we can derive a combined test with exactly the same testing power.

\begin{restatable}{theorem}{disjunctiontraceequive} \label{thm:disjunctiontraceequive}
    Let $\B_1 $ and $\B_2$ be two saturated BDDTSs. Then it holds that: \[\{\B_1,\B_2\}\simeq \{\B_1 \bigtriangledown \B_2\} \enspace. \]
\end{restatable}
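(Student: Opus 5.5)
The plan is to prove both equivalences of \autoref{def:testing-equivalence} \emph{valuation-wise}. I fix an initialization $\ini$ and an interaction sequence $\sigma=\alpha_1\cdots\alpha_n$. I then show that for every valuation $\vu$ of the up-shifted interaction variables $\iv\attime j$:
\begin{itemize}
\item $\vu$ satisfies $\EC_{\B_1\bigtriangledown\B_2,\ini}(\sigma)$ iff it satisfies $\EC_{\B_i,\ini}(\sigma)$ for some $\B_i\in\{\B_1,\B_2\}\restr\ini$;
\item the analogous statement holds for $\GI$, with conjunction in place of disjunction.
\end{itemize}
Because the path conditions only mention $\IV$ after substituting $\ini$, this pointwise statement gives the required semantic equivalences $\equiv$.

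\textbf{Step 1 (runs in a saturated BDDTS).} First I prove a run lemma for a single saturated $\B$. By determinism (\autoref{def:BDDTS}), for fixed $\ini$ and $\vu$ there is at most one path $\pi$ from $\il$ with $\sigma_\pi$ a prefix of $\sigma$ and $\vu\models\eta_\pi^\ini$ at each step. By clause~\ref{open-sat} of \autoref{def:saturated}, this path extends by $\alpha_{k+1}$ exactly when the current location has some $\alpha_{k+1}$-switch. Hence there is a unique maximal \emph{run} of $\vu$ along $\sigma$, and $\vu\models\bigvee\{\eta_\pi^\ini\mid \sigma_\pi=\sigma\}$ iff that run consumes all of $\sigma$. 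This is proved by induction on the prefix length, using the recursive definition of $\eta$ and $a$ in \autoref{def:path-condition}.

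\textbf{Step 2 (product structure of the composition).} Next I relate paths of $\B_1\bigtriangledown\B_2$ to pairs of component runs. By induction on $k$, I show that the run of $\vu$ in the composition after $k$ steps is at location $(l_1,l_2)$, where each $l_i$ is either the location reached by the run of $\vu$ in $\B_i$ or $\bot$, and $l_i=\bot$ holds exactly when the $\B_i$-run stopped earlier. For the symbolic assignments and conditions along the composed path $\pi$ with component paths $\pi_1,\pi_2$, I show
\[
a_\pi^\ini=a_{\pi_1}^\ini\sqcup a_{\pi_2}^\ini
\quad\text{and}\quad
\eta_\pi^\ini\equiv\eta_{\pi_1}^\ini\wedge\eta_{\pi_2}^\ini,
\]
where a dropped component contributes its prefix only.
\begin{itemize}
\item Rule~1 conjoins the guards, and compatibility $a_1\compat a_2$ makes $\phi_i[a_\pi^\ini\upshift]\equiv\phi_i[a_{\pi_i}^\ini\upshift]$. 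This uses that guards only read active variables, which are assigned on both sides.
\item Rules~2-1 and~2-2 copy a single component.
\end{itemize}
Consequently the composed run consumes all of $\sigma$ iff at least one component run does. This yields the $\EC$ equality, up to the input-guard factor treated in Step 4. For $\GI$: the composed run ends in a goal location iff some surviving component run ends in a goal location, and by the definition of $\OG$ its output guard is the conjunction of the defined component guards. The implication $\eta\Rightarrow\OG[a]$ therefore holds at $\vu$ iff it holds for each $\B_i$ separately. Non-executing or non-goal runs make the corresponding implications vacuous on both sides.

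\textbf{Step 3 (reducing to pointwise statements).} The conjunction in $\GI$ ranges over all goal paths of $\sigma$. By the uniqueness in Step 1, at $\vu$ at most one conjunct has a true premise, so the conjunction reduces to the single-run statement proved in Step 2.

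\textbf{Step 4 (input guards; main obstacle).} The hard part is reconciling the input guards with the restriction $\restr\ini$. The composition has $\IG=\IG_1\vee\IG_2$, so it belongs to the restricted set whenever either component does. I must show that a component with $\sem\ini\not\models\IG_i$ contributes neither executions nor goal obligations inside the composition. Here I will use clause~\ref{IG-sat} of \autoref{def:saturated}: every initial switch of such a $\B_i$ either has guard implying $\IG_i$, so its instantiated path condition is $\False$ under $\ini$, or leads to an open non-goal sink. In the first case that component's runs die immediately. In the second case the component sits in a sink with no switches, so it behaves like $\bot$ from the next step on and never contributes an output guard.

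The delicate sub-case is $\sigma$ of length at most one, together with the sink case for $\EC$. There the composed run may survive one step only through the non-$\IG$ component. I expect to need to check carefully that the extra disjunct is matched by the corresponding behaviour of the other component, or else becomes $\False$ under the $\IG$ factor of the component's $\EC$. This is the step where I would first try to argue that $\IG_i[\ini]$, being ground, is literally $\False$. If the bookkeeping requires it, I would then handle $n\le1$ by direct inspection of the initial switches.
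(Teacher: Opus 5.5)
Your route differs from the paper's. The paper works symbolically through path subsumption (Lemmas~\ref{lemma:path-subsumption}, \ref{lemma:single-path-eq} and~\ref{lemma:equality-of-disjunctions}) and then patches input guards with Lemma~\ref{lem:IG-violated}. You instead fix a valuation, use determinism and clause~\ref{open-sat} of saturation to reduce $\EC$ and $\GI$ to a single run, and read the composition as a product of runs. That is a cleaner argument, and Steps~1--3 are sound at the paper's level of rigour. They settle the theorem whenever $\sem\ini$ satisfies both input guards or neither. Your Step~4 argument also correctly handles the excluded component for $|\sigma|\geq 2$ and for the $\GI$ clause at $|\sigma|=1$.

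The gap is the sub-case you flagged yourself, and it cannot be closed. Your claim that a component with $\sem\ini\not\models\IG_i$ contributes no executions to the composition is false at length one. That component's saturating initial switch into its open sink $\sattop$ is satisfiable under $\ini$. If the other initial location does not enable that interaction, Rule~2-2 copies the switch into $\B_1\bigtriangledown\B_2$. There the input-guard factor is $\IG_1\vee\IG_2$, which is true under $\ini$, so nothing removes the extra disjunct.

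Concretely, take one model variable $x$ and input gates $a,b$ without interaction variables. Let $\B_1,\B_2$ be the saturations of $\il_1\trans{a,\True,\epsilon}l_1$ with $\IG_1=\True$, and of $\il_2\trans{b,\True,\epsilon}l_2$ with $\IG_2=(x=0)$; all locations are open and there are no output guards.
\begin{enumerate}
\item $\B_2$ has $\il_2\trans{b,\,x=0,\,\epsilon}l_2$ and $\il_2\trans{b,\,\neg(x=0),\,\epsilon}\sattop$.
\item $\il_1$ still has no $b$-switch, because saturation only completes interactions that are already present.
\item For $\ini=\{x\mapsto1\}$ we get $\{\B_1,\B_2\}\restr\ini=\{\B_1\}$ and $\EC_{\B_1,\ini}(b)=\IG_1\wedge\bigvee\emptyset\equiv\False$.
\item Rule~2-2 gives $(\il_1,\il_2)\trans{b,\,\neg(x=0),\,\epsilon}(\bot,\sattop)$, with path condition $\neg(1=0)\equiv\True$. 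Hence $\EC_{\B_1\bigtriangledown\B_2,\ini}(b)\equiv\True$.
\end{enumerate}
The difference is visible at test level too: both models are vacuously output-rich, and $\TC(\B_1\bigtriangledown\B_2,\ini)$ passes the single input $b$ by reaching the open sink $(\bot,\sattop)$, while $\TC(\B_1,\ini)$ does not. Similarly, if $\il_2$ is a goal location whose output guard is false under $\ini$, the $\GI$ clause already fails at $\sigma=\epsilon$.

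The paper's own proof has the same hole. Case~2 of Lemma~\ref{lemma:EC-discopm-equality} rewrites both components' length-one disjunctions to $\True$, which presupposes that both initial locations enable $\alpha$. Lemma~\ref{lem:IG-violated} is stated only for non-empty $\sigma$.

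So no bookkeeping will finish Step~4: the statement needs an extra hypothesis. For example, require $\sem\ini\models\IG_1 \iff \sem\ini\models\IG_2$, or require that $\il_1,\il_2$ enable the same interactions and are not goal locations. Under either hypothesis your Steps~1--4 go through.
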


\begin{example} \label{ex:3}
Below, we show a small part of $\EC$ and $\GI$ for composition of \autoref{fig:B1orB2}. Note that for $\sigma=?a!b!b$, the $\EC$ and $\GI$ of individual BDDTSs are equivalent to the composition as stated in \autoref{thm:disjunctiontraceequive}.
\small{
\begin{align*}
\EC_{{B_1\bigtriangledown B_2},\ini}(\sigma)
    & = (\ini(Y)<3 \land x^{(1)}\geq 10 \land x \geq 10) \\
    & {} \lor (\ini(Y)<3 \land x^{(1)}\geq 10 \land x < 10) \\
    & {} \lor \ldots \lor (3\leq\ini(Y)\leq5 \land x^{(1)}\geq 15 \land x \geq 15) \\
\GI_{{B_1\bigtriangledown B_2},\ini}(\sigma) 
    & = ((\ini(Y)<3 \land x^{(1)}\geq 10 \land x \geq 10) \Rightarrow \OG_1) \\
    & {} \land ((\ini(Y)<3 \land x^{(1)}\geq 10 \land x < 10) \Rightarrow \False) \\
    & {} \land \ldots \land ((3\leq\ini(Y)\leq5 \land x^{(1)}\geq 15 \land x \geq 15) \\
    & \qquad\qquad \Rightarrow \OG_1 \land \OG_2)
\end{align*}
}
\end{example}

\section{Concrete Semantics and Test Cases}
\label{sec:testcases}
In this section we show how to construct test cases from arbitrary BDDTSs (not just saturated ones). As outlined in \autoref{fig:overall}, this is a three-step process: first we translate a BDDTS to a so-called \emph{symbolic transition system} (STS) to encode the input and output guards; then we use the standard semantics for STSs to obtain a \emph{labelled transition system} (LTS); finally, the \OC-nature of locations is used to convert this into a \emph{test case}.

\begin{definition}[STS] \label{def:STS}
A \emph{Symbolic Transition System} (STS) is a tuple $\S = \tupof{L, V, G, {\rightarrow}, \il, \ini}$, where the first five components are the same as those of a BDDTS, with the proviso that $V = \LV$ (hence $\CV=\emptyset$), and $\ini \in \TU(\emptyset)^V$ represents the initial (syntactical) assignment of the location variables.
\end{definition}
For translating BDDTSs into STSs, several approaches were informally discussed in \cite{ZameniIntLang}, using different ways to obtain the values of the context variables when the output guards are checked. One of those approaches, which adds extra check-and-retrieve switches, was worked out in \cite{Zameni-sequential}. Here we present a different approach: we \emph{assume} that the domain provides output interactions with interaction variables that retrieve the up-to-date values of the context variables when they are needed. In particular, every switch leading up to a goal location is assumed to have such special interaction variables. This enables output guards to be evaluated as part of the incoming switch guards. In addition, we assume that every transition leading to a goal location originates in a closed location. We call BDDTSs that satisfy this assumption \emph{output-rich}.

\begin{definition}[Output-rich] A BDDTS $\B$ is \emph{output-rich} if
\begin{enumerate}
\item for all $(g,\iv_0\cdots\iv_n)\in \IAU(G_o)$, there is an injective partial renaming $\mapcv^g: CV\hookrightarrow \{\iv_0,\cdots,\iv_n\}$, and
\item for all $t\in \arrow$, if $\OG\definedon\tl_t$ then $\sl_t\in L^\Cl$ and $\mapcv^{g_t}\definedon x$ for all $x\in\CV\cap \var(\OG(\tl_t))$.
\end{enumerate}
\end{definition}
Though these assumptions make our BDDTS-to-STS conversion less general than the approach of \cite{Zameni-sequential}, the assumptions are in fact fulfilled in well-known architectures such as APIs with a request-response communication pattern, like our case study (\autoref{sec:casestudy-result}). In such domains, our approach is more concise.
\begin{definition} \label{def:bddtstosts}
Let $\B$ be an output-rich BDDTS with renaming $\mapcv$, and let $\ini \in \mathcal{T(\emptyset)}^V$ such that $\sem\ini \models \IG$. 
The STS of $\B$ for $\ini$ is given by:
\begin{align*}
\ISS{\B}{\ini} &= \tupof{L, V,G , \arrow_c, il, \ini} \text{ where:}\\
\arrow_c =\ &  \{(sl_t,\alpha_t, \phi_t, a_t, tl_t) \mid t \in \arrow, \OG\undefinedon \tl_t \} \\
 \cup\ & \{ (sl_t, \alpha_t, \phi_t \wedge \OG[\mapcv^{g_t}], a_t, tl_t) \mid  t \in  \arrow,\ \OG \definedon tl_t \}
\end{align*}
\end{definition}

\begin{example}
    \autoref{fig:door-iss} shows the STS for the saturated BDDTS of \autoref{fig:door-bddts}. 
    In this example,
    we asume that ini:
        $\code{P_{badge}:=1234}$,
        $\code{A_{badge}:=[1234,5678,0666]}$,
        $\code{AccessGranted:=False}$,
        $\code{Door(id:=1,state:=CLOSED)}$
        and $\mapcv\code{(Door)=door}$
There are two locations with output guards in its BDDTS \autoref{fig:door-bddts}: $\code{\OG(2)=}$ $\code{Door.state=OPEN}$ and $\code{\OG(\ell_\bot)=False}$. In the STS they are both moved and conjoined to the switch guard and in switch with gate $\code{!trigeer_{door}}$ the context variable \code{Door} is substituted with interaction variable \code{door}. 
\end{example}
Note that in our setting, there is no need to reassign former context variables once they become model variables in the STS, since their values are already obtained through interaction variables and the corresponding model variables are not used afterwards.
\begin{figure*}[t!]
\begin{subfigure}{0.48\textwidth}
    \centering
    \includegraphics[scale=0.55]{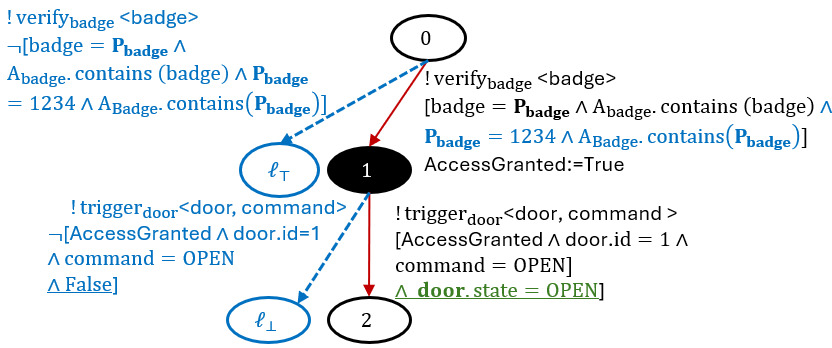}
    \caption{STS for BDDTS $\B$ of \autoref{fig:door-bddts}}
    \label{fig:door-iss}
\end{subfigure}
\qquad
    \begin{subfigure}{0.48\textwidth}
        \centering
    \includegraphics[width=\linewidth]{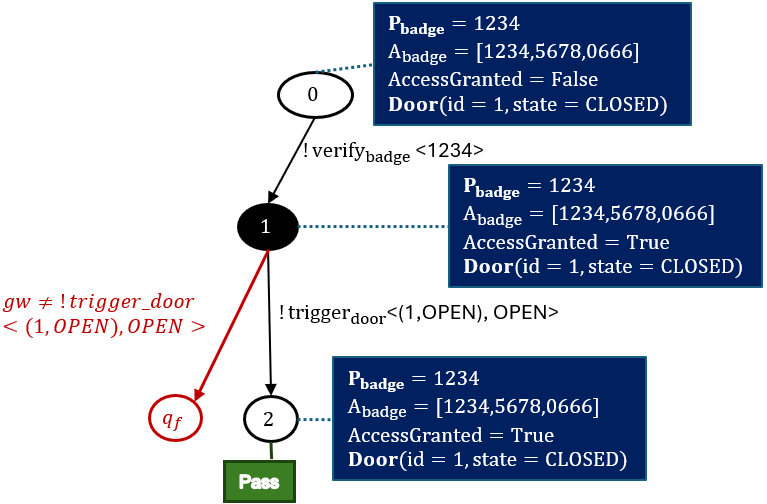}
    \caption{Test case for STS of \autoref{fig:door-iss}}
    \label{fig:door-test-case}
    \end{subfigure}
    \caption{Translation from BDDTS to STS to test case}
\end{figure*}
The next step is to translate (\OC-natured) STSs into (\OC-natured) LTSs, as in \cite{covsts}. First we define the general notion of LTS.

\begin{definition}[LTS]
A \emph{labelled transition system} (LTS) is a tuple $\L = \tupof{Q, A, \rightarrow, q_0}$, where:
\begin{itemize}
\item $Q$ is a set of states with designated initial state $q_0 \in Q$
\item $A$ is a set of labels, partitioned into input labels $A_i$ and output labels $A_o$
\item $\rightarrow\ \subseteq Q \times A \times Q$ is a transition relation.
\end{itemize}
\end{definition}
Next, we recall the STS semantics in terms of LTSs from \cite{covsts}, adding the \OC-nature.
%
\begin{definition}
\label{def:intrp}
Let $\S = \tupof{L,V,G,{\rightarrow},\il,\ini}$ be an \OC-natured STS. The \emph{semantics} of $\S$ is defined as the \OC-natured LTS  $\sem{\S} = \tupof{L\times \UU^V, \GUU,\arrow', (il,\bracket{\ini})}$ with $\N(l,\vu) = \N(l)$ for all $(l,\vu) \in L \times \UU^V$, and
\begin{align*}
\arrow' & = \{((\sl_t,\vu), u, (\tl_t, \sem{a_t \sqcup id_t}_{\vu_u\sqcup \vu})) \\
   & \qquad\mid 
   q \in L\times \UU^V, u\in \GUU, t \in \arrow, \vu_u\sqcup \vu \models \phi_t \}
\end{align*}
where $id_t(v) = v$ for all $\{v \in V \mid a_t \undefinedon v\}$ and $id_t \undefinedon v$ otherwise.
\end{definition}
In the final step, \OC-natured LTSs are converted into \emph{test cases}, defined as LTSs with designated \emph{pass} and \emph{fail} sink states.
\begin{definition}\label{def:testcase}
    A \emph{test case} is tuple $\tupof{Q,A,\arrow,q_0,P,F}$ where $\tupof{Q,A,\arrow,q_0}$ is an LTS and $P,F\subseteq Q_\sink$ are special sets of \emph{pass} and \emph{fail}  states.
\end{definition}
A test case $\TC$ is evaluated as follows: the behavior of a SuT is said to \emph{pass} $\TC$ if it leads \emph{through} a state in $P$ and to \emph{fail} it if it leads through a state in $F$. If a system neither passes nor fails (never reaching a state in $P\cup F$), the test is said to be \emph{inconclusive}. Formally:
\begin{align*}
\TC\passes \omega & \;\Leftrightarrow\; \exists \xi\preceq\omega: q_0\trans\xi P \\
\TC\fails \omega & \;\Leftrightarrow\; \exists \xi\preceq\omega: q_0\trans\xi F \enspace.
\end{align*}
To construct a test case from an \OC-natured LTS $\tupof{Q, A, \arrow, q_0}$, we introduce a fresh sink state $q_\f$, which is the only \emph{fail} state, to which we add new transitions from each closed state $q \in Q^\Cl$ for every action $a\in A_o$ that is not enabled from $q$. All open sink states in $Q$ are \emph{pass}.
As an example, \autoref{fig:door-test-case} shows the test case of BDDTS $\B$  from \autoref{fig:door-bddts}. Note that because BDDTSs are deterministic, so are the resulting STSs, LTSs and test cases.
\begin{definition} \label{def:gherkintestcase}
    Let $\L=\tupof{Q,A,\arrow,q_0}$ be an \OC-natured LTS and $q_\f\notin Q$. The test case for $\L$ is defined by $\TC(\L)=\tupof{Q\cup\setof{q_\f},A,\arrow',q_0,Q^\Op_\sink,\{q_\f\}}$ with
    \[ 
    \arrow' = \arrow 
       \cup \gensetof{(q,a,q_\f)}{q\in Q^\Cl,a\in A_o, q\ntrans a} 
    \]
\end{definition}
This chain of transformations from BDDTSs to test cases then yields:
\begin{definition}
The test case for an output-rich BDDTS $\B$ with initialisation $\ini \in \TU(\emptyset)^V$ such that $\sem\ini \models \IG$ is defined as:
\[ \TC(\B,\ini) \;=\; \TC(\sem{\ISS \B\ini}) \enspace. \]
\end{definition}
Note that, in contrast to the symbolic semantics and the disjunction composition, the test case construction does not require BDDTSs to be saturated. It is therefore important to note that saturating as in \autoref{def:saturation} does not change the fail verdicts of the derived test case.

\begin{restatable}{proposition}{propsaturated}\label{prop:saturation-testing}
If $\B$ is an output-rich BDDTS and $\ini \in \TU(\emptyset)^V$ such that $\sem\ini\models \IG$, then for any gate value sequence $\omega\in \GUU^*$
\[ \omega\fails\TC(\B,\ini) \enspace\Leftrightarrow\enspace \omega\fails\TC(\sat\B,\ini) \enspace. \]
\end{restatable}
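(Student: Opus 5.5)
The plan is to compare the two test cases directly, state by state. I will show that the LTS $\sem{\ISS{\sat\B}\ini}$ consists of $\sem{\ISS\B\ini}$ together with some extra transitions. Every extra transition leads into an open sink and fires only on gate values that the original LTS does not enable. Determinism then lets the runs of both test cases on any $\omega$ be compared step by step.

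First I check that $\TC(\sat\B,\ini)$ is well-defined, i.e.\ that $\sat\B$ is output-rich with the same renaming $\mapcv$. The only new goal location is $\satbot$. Its incoming switches all come from $\arrow_\bot$, so their sources are closed, and $\OG(\satbot)=\False$ has no context variables. Next I compare the switches of the two STSs, one switch kind at a time:
\begin{itemize}
\item \emph{$\IG$-enriched switches.} They differ from the originals only by the conjunct $\IG$, and only when leaving $\il$. The initial LTS state is $(\il,\sem\ini)$, and $\sem\ini\models\IG$ by assumption. Since $\il$ is open, I also need to argue that any later return to $\il$ yields the same enabled transitions. This is the one delicate point. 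I would resolve it by noting that $\IG$ only mentions variables in $V$. Context variables never change. Model variables may change, however, so I may instead need a lemma that the inconclusive/fail status is unaffected; if returning to $\il$ with $\IG$ false were possible, the original enables a move the saturated one sends to $\sattop$ instead. That move goes to an open pass sink, which never produces a fail.
\item \emph{Switches in $\arrow_\bot$.} They target $\satbot$ with $\OG(\satbot)=\False$. By \autoref{def:bddtstosts} their STS guard is $\phi\wedge\False$, so they generate no LTS transitions at all.
\item \emph{Switches in $\arrow_\top$.} They lead to $\sattop$, an open sink without output guard, so every LTS state $(\sattop,\vu)$ is an open sink and hence a pass state. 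Their guard is $\neg\bigvee\Phi_{l,\alpha}$, so they fire exactly on gate values $u$ with gate $g_\alpha$ that enable no original (enriched) switch from $l$.
\end{itemize}

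With this in hand, the fail-adding step of \autoref{def:gherkintestcase} acts identically on both LTSs. Take a closed state $(l,\vu)$ and an output gate value $u$. If $l$ is closed and $u$ is an output, then $\arrow_\top$ contributes nothing there, because $\arrow_\top$ requires $l$ open or an input gate. Hence $u$ is enabled in the saturated LTS iff it is enabled in the original, and $u$ leads to $q_\f$ in one test case iff it does in the other.

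Finally I argue by induction on prefixes $\xi\preceq\omega$, using determinism of both test cases. Their unique runs coincide as long as the original run can move. At the first point where they diverge, the original has no transition, so its run stops and can never reach $q_\f$ afterwards. There the saturated run enters some $(\sattop,\vu)$, a sink, which also never reaches $q_\f$. Thus a prefix reaches $q_\f$ in one test case iff it does in the other, which gives the claimed equivalence. I expect the main obstacle to be the bookkeeping around $\IG$ at $\il$, i.e.\ showing that the enriched guards never remove a transition that could lead to a fail verdict. The $\arrow_\top$ and $\arrow_\bot$ cases are routine.
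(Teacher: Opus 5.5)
Your decomposition is essentially the paper's. Saturation only adds switches, and the $\IG$ conjunct on initial switches is harmless at $(\il,\sem\ini)$. Switches into $\satbot$ become unsatisfiable in the STS because $\sat\OG(\satbot)=\False$. Switches into $\sattop$ end in open sinks and never leave a closed location on an output gate, so the fail transitions of \autoref{def:gherkintestcase} coincide. Your lockstep run comparison repackages the paper's two run-transfer directions. Checking that $\sat\B$ is again output-rich is a useful detail the paper skips.

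The gap is the point you flag yourself, and the fix you sketch does not close it. Suppose a run re-enters $\il$ in a valuation where $\IG$ is false. The original STS can still take a switch $t$ whose guard $\phi_t$ holds. In $\sat\B$, however, the guard $\IG\wedge\phi_t$ fails and the $\arrow_\top$-switch fires instead. That the saturated run then sits in a pass sink only shows that the saturated test does not fail. The original run carries on and may fail later, so the direction $\Rightarrow$ breaks. Your claim that at the first divergence the original has no transition breaks with it.

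No lemma can repair this. Let $\il=0$ have switches $(0,?a,\True,\{x\mapsto 1\},0)$ and $(0,?b,\True,\epsilon,1)$, and let $1$ be closed without outgoing switches. Let $d$ be an output gate, and take $\IG=(x=0)$, $\ini(x)=0$ and no output guards. Then $?a\,?b\,!d$ fails $\TC(\B,\ini)$, since $!d$ is not enabled in the closed state reached after $?a\,?b$. In $\TC(\sat\B,\ini)$, however, the step $?b$ is taken by $(0,?b,\neg(x=0),\epsilon,\sattop)$ into a pass sink, so no prefix reaches $q_\f$.

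The statement therefore needs an extra hypothesis. One option is that no switch targets $\il$, which holds for BDDTSs obtained from single scenarios. Another is that $\IG$ holds in every reachable state at $\il$. The paper's sketch uses this silently when it says the added $\IG$ ``does not affect the outcome since $\IG$ is known to be satisfied by $\ini$''. With such a hypothesis made explicit, your argument is complete. The direction $\Leftarrow$ holds even without it: a failing saturated run never enters $\sattop$, each $\IG\wedge\phi_t$ step implies the original $\phi_t$ step, and the final closed location differs from $\il$.
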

\noindent
Unfortunately, \autoref{prop:saturation-testing} does not extend to pass verdicts: it turns out that saturation may in fact change the pass verdicts of the derived tests, either by turning a pass verdict into inconclusive or the other way around. This is a shortcoming that we plan to address in the future.

\medskip\noindent
Our second main result, \autoref{thm:sym-concrete-relation}, relates the symbolic semantics of BDDTSs (\autoref{def:goal-implication}) to its concrete semantics in terms of test cases (\autoref{def:gherkintestcase}).

To state this formally, we need some additional notation. For a valuation $\vu\in \UU^\IV_\bot$, with $\upshift\vu\in \UU^{\IV\upshift}_\bot$ we denote a variant that applies $\vu$ to an upshifted domain. This is formally defined by:
\[ \upshift\vu: x\mapsto \vu(y) \quad \text{if } x=y\upshift \enspace. \]
(Note the difference with $a\upshift$, in which the upshift is applied to the \emph{result} of $a$.) A gate value sequence $\omega$ then gives rise to a valuation $\vu_\omega$ for interaction variables that are increasingly upshifted for ``older'' interactions, inductively defined by
\[ \vu_\epsilon = \emptyset \qquad \vu_{\omega\,u} = \upshift\vu_\omega\sqcup \vu_u \enspace. \]
To also evaluate goal implications, we further extend $\vu_\omega$ to context variables by using the correspondence defined by $\mapcv$ ($g$ is the output gate of last switch):
\[ \hat\vu_\omega = \vu_\omega \sqcup (\vu_\omega\circ \mapcv^{g}) \enspace. \]
\autoref{thm:sym-concrete-relation} characterizes when a gate value sequence $\omega$ passes or fails a test case derived from a BDDTS $\B$, in terms of $\B$’s execution conditions and goal implications (\autoref{def:goal-implication}). Intuitively, a gate value sequence passes the test case if and only if the execution conditions and goal implications hold for all prefixes of the sequence, up to the point where the test case can no longer proceed—because no continuation satisfies the execution conditions. In this sense, $\omega$ represents a “complete” run of the test. On the other hand, $\omega$ fails the test case if there is a prefix of $\omega$ so that its execution conditions hold, i.e. the path in $\B$ can be taken, but the goal implication at the end is not satisfied.
\begin{restatable}[Correctness of Symbolic Semantics]{theorem}{symbolicvsconcrete}
\label{thm:sym-concrete-relation}
Let $\B$ be a saturated output-rich BDDTS and let $\ini \in \TU(\emptyset)^V$ such that $\sem\ini\models \IG_\B$. For any gate value sequence  $\omega \in \GUU^*$, the following holds:
\begin{enumerate}[topsep=\smallskipamount]
\item $\omega\passes\TC(\B, \ini)$ if and only if there is a $\xi\preceq \omega$ such that $\vu_\xi\models \EC_{\B,\ini}(\sigma_\xi)$ and
\begin{enumerate}
\item for all $\bar\xi\preceq \xi$, $\hat\vu_{\bar\xi}\models \GI_{\B,\ini}(\sigma_{\bar\xi})$ and
\item for all $u\in \GUU$, $\vu_{\xi u}\not\models \EC_{\B,\ini}(\sigma_{\xi u})$.
\end{enumerate}

\item $\omega\fails\TC(\B, \ini)$ if and only if
there is a $\xi\preceq \omega$ such that $\vu_\xi\models \EC_{\B,\ini}(\sigma_\xi)$ and $\hat\vu_\xi \not\models\GI_{\B,\ini}(\sigma_{\xi})$.
\end{enumerate}
\end{restatable}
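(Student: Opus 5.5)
The plan is to connect paths in the test case $\TC(\B,\ini)$ with paths $\pi$ in $\B$ through one central lemma, and then read off both verdict characterizations.

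\textbf{Key lemma.} Fix a gate value sequence $\xi$. I would prove, by induction on $|\xi|$, that
\[ (\il,\sem\ini)\trans{\xi}(l,\vu) \text{ in } \sem{\ISS\B\ini} \]
holds if and only if there is a path $\il\trans\pi l$ in $\B$ with $\sigma_\pi=\sigma_\xi$ such that:
\begin{itemize}
\item $\vu_\xi\models\eta_\pi^\ini$;
\item every goal location $l'$ visited along $\pi$, reached by a prefix $\pi'$ corresponding to a prefix $\xi'\preceq\xi$, satisfies $\hat\vu_{\xi'}\models\OG(l')[a_{\pi'}^\ini]$;
\item $\vu=\sem{a_\pi^\ini}_{\vu_\xi}$ on $\LV$.
\end{itemize}

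The inductive step unfolds \autoref{def:intrp} and \autoref{def:path-condition}. The update $a[a_\pi^\ini\upshift]$ together with the upshifted valuation $\upshift\vu_\xi\sqcup\vu_u$ corresponds exactly to evaluating $\phi_t$ under $\vu_u\sqcup\vu$, by a substitution lemma $\sem{e[a_\pi^\ini\upshift]}_{\vu_{\xi u}}=\sem e_{\vu_u\sqcup\sem{a_\pi^\ini}_{\vu_\xi}}$. The output-guard conjunct $\OG[\mapcv^{g_t}]$ in \autoref{def:bddtstosts} is handled by the extension $\hat\vu$, using output-richness to rename context variables into interaction variables of the last gate.

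Determinism of $\B$ makes $\pi$ unique for a given $\xi$ once guards are satisfied. Hence $\EC_{\B,\ini}(\sigma_\xi)$, a disjunction over paths, is satisfied by $\vu_\xi$ iff $\xi$ is executable in $\B$ ignoring output guards; the $\IG$ conjunct holds because $\sem\ini\models\IG$. Likewise, $\GI_{\B,\ini}(\sigma_\xi)$ reduces to the single implication for the unique path whose $\eta$ holds, since the other conjuncts are vacuously true.

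\textbf{Fail case.} By \autoref{def:gherkintestcase}, the only fail state is $q_\f$, reached from a closed state $(l,\vu)$ by an output $u$ that is not enabled. Because $\B$ is saturated (\autoref{def:saturated}, clauses \ref{open-sat} and \ref{closed-sat}), every output interaction from a closed location has switches whose guards cover everything. So $u$ is disabled in the STS only because the conjoined output guard fails on the target, including the $\satbot$ case where the guard is $\False$. Taking $\xi$ to be the prefix ending in $u$, this situation is exactly $\vu_\xi\models\EC(\sigma_\xi)$ with $\hat\vu_\xi\not\models\GI(\sigma_\xi)$. I would argue the converse by the same reasoning. Output-richness guarantees that switches into goal locations start from closed locations, so a violated goal really produces $q_\f$ rather than mere disablement from an open state.

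\textbf{Pass case.} Pass means reaching an open sink. By the lemma, the prefix $\xi$ that reaches it is executable, which gives $\EC$, and all earlier goals held, which gives (a). Sinkness of the location plus saturation means no continuation $u$ satisfies $\EC(\sigma_{\xi u})$, which gives (b); this direction needs saturation clause \ref{open-sat} so that "no switch enabled" coincides with "no switch exists". Openness follows because a closed location in a saturated BDDTS always has output switches, so it cannot be a sink.

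\textbf{Main obstacle.} I expect the bookkeeping in the key lemma to be the hardest part: the upshift and substitution commutation, together with the interplay between $\mapcv$, context variables frozen at $\ini$ in switch guards, and the fresh values in output guards. I would isolate the substitution identity as a separate lemma proved by induction on $\pi$ before the main induction.
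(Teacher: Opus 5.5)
Your route matches the paper's. Your key lemma fuses \autoref{prop:eta-vs-vu} and \autoref{lem:eta-B-to-S} into one induction. Your determinism argument makes explicit what \autoref{lem:omega-vs-EC} uses silently, namely that all other conjuncts of $\GI_{\B,\ini}(\sigma_\xi)$ are vacuously true.

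There is, however, a gap in the \emph{only if} direction of the pass clause, where you derive condition (b). That $q=(l,\vu)$ is a sink of $\sem{\ISS\B\ini}$ only says that no \emph{STS} switch from $l$ is enabled. STS switches into goal locations carry the extra conjunct $\OG(\tl_t)[\mapcv^{g_t}]$. Saturation clause~\ref{open-sat} speaks only about the BDDTS guards $\phi_t$. So it cannot exclude a BDDTS switch from $l$ whose guard holds, making $\vu_{\xi u}\models\EC_{\B,\ini}(\sigma_{\xi u})$ true, but which is blocked in the STS by a false output guard.

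The missing step is this: $q$ is open, and by output-richness every switch into a goal location starts in a closed location. Hence the STS guards from $l$ coincide with the BDDTS guards. Only then does clause~\ref{open-sat} give that $l$ has no switches at all, hence (b).

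This is not cosmetic. Take a saturated but non-output-rich BDDTS with a reachable open location $l$ whose only switch is $l\trans{\alpha,\True,\epsilon}l'$ with $\OG(l')=\False$. The state $(l,\vu)$ is an open LTS sink, so the test passes, while (b) fails at every prefix. You invoke output-richness for exactly this reason in the fail case, but in the pass case you attribute the step to clause~\ref{open-sat}. The paper's proof makes the right argument: the only possible violation is at $\xi u$, which would require an output switch into a goal location from $q$'s location, conflicting with $q\in Q^\Op$ by output-richness. The same conflation appears in your \emph{if} direction. There a closed $q$ must be refuted via (b), not via sinkness, since a closed location can be an LTS sink when all its enabled switches are blocked by output guards.

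A smaller omission concerns the \emph{if} direction of the fail clause. ``The same reasoning'' cannot be applied to $\xi$ directly, because your key lemma reaches the state before the last step only if all earlier goals hold. You must first pass to the shortest prefix $\xi'\preceq\xi$ with $\hat\vu_{\xi'}\not\models\GI_{\B,\ini}(\sigma_{\xi'})$, as the paper does. That prefix still satisfies $\EC$ by prefix closure (\autoref{theta-prefix}); your ``executable ignoring output guards'' reading gives this for free.
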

\noindent
The following encapsulates the last main theoretical contribution of this paper. It follows directly from \autoref{thm:sym-concrete-relation} and the definition of $\simeq$ (\autoref{def:testing-equivalence}).
\begin{restatable}[Testing equivalence preserves test verdicts]{corollary}{corollarytestingequiv}\label{corr:testing-equivalence}
Let $\BB_1,\BB_2$ be sets of saturated output-rich BDDTSs, and let $\ini \in \TU(\emptyset)^V$. If $\BB_1 \simeq \BB_2$, then for all value sequences $\omega\in \GUU^*$:
\begin{align*}
\lefteqn{\exists \B\in \BB_1\restr\ini:\TC(\B,\ini) \fails \omega}\qquad\qquad \\
   & \enspace\Leftrightarrow\enspace \exists \B\in \BB_2\restr\ini:\TC(\B,\ini) \fails\omega \\
\lefteqn{\forall \B\in \BB_1\restr\ini: \TC(\B,\ini) \passes\omega }\qquad\qquad \\
   & \enspace\Leftrightarrow\enspace \forall \B\in \BB_2\restr\ini:\TC(\B,\ini) \passes\omega  \enspace.
\end{align*}
\end{restatable}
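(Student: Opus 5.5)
The plan is to reduce the corollary pointwise to \autoref{thm:sym-concrete-relation}: for each $\B$ in $\BB_1\restr\ini$ or $\BB_2\restr\ini$, the theorem characterises fail and pass verdicts of $\TC(\B,\ini)$ on $\omega$ purely in terms of the truth values $\vu_\xi\models\EC_{\B,\ini}(\sigma_\xi)$ and $\hat\vu_\xi\models\GI_{\B,\ini}(\sigma_\xi)$ for prefixes $\xi\preceq\omega$. The hypothesis $\BB_1\simeq\BB_2$ gives semantic equivalence of $\bigvee_{\B}\EC_{\B,\ini}(\sigma)$ and of $\bigwedge_{\B}\GI_{\B,\ini}(\sigma)$ over the two restricted sets, for every $\sigma$. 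Hence it suffices to rewrite the left-hand side of each claimed equivalence as a formula built only from these disjunctions and conjunctions, evaluated under $\vu_\xi$ and $\hat\vu_\xi$. That formula is then literally the same for $\BB_1$ and $\BB_2$.

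For the fail part, $\exists\B\in\BB_1\restr\ini:\TC(\B,\ini)\fails\omega$ unfolds, by item~2 of the theorem, to $\exists\xi\preceq\omega\,\exists\B:\vu_\xi\models\EC_\B(\sigma_\xi)\wedge\hat\vu_\xi\not\models\GI_\B(\sigma_\xi)$. The aim is to show this equals $\exists\xi:\vu_\xi\models\bigvee_\B\EC_\B\wedge\hat\vu_\xi\not\models\bigwedge_\B\GI_\B$. The direction from left to right is immediate. For the converse one has to find a \emph{single} $\B$ witnessing both conjuncts. Here I would use the shape of $\GI$: each conjunct is $\eta_\pi^\ini\Rightarrow\OG(l)[a_\pi^\ini]$. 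So if $\GI_\B$ fails under $\hat\vu_\xi$, some path condition $\eta_\pi^\ini$ of $\B$ holds, and then $\EC_\B(\sigma_\xi)$ holds in that same $\B$, because $\B\in\BB\restr\ini$ gives $\IG$ true and $\eta_\pi$ is a disjunct of $\EC_\B$. Thus $\hat\vu_\xi\not\models\bigwedge\GI$ alone already implies the existential failure condition. This needs the observation that $\hat\vu_\xi$ and $\vu_\xi$ agree on the interaction variables occurring in $\eta_\pi$. The fail equivalence then follows from $\simeq$.

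For the pass part, $\forall\B:\TC(\B,\ini)\passes\omega$ uses item~1, and this is where I expect the main obstacle. The witness prefix $\xi$ may differ between the $\B$'s, so a universal over $\B$ does not obviously commute with the existential over $\xi$ and the maximality clause~(b). I would first try to argue that the relevant $\xi$ is canonical: it is the longest prefix of $\omega$ that is executable. By determinism, each $\EC_\B$ is prefix-closed along $\omega$, since a path condition implies those of its prefixes after upshifting. If that fails, I would restate clause~(b) via the disjunction $\bigvee_\B\EC_\B$ and accept an argument componentwise in $\B$. Once the canonical $\xi$ is in hand, clause~(a) becomes $\hat\vu_{\bar\xi}\models\bigwedge_\B\GI_\B(\sigma_{\bar\xi})$. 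Clause~(b) and the executability conditions become statements about $\bigvee_\B\EC_\B$. These expressions are invariant under $\simeq$, which yields the pass equivalence.
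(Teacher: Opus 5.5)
Your fail argument is correct, and tighter than the paper's. The paper extracts from $\simeq$ one member of $\BB_2\restr\ini$ satisfying $\EC$ and, separately, one violating $\GI$, without ensuring they are the same BDDTS. Your observation closes this: a violated conjunct $\eta^\ini_\pi\Rightarrow\OG(l)[a^\ini_\pi]$ of $\GI_{\B,\ini}(\sigma_\xi)$ already forces $\vu_\xi\models\EC_{\B,\ini}(\sigma_\xi)$ for the same $\B$. So ``some member fails $\omega$'' reduces to $\exists\xi\preceq\omega:\hat\vu_\xi\not\models\bigwedge_{\B}\GI_{\B,\ini}(\sigma_\xi)$, which is visibly $\simeq$-invariant.

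The pass part breaks exactly at the step you flagged, and it cannot be repaired. By prefix-closure, a member $\B$ passes $\omega$ only if its \emph{own} longest executable prefix $\xi_\B$ of $\omega$ is dead in $\B$, i.e.\ $\vu_{\xi_\B u}\not\models\EC_{\B,\ini}(\sigma_{\xi_\B u})$ for all $u$. If another member executes $\omega$ beyond $\xi_\B$, then $\bigvee_{\B}\EC$ is true along that continuation. It therefore cannot distinguish ``$\B$ is dead at $\xi_\B$'' from ``$\B$ lacks only the next interaction of $\omega$ but can still do others''; the latter is the inconclusive case where $\omega$ leaves $\B$'s model. Your reduction holds in one direction only: if all members pass, your $\simeq$-invariant condition holds at the canonical $\xi$, but not conversely.

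In fact the second equivalence of the statement is false. Take $\B_1$ with switches $\il_1\xrightarrow{a}m_1\xrightarrow{d}s_1$ and $\B_2$ with $\il_2\xrightarrow{a}m_2\xrightarrow{c}s_2$. Make all locations open, all guards and input guards $\True$, and add no output guards; take $c,d$ to be output gates, so the choice is the SuT's. Both are saturated and output-rich.

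All goal implications are empty conjunctions. $\bigvee\EC$ is $\True$ exactly on $\epsilon, a, ad, ac$, both for $\{\B_1,\B_2\}$ and for $\{\B_1\bigtriangledown\B_2\}$. So these sets are testing equivalent, as Theorem~\ref{thm:disjunctiontraceequive} also predicts.

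Now take $\omega=ac$. The composition reaches the open sink $(\bot,s_2)$ and passes, and so does $\B_2$. But $\TC(\B_1,\ini)$ is stuck in the open non-sink location $m_1$ and is inconclusive. Meanwhile your condition holds for $\{\B_1,\B_2\}$ at $\xi=ac$.

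The paper's own contrapositive proof has the same hole. Its case split ``$\B_2$ fails $\omega$, or some $\xi\succ\omega$ is executable in $\B_2$'' omits precisely this case.

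What your argument does establish is a one-directional fact. For a singleton, your condition coincides with passing (item~1 of Theorem~\ref{thm:sym-concrete-relation}). Hence if every member of $\BB_1\restr\ini$ passes $\omega$ and $\BB_2\restr\ini=\{\B\}$, then $\B$ passes $\omega$. For instance, if both components pass, so does their composition.
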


\section{Case study: train information boards}
\label{sec:casestudy-result}
\begin{figure*}[t!]
     \centering
     \includegraphics[scale=0.8]{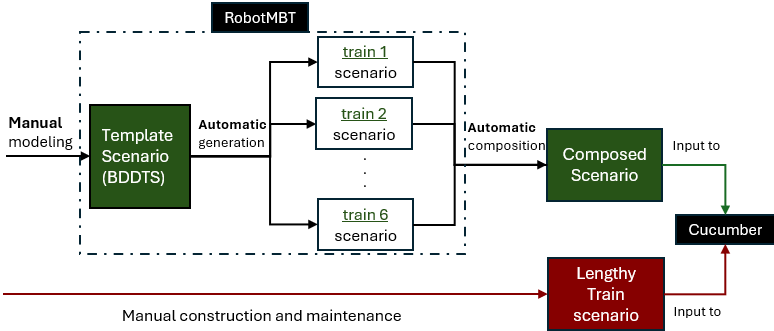}
     \caption{Compositional MBT+BDD approach vs. the NS BDD approach}
     \label{fig:cs-overview}
 \end{figure*}
We applied our theory of disjunction composition to an industrial example from NS, the Dutch Railways (Nederlandse Spoorwegen), with the aim of demonstrating that the disjunction composition process can be automated. We note that this case study was of exploratory nature, and hence not meant as a comprehensive evaluation. 

Our case study is a subsystem from NS, responsible for the information displayed on boards regarding train arrivals and departures.
NS uses BDD scenarios to verify that the system can accurately retrieve and present the stored information from a database on the boards. 
Various combinations of train arrivals and departures can occur at a platform, making it essential to test these combinations thoroughly. To test this level of complexity, NS wrote some rather lengthy scenarios, in the format of \autoref{fig:ns-scenario}.
In particular, the \step{Given} step is quite lengthy, because specific trains need to be each spelled out. With the current manual scenarios, always trains with the same station, departure time, etc. are used. 

\begin{figure}[t!]
\begin{subfigure}{0.62\textwidth}
\begin{scenarios}
    \item \label{scenario:big} Requesting departing trains 
    \begin{steps}
        \Given a departing train at station "AMF" with rideId 1
\And the train has a planned departure time of 08:00
\And the train has a planned departure on platform 1
\And the train has current departure platform 1
\Given $\ldots$ (continues with X other trains)

\When I request the departing trains with station "AMF" and platform "1"
\Then I get a 200
\And I get X+1 trains back
\And the 1st departing train contains departure time "08:00"
\And ...
\When ...(requesting all platforms)
\Then ...(the number of trains back and ordered by time)
    \end{steps}
\end{scenarios}
\caption{Format of manual NS BDD scenarios for several trains}
\label{fig:ns-scenario}
\end{subfigure}
\quad
\begin{subfigure}{\columnwidth}
 \includegraphics[scale=1]{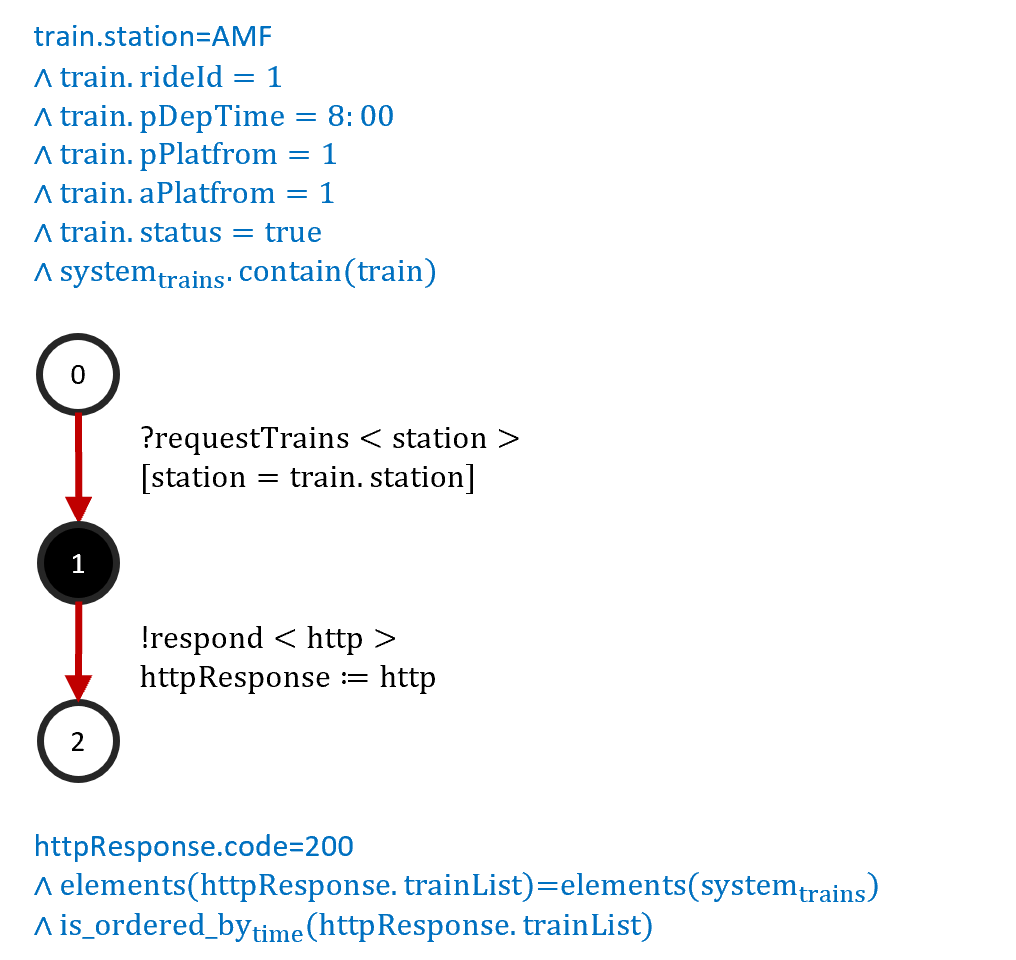}
\caption{BDDTS for one train}
\label{fig:BDDTS-one-train}
\end{subfigure}
\caption{Format of manual BDD scenario and constructed BDDTS}
\label{fig:ns-case-study}
\end{figure}
Therefore there are two issues to address: data diversity and scenario length and complexity. We aim to increase data diversity, to prevent testing the system repeatedly with the same values, e.g. departure time 8:00 and platform “AMF”. We aim to reduce scenario length and complexity of the combinations, to make it easier to read, manage, and update the scenarios, also when requirements evolve.

We address these issues with our approach, as depicted in green in \autoref{fig:cs-overview}; it also shows the NS approach in red.
We first construct a template BDDTS, namely a BDDTS that is an abstract representation of the original scenario. It models the request and response of the current trains on a station, given one train is stored for that station, see \autoref{fig:BDDTS-one-train}. In particular, it models that when the system is requested to return the trains it has stored, the resulting list should contain exactly those trains, ordered by their departure times. In the original BDD scenario (see \autoref{fig:ns-scenario}), this behavior is checked concretely, for example by asserting that a train’s departure time equals a specific value, such as “8:00.”

Next, we create copies of this template BDDTS, where we rename the \code{train} variable to a unique name (train1, train2 etc.) for each copy. Then we apply disjunction composition on those copies to obtain a BDDTS that represents the request and response for multiple trains. Due to the symbolic nature of a BDDTS, we can generate many variations for data values, i.e. for train id, departure time, and platform.

Since no available tool directly implements disjunction composition, we combined an existing library, RobotMBT, and a custom developed script. To be able to use the RobotMBT library as-is, we swapped the data generation and composition: first using RobotMBT, we generated BDD scenarios with concrete data values from our template BDDTS; secondly, we applied disjunction composition on those BDD scenarios with the script. Each of the two is explained in more detail below.

For generating test cases with varied data values we used the open-source Robot Framework library RobotframeworkMBT (introduced in \cite{artifact}). RobotMBT extends the open-source Robot Framework \cite{Robotframework} with model-based testing capabilities. Robot Framework uses keyword-driven testing and supports BDD-style Given–When–Then scenarios, where each step is implemented as a keyword. We expressed our BDDTS of \autoref{fig:BDDTS-one-train} in RobotMBT, by using the modeling functionality of RobotMBT. Then we defined ranges for symbolic values (e.g., range(1,6) for train identifiers, and similar for departure times or platforms) using the tool’s modifier functionality. RobotMBT then automatically generates multiple concrete scenarios from these ranges. Hence, these scenarios represent the copies of the template BDDTS with chosen data values.

We partially implemented disjunction composition, for the BDDTS with data values, by developing a script for extracting  only the `conjunction trace', i.e. the right most branch of the disjunction composition that conjoins the guards of all \code{?requestTrains} switches. This way we obtain the most interesting test case of the disjunction composition, namely the one that combines all trains in a single scenario, which is comparable to the original NS scenario of \autoref{fig:ns-scenario}. We furthermore did some minor conversions, so that the resulting test case had a suitable format for Cucumber, so that it could be executed in that tool, just like the original NS scenario.

We generated test cases with 6 trains, i.e. the number of trains of the NS scenario. Though our data generation was random, we note that it could also generate the values occurring in the original NS scenario.
We executed the generated Cucumber tests 
with different trains until we found a bug. 
The bug was that
the last train was expected to have a departure time of 9:25, while actually a train with departure time 9:10 was stored. With careful debugging, we confirmed that the correct time was indeed saved in the system; instead the testing environment in which the system was running provided the wrong time.

From these exploratory experiments we make the following conclusions. In general, we demonstrated that we were able to automate our approach with disjunction composition. We did so, by building a prototype implementation using existing tooling. With this setup, we were able to generate test cases that combine the request-response commands for information display of any number of trains, instead of the fixed number of trains of the original NS scenario. Also we provided variation of the data values used in the test cases, instead of using the fixed values from the original NS scenario. The result was the discovery of a bug, one that had not been discovered with existing testing methods before. Additionally we note that our template BDDTS is also a lot smaller than the original NS scenario, which would facilitate easier maintenance and updates.

\section{Related Work}
\label{sec:relatedwork}

Below we discuss related work on formal modelling and composition of BDD scenarios, and on composition operators for other transition systems.

\paragraph{Modeling of BDD scenarios.}
De Biase et al. \cite{BDD-to-SysML-Biase2024} generate SysML state machines from gherkin requirements, and Wiecher et al. \cite{SCLL-Wiecher} provide a language in Kotlin for writing BDD scenarios for automotive system validation. Though clearly having the same aim in mind, a key difference is that these works do not employ formal composition. 

\paragraph{Composition of BDD scenarios.}
The authors of \cite{cppBDD} propose representing BDD scenarios for Cyber-Physical Production Systems as a specific kind of state machine and composing those that react to the same event.
Kang et al. translate BDD scenarios into Timed Automata, allowing for manual composition where scenarios share \step{When} steps \cite{TAforBDD}. 
Our disjunction composition also composes shared \step{When} steps. However, both works do not address conflicting \step{Then} steps, like we do.

Other composition operators for BDD scenarios, orthogonal to our paper, are discussed in the following works.
In \cite{DSLforBDD}, a domain-specific language (DSL) is used to abstract from BDD steps. The DSL allows finer-grained scenarios to be embedded within each step, effectively enabling scenario linkage at different abstraction levels.
Formal sequential composition of BDD scenarios is introduced in \cite{Zameni-sequential}. This composition enables subsequent testing of several BDD scenarios.
The SkyFire tool automatically generates BDD scenarios from UML diagrams \cite{Skyfire}. The resulting scenarios tend to be long and are connected in sequence, reflecting the structure of the source UML models. Test cases are then derived using the Cucumber framework. 

\paragraph{Composition operators for transition systems.}
Various composition operators have been studied for Labelled Transition Systems (LTS). In \cite{henzinger-composition}, conjunction, parallel composition, and quotient operators are discussed, with the parallel composition later refined in \cite{henzinger-specifications}. Also, \cite{Janssen-Composition} presents a conjunction operator for LTS
, capturing intersection over outputs. We apply the same idea defined differently via saturation and disjunction, and on BDDTS which is a symbolic model.

 Parallel composition has been studied in the context of ioco-theory in \cite{Bijl-Rensink-Tretmans}. In \cite{Cuyuck-Tretmans} and \cite{gijs-lars-Jan} a notion of mutual acceptance is introduced, such that components can be tested instead of their parallel composition. While in our work the same actions may synchronize, as in parallel composition, we do not consider interleaving; this is an orthogonal concept. Additionally, our BDDTS describe behaviours, i.e. functionality, which may be unrelated to (component) structure of the tested system.

The transformation of STS into LTS is not a core contribution of this paper; instead, we rely on the concept of path conditions from \cite{covsts}. It should be noted that there are alternatives such as \cite{symbolic-execution-explained} that could have been used for this purpose.

Finally, we should acknowledge the huge body of work on composition of LTSs --- essentially, the field of Process Algebra \cite{handbook-process-algebra}. Though this is clearly the source of inspiration for our work, it is by lifting the concepts to the symbolic level where the contribution of this paper lies.

\section{Conclusion and Future Work}
\label{sec:conclusion}

In this work, we present  the
disjunction composition operator for BDD Transition Systems (BDDTS)—formal models of BDD scenarios. This operator enables the combination of multiple BDDTSs 
so that tests can be derived from this integrated model, instead of only testing the scenarios in isolation.
We define a symbolic semantics for BDDTS, to show correspondence between two BDDTS and their composition. We also show the relation between this symbolic semantics and the concrete semantics in terms of test cases.

In the scope of this paper, we have ignored certain aspects of testing such as the analysis of the (root) cause of a failure. Though a failed test can clearly be traced back to the BDDTS path, this is not enough when the BDDTS itself is obtained through (disjunction) composition: one would also want to trace it back to the original BDD scenario that was not fulfilled. However, we expect no particular difficulties here, as every composed BDDTS switch is derived from the component switches through the derivation rules in \autoref{def:disjunction}. It is straightforward to build this dependency into the tooling.

For future work, we first aim to address the fact that saturation can change pass verdicts, and investigate techniques to preserve them. We would also like to develop more composition operators for BDDTS, e.g. for loops and parallel composition. 
Furthermore, we would like to generalize our symbolic semantics to more generic models than BDDTS.
Finally, we aim to do more elaborate case studies, and also further integrate the disjunction composition in MBT tooling.



\bibliography{lib}

\appendix
\section{Proofs of \autoref{sec:disjunction}}
\saturatedcomposition*
\begin{proof}
    Let $\B_i=\bddtstupleind{i}$ be compatible saturated BDDTSs for $i=1,2$ and let $\B=\B_1 \bigtriangledown \B_2=\bddtstuple$ be their composition.
Also, let $(l_1,l_2)\in L$ be an arbitrary location of $\discomp$ and $\alpha \in \IAU(G)$ an interaction. $T=\{t \mid sl_{t} =(l_1,l_2), \alpha_{t} =\alpha\} \subseteq \arrow$ denotes the set of all outgoing switches from $(l_1,l_2)$ with interaction $\alpha$.

\medskip\noindent
\textbf{First condition}: We show that $\bigvee_{t \in T}\phi_t \equiv \True$
\begin{enumerate}
    \item as $\B_1$ and $\B_2$ are saturated, we know that for $T_1=\{t \mid sl_{t} =l_1, \alpha_{t} =\alpha\} \subseteq \arrow_1$ (the set of outgoing switches from $l_1$ with $\alpha$) we have that $\bigvee_{t \in T_1}\phi_t \equiv \True$ and similarly for $T_2=\{t\mid sl_{t} =l_2, \alpha_{t} =\alpha\} \subseteq \arrow_2$, $\bigvee_{t \in T_2}\phi_t \equiv \True$
    \item 
    The switches in $T$ can be constructed by rule 1 or rule 2.
    \item Any $t \in T$ constructed by rule 1, has the shape: $t=((l_1,l_2), \phi_1 \land \phi_2, a_1 \sqcup a_2,(l_1',l_2'))$ so we can write:
    \item $\forall t \in T, t_1\in T_1, t_2 \in T_2, \phi_t=\phi_{t_1} \land \phi_{t_2}$
    \item as rule 1 constructs the conjunction of $\phi$s $\bigvee_{t \in T}\phi_t= \bigvee_{t_1 \in T_1}\bigvee_{t_2 \in T_2} \phi_{t_1} \land \phi_{t_2} $ by de Morgan's law.$=\bigvee_{t_1\in T_1} \left(\phi_{t_1} \land \bigvee_{t_2\in T_2} \phi_{t_2}\right)=\True \land \True=\True$ 
    
    \item for any $t \in T$ If it is constructed by rule 2-1, it has the shape: $t=((l_1,l_2), \phi_1 , a_1 ,(l_1',\bot))$ then $\bigvee_{t \in T}\phi_t= \bigvee_{t_1 \in T_1} \phi_{t_1} =\True$
    \item for any $t \in T$ s constructed by rule 2-2, the proof is similar to 2-1.
    
\end{enumerate}
\textbf{Second condition}: We show that for any $(l_1, l_2) \in L^\Cl$ and $\alpha \in \IAU(G_o)$, there is a switch $(l_1, l_2) \trans{\alpha,\phi,a}_\B$.
\begin{enumerate}
    \item As BDDTSs $\B_1$ and $\B_2$ are saturated, we know that for any $ l_1\in L_1^\Cl$ and $\alpha \in \IAU(G_o)$, there is a switch $l_1 \trans{\alpha,\phi_1,a_1}_{\B_1}$ and  for any $ l_2\in L_2^\Cl$ and $\alpha \in \IAU(G_o)$, there is a switch $l_2 \trans{\alpha,\phi_2,a_2}_{\B_2}$.
    \item By the definition of composition with either rule 1 or rule 2, all the switches in $\B_1$ and $\B_2$ appear in the composition as stated below:
    \begin{itemize}
        \item By rule 1: for all $l_1 \in L_1^\Cl$ where $l_1 \trans{\alpha,\phi_1,a_1}_{\B_1}$  and $\alpha \in \IAU(G_o)$ and for any $ l_2\in L_2^\Cl$ and  $l_2 \trans{\alpha,\phi_2,a_2}_{\B_2}$  then first by definition we have$(l_1,l_2) \in L^\Cl$ and by rule 1 $(l_1,l_2) \trans{\alpha,\phi_1 \land \phi_2,a_1 \sqcup a_2}$
        \item by rule 2-1 for all $l_1 \in L_1^\Cl$ where $l_1 \trans{\alpha,\phi_1,a_1}_{\B_1}$  and $\alpha \in \IAU(G_o)$ and $l_2 \ntrans{\alpha}$, $(l_1,l_2) \trans{\alpha,\phi_1 ,a_1}$. rule 2-2 is similar to this case
    \end{itemize}
    and the condition follows from above items 1 and 2.
\end{enumerate}
\textbf{Third condition}: We show that for any $(\il_1,\il_2)\trans{\alpha,\phi,a} (l_1,l_2)$, either $\phi\Rrightarrow \IG_1 \lor \IG_2$ or $(l_1,l_2)\in L_{\B,\sink}^\Op$ and $\OG\undefinedon l$.
\begin{enumerate}
    \item As BDDTSs $\B_1$ and $\B_2$ are saturated, we know that for all $\il_1\trans{\alpha,\phi_1,a_1} l_1$ $\phi_1\Rrightarrow \IG_1 $ or $l_1\in L_{\B_1,\sink}^\Op$ and similarly in $\B_2$ we have for any $\il_2\trans{\alpha,\phi_2,a_2} l_2$ $\phi_2\Rrightarrow \IG_2 $ or $l_2\in L_{\B_2,\sink}^\Op$ and $\OG_1 \uparrow l_1$ and $\OG_2 \uparrow l_2$ 
    \item There are 4 cases (stated below) when rule 1 is applied, for any $\il_1\trans{\alpha,\phi_1,a_1} l_1$ and $\il_2\trans{\alpha,\phi_2,a_2} l_2$ 
    \begin{enumerate}
        \item we have $\phi_1\Rrightarrow \IG_1 $ and $\phi_2\Rrightarrow \IG_2 $ and by rule 1 $(\il_1,\il_2) \trans{\alpha,\phi_1 \land \phi_2,a_1 \sqcup a_2}$
        \item [] as $\phi_1 \Rrightarrow \IG_1$ and $\IG=\IG_1 \lor \IG_2$ by disjunction introduction it holds that $\phi_1 \implies \IG_1 \lor \IG_2$
        \item [] similarly $\phi_2 \implies \IG_1 \lor \IG_2$
        \item [] By conjunction introduction and Monotonicity of implication $\phi_1 \land \phi_2 \implies IG_1 \lor \IG_2$ holds.
        \item $\phi_1\Rrightarrow \IG_1 $ and $l_2\in L_{\B_2,\sink}^\Op$
        \item [] Here we don't know about $\phi_2$ but with above reasoning and monotonicity of implication $\phi_1 \Rightarrow \IG_1 \lor \IG_2$ so adding $\phi_2$ does not make a difference and again  $\phi_1 \land \phi_2 \implies \IG_1 \lor \IG_2$ holds.
        \item $\phi_2\Rrightarrow \IG_2 $ and $l_1\in L_{\B_1,\sink}^\Op$
        \item [] with the same reasoning as previous item, this also holds.
        \item $l_1\in L_{\B_1,\sink}^\Op$ and $l_2\in L_{\B_2,\sink}^\Op$ so the composed location in composition will be an open sink location: $(l_1,l_2) \in L_{{\discomp}, \sink}^\Op$ and by definition as none of the locations have $\OG$ defined then it holds that $\OG \uparrow (l_1,l_2)$
    \end{enumerate}
    \item If rule 2-1 is applied and  $\phi_1\Rrightarrow \IG_1 $ then only $\phi_1$ appears as the switch guard in the composition and it immediately follows that $\phi_1 \Rrightarrow \IG_1 \lor \IG_2$ and if  $l_1\in L_{\B_1,\sink}^\Op$ then in the composition by rule 2 it ends up to location $(l_1, \bot)$ which is both open and sink without $\OG$ by definition.
    \item Rule 2-2 is symmetric to 2-1.
\end{enumerate}
\end{proof}

\subsection{Commutativity and associativity of disjunction composition}

For the proof of \autoref{prop:disjunctioncommassoc}, we use isomorphism over BDDTSs. For completeness, this (known) concept is therefore defined below.

\begin{definition} [BDDTS Isomorphism] \label{isomorphism}
Two BDD Transition Systems are isomorphic if their structure is the same.
Let  $B_1= \bddtstupleind{1} $ and $B_2= \bddtstupleind{2}$ be two  BDDTSs. They are isomorphic iff:
\end{definition}
 There is a bijective function $f$ mapping  locations from $L_1$ to $L_2$ preserving their natures such that:
    \begin{itemize}
        \item $f(il_1)=il_2$ 
         \item and $\IG_1 \equiv \IG_2 $
         \item $l \xrightarrow{\lambda_1}_1 l'$ for some $l,l'\in L_1$ if and only if $f(l) \xrightarrow{\lambda_2}_2 f(l')$
         such that  switch labels are equivalent $\lambda_1=\lambda_2$ for $\lambda_1 =(\alpha_1, \phi_1, a_1)$ and $\lambda_2 =(\alpha_2, \phi_2, a_2)$ where equivalence means:
         \begin{itemize}
             \item The gates are equal: $\alpha_1=\alpha_2$ which follows that $g_1=g_2$ and  $\ivbar_1=\ivbar_2$
            \item The switch guards are semantically equivalent $\phi_1 \equiv \phi_2$
            \item The assignments are compatible: $a_1 \compat a_2$
        
    \end{itemize}
    \item  $\OG \downarrow l_1$ iff $\OG \downarrow l_2$ and $\OG(f(l_1)) \equiv \OG(l_2)$
   
\end{itemize}

\disjunctioncommassoc*

\begin{proof}~

\noindent
\textbf{Commutativity.}

\medskip\noindent
We show that the BDDTS $\B_1 \bigtriangledown \B_2$ is isomorphic to the BDDTS $\B_2  \bigtriangledown \B_1$, by proving that the bijective mapping function:$ \isomap(l_1,l_2)=(l_2,l_1)$ defines this isomorphism. Let $\lambda \in \Lambda_{\discomp}$ be an arbitrary switch label in the composition.

\begin{enumerate}
    \item  \textbf{Case 1}: Suppose that $\lambda$ was constructed with rule 1, i.e. $(l_1,l_2) \trans{\alpha,\phi_1 \land  \phi_2,a_1 \sqcup a_2} (l_1',l_2')$. Then we have:
   \begin{itemize}
       \item $l_2 \xrightarrow{\alpha,\phi_2,a_2} l_2' $
       and  $l_1 \xrightarrow{\alpha,\phi_1,a_1} l_1'$
        \item If $l_1=\il_1$ and $l_2=\il_2$ are initial locations: $\IG_{\discomp}=\IG_1 \lor \IG_2$ and $\IG_{\B_2 \bigtriangledown \B_1}= \IG_2 \lor \IG_1$, so $\IG_{\discomp} \equiv \IG_{\B_2 \bigtriangledown \B_1} $\\
   \end{itemize}
   \item [] We have that $\isomap(l_1,l_2)=(l_2,l_1)$ in $B_2 \bigtriangledown B_1$, rule 1 is applicable, i.e. 
   \begin{itemize}
       \item $(l_2,l_1) \trans{\alpha,  \phi_1 \land \phi_2, a_1 \sqcup a_2} (l'_2, l'_1)$
       \item and $\phi_2 \land \phi_1 \equiv \phi_1 \land \phi_2$
       \item and by definition $a_1 \sqcup a_2 \compat a_2 \sqcup a_1$.
        \item if $\OG \downarrow l_1$ and $\OG \downarrow l_2$, by definition: 
        \begin{itemize}
            \item $\OG_{\discomp}= \OG_1(l_1) \land \OG_2(l_2)$
            \item $\OG_{\B_2 \bigtriangledown \B_1}= \OG_2(l_2) \land \OG_1(l_1)$
            \item Thus: $\OG_{\discomp} \equiv \OG_{\B_2 \bigtriangledown \B_1}$
            \item The same applies for $l_1'$ and $l_2'$
         \end{itemize}
          \item if $\OG \downarrow l_1$ and $\OG \uparrow l_2$, by definition: 
        \begin{itemize}
            \item $\OG_{\discomp}= \OG_1(l_1) $
            \item $\OG_{\B_2 \bigtriangledown \B_1}= \OG_1(l_1)$
            \item Thus: $\OG_{\discomp} \equiv \OG_{\B_2 \bigtriangledown \B_1}$
            \item The same applies for $l_1'$ and $l_2'$
         \end{itemize}
         \item if $\OG \uparrow l_1$ and $\OG \downarrow l_2$, by definition: 
        \begin{itemize}
            \item $\OG_{\discomp}= \OG_2(l_2) $
            \item $\OG_{\B_2 \bigtriangledown \B_1}= \OG_2(l_2)$
            \item Thus: $\OG_{\discomp} \equiv \OG_{\B_2 \bigtriangledown \B_1}$
            \item The same applies for $l_1'$ and $l_2'$
         \end{itemize}
   \end{itemize}

   Consequently, $\isomap(l_1', l_2')= (l_2' , l_1')$
 
    \item  \textbf{Case 2}: Suppose that $\lambda$ was constructed with rule 2-1, i.e. $(l_1,l_2) \xrightarrow{\alpha,\phi,a_1 } (l_1',\bot)$.
    Then we have:
    \begin{itemize}
        \item  $l_1 \xrightarrow{\alpha,\phi,a_1} l_1'$
        \item $l_2 \ntrans{\alpha}$ no switch enabled from $l_2$ with interaction $\alpha$ in $\B_2$
    \end{itemize}
    Hence, we have that for location $\isomap(l_1,l_2) = (l_2,l_1)$ in $\B_2 \bigtriangledown \B_1$, rule 2-2 is applicable, i.e. 
    \begin{itemize}
        \item $(l_2,l_1) \xrightarrow{\alpha,\phi_1,a_1 } (\bot,l_1')$.
        \item $\phi_1 \equiv \phi_1$
        \item $a_1 \compat a_1$
        \item For $\IG$ and $\OG$ the same reasoning as the first case is applicable. Note that $l_2'=\bot$ so by default $\OG \uparrow l_2'$ 
    \end{itemize} Consequently, we have $\isomap(l_1',\bot)=(\bot,l_1')$ as required. 
    \item  \textbf{Case 3}: Suppose that $\lambda$ was constructed with rule 2-2. The the proof is symmetric to the previous case, since rules 2-1 and 2-2 are symmetric.
\end{enumerate}
\textbf{Associativity.}

\medskip\noindent
We prove this by showing that the resulting composed BDDTSs $B_3 \bigtriangledown (B_1 \bigtriangledown B_2)$ and $(B_3  \bigtriangledown B_1) \bigtriangledown B_2$ are isomorphic with the bijective function:\\
$
\isomap(l_3,(l_1,l_2))=((l_3,l_1),l_2)
$ 
\begin{itemize}
\item  \textbf{Case 1}: $\lambda_{\discomp}$ was constructed with rule 1, i.e. $(l_1,l_2) \trans{\alpha,\phi_1 \land  \phi_2,a_1 \sqcup a_2} (l_1',l_2')$. and $\lambda_{\discompthree}$ is also constructed with rule 1,  i.e. $(l_3,(l_1,l_2)) \trans{\alpha,\phi_3 \land (\phi_1 \land  \phi_2),a_3 \sqcup (a_1 \sqcup a_2)} (l_3',(l_1',l_2'))$.

Then we have:
   \begin{itemize}
       \item $(l_1,l_2) \xrightarrow{\alpha,\phi_1 \land \phi_2,a_1 \sqcup a_2} (l_1',l'_2)$
       and  $l_3 \xrightarrow{\alpha,\phi_1,a_1} l_3'$
        \item If $l_1=\il_1$, $l_2=\il_2$ and $l_3=\il_3$ are initial locations: $\IG_{\discompthree}=\IG_3 \lor (\IG_1 \lor \IG_2)$ and $\IG_{\discomptwoone}= (\IG_3 \lor \IG_1) \lor \IG_2$, so $\IG_{\discompthree} \equiv \IG_{\discomptwoone} $\\
   \end{itemize}
   \item [] Hence we have that for location $\isomap(l_3,(l_1,l_2))=((l_3,l_1),l_2)$ in $(\B_3 \bigtriangledown \B_1) \bigtriangledown \B_2$, rule 1 is applicable, i.e. 
   \begin{itemize}
       \item In $\B_3 \bigtriangledown B_1$ the result of rule 1 is: $(l_3,l_1) \trans{\alpha,  \phi_3 \land \phi_1, a_3 \sqcup a_1} (l'_3, l'_1)$
       \item We also have: $l_2 \xrightarrow{\alpha,\phi_2,a_2} l_2'$
       \item by applying rule 1 to build $\discomptwoone$ we have: $((l_3,l_1),l_2) \trans{\alpha,  (\phi_3 \land \phi_1)\land \phi_2, (a_3 \sqcup a_1) \sqcup a_2} ((l'_3, l'_1),l'_2)$
       \item and $(\phi_3 \land \phi_1) \land \phi_2 \equiv \phi_3 \land (\phi_1 \land \phi_2)$
       \item and by definition $(a_3 \sqcup a_1) \sqcup a_2 \compat a_3 \sqcup (a_1 \sqcup a_2) $.
        \item if $\OG \downarrow l_1$ and $\OG \downarrow l_2$, and  $\OG \downarrow l_3$  by definition: 
        \begin{itemize}
            \item $\OG_{\discompthree}= \OG_3(l_3)\land(\OG_1(l_1) \land \OG_2(l_2))$
            \item $\OG_{\discomptwoone}= (\OG_3(l_3) \land \OG_1(l_1)) \land \OG_2(l_2)$
            \item Thus: $\OG_{\discompthree} \equiv \OG_{\discomptwoone}$
            \item The same applies for $l_1'$,$l_2'$ and $l_3'$
         \end{itemize}
          \item if any of $\OG$s are undefined for any of the locations, the corresponding $\OG$ will be removed in the above expression and the left hand side and the right hand side still stay equivalent.
   \end{itemize}   Consequently, $\isomap(l_3',(l_1', l_2'))= ((l_3' , l_1'),l_2')$
\item  \textbf{Case 2}: $\lambda_{\discomp}$ was constructed with rule 2-1, i.e. $(l_1,l_2) \trans{\alpha,\phi_1 ,a_1 } (l_1',\bot)$. and $\lambda_{\discompthree}$ is constructed with rule 1,  i.e. $(l_3,(l_1,l_2)) \trans{\alpha,\phi_3 \land \phi_1, a_3 \sqcup a_1 } (l_3',(l_1',\bot))$.
Then we had:
   \begin{itemize}
       \item In $\discomp$: $(l_1,l_2) \xrightarrow{\alpha,\phi_1 ,a_1 } (l_1',\bot)$
        which means that $l_2 \ntrans{\alpha}$
       \item We also have: $l_3 \xrightarrow{\alpha,\phi_3,a_3} l_3'$
   \end{itemize}
   \item [] Hence we have that for location $\isomap(l_3,(l_1,l_2))=((l_3,l_1),l_2)$ in $(\B_3 \bigtriangledown \B_1) \bigtriangledown \B_2$, rule 2-1 is applicable, i.e. 
    \begin{itemize}
       \item $(l_3,l_1) \trans{\alpha,  \phi_3 \land \phi_1, a_3 \sqcup a_1} (l'_3, l'_1)$
       and  $l_2 \ntrans{\alpha}$
       \item by applying rule 2-1 in: $\discomptwoone$ $((l_3,l_1),l_2) \trans{\alpha,  \phi_3 \land \phi_1, a_3 \sqcup a_1 } ((l'_3, l'_1),\bot)$
        \item Comparing $\lambda_{\discompthree}$ and $\lambda_{\discomptwoone}$:
       \begin{itemize}
           \item $\alpha=\alpha$
           \item $\phi_3 \land \phi_1 \equiv \phi_3 \land \phi_1 $
            \item and by definition $a_3 \sqcup a_1\compat a_3 \sqcup a_1 $.
       \end{itemize}   
   \end{itemize}   Consequently, $\isomap(l_3',(l_1', \bot))= ((l_3' , l_1'),\bot)$

\item  \textbf{Case 3}: $\lambda_{\discomp}$ was constructed with rule 2-2, i.e. $(l_1,l_2) \trans{\alpha,\phi_2 ,a_2 } (\bot,l_2')$. and $\lambda_{\discompthree}$ is constructed with rule 1,  i.e. $(l_3,(l_1,l_2)) \trans{\alpha,\phi_3 \land \phi_2, a_3 \sqcup a_2 } (l_3',(\bot,l_2'))$.
Then:
   \begin{itemize}
       \item in $\discomp$ we had $(l_1,l_2) \trans{\alpha,\phi_2 ,a_2 } (\bot,l_2')$
        which means that $l_1 \ntrans{\alpha}$
       \item we also have: $l_3 \xrightarrow{\alpha,\phi_3,a_3} l_3'$
       \item so  by applying rule 1 in $\discompthree$:  $(l_3,(l_1,l_2)) \trans{\alpha,\phi_3 \land \phi_2, a_3 \sqcup a_2 } (l_3',(\bot,l_2'))$
       
   \end{itemize}
   \item [] Hence we have that for location $\isomap(l_3,(l_1,l_2))=((l_3,l_1),l_2)$
    \begin{itemize}
       \item In $\B_3 \bigtriangledown \B_1$ rule 2-1 should have been applied (because $l_1 \ntrans{\alpha}$):
        $(l_3,\bot) \trans{\alpha,  \phi_3 , a_3 } (l'_3,\bot)$
       \item we also had: $l_2 \trans{\alpha, \phi_2,a_2} l_2'$
       \item  in $(\B_3 \bigtriangledown \B_1) \bigtriangledown \B_2$, rule 1 is applicable, i.e. 
       \item [] $((l_3,l_1),l_2) \trans{\alpha,  \phi_3 \land \phi_2, a_3 \sqcup a_2 } ((l'_3, \bot),l_2')$
       \item Comparing $\lambda_{\discompthree}$ and $\lambda_{\discomptwoone}$:
       \begin{itemize}
           \item $\alpha=\alpha$
           \item $\phi_3 \land \phi_2 \equiv \phi_3 \land \phi_2 $
           \item by definition $a_3 \sqcup a_2\compat a_3 \sqcup a_2 $.
       \end{itemize}
   \end{itemize}   Consequently, $\isomap(l_3',(\bot, l_2'))= ((l_3' , \bot),l_2')$
   
   \item  \textbf{Case 4}: $\lambda_{\discomp}$ was constructed with rule 1, i.e. $(l_1,l_2) \trans{\alpha,\phi_1 \land \phi_2 ,a_1 \sqcup a_2 } (l_1',l_2')$. and $\lambda_{\discompthree}$ is constructed with rule 2-2,  i.e. $(l_3,(l_1,l_2)) \trans{\alpha,\phi_1 \land \phi_2, a_1 \sqcup a_2 } (\bot,(l_1',l_2'))$.
   Then we have:
   \begin{itemize}
       \item $(l_1,l_2) \xrightarrow{\alpha,\phi_1 \land \phi_2,a_1 \sqcup a_2} (l_1',l'_2)$
       and  $l_3 \ntrans{\alpha}$
   \end{itemize}
   \item [] Hence we have that for location $\isomap(l_3,(l_1,l_2))=((l_3,l_1),l_2)$ in $(\B_3 \bigtriangledown \B_1) \bigtriangledown \B_2$, rule 1 is applicable, i.e. 
    \begin{itemize}
       \item As $l_3 \ntrans{\alpha}$, in $\B_3 \bigtriangledown \B_1$ rule 2-2 has been applied: $(l_3,l_1) \trans{\alpha, \phi_1,  a_1} (\bot, l'_1)$
       and  $l_2 \trans{\alpha, \phi_2, a_2} l_2'$
       \item by applying rule 1 in $\discomptwoone$:  $((l_3,l_1),l_2) \trans{\alpha,  \phi_1 \land \phi_2, a_1 \sqcup a_2 } ((\bot, l'_1),l_2')$
        \item Comparing $\lambda_{\discompthree}$ and $\lambda_{\discomptwoone}$:
       \begin{itemize}
           \item $\alpha=\alpha$
           \item $\phi_1 \land \phi_2 \equiv \phi_1 \land \phi_2 $
            \item and by definition $a_1 \sqcup a_2\compat a_1 \sqcup a_2 $.
       \end{itemize}   
   \end{itemize}   Consequently, $\isomap(\bot',(l_1',l_2' ))= ((\bot , l_1'),l_2')$
    \item  \textbf{Case 5}: We had that $l_1 \ntrans{\alpha}$ and $l_2 \ntrans{\alpha}$ and $\lambda_{\discompthree}$ is constructed with rule 2-2,  i.e. $(l_3,(l_1,l_2)) \trans{\alpha,\phi_3, a_3 } (l_3',(\bot,\bot))$.
  
   \item [] Hence we have that for location $\isomap(l_3,(l_1,l_2))=((l_3,l_1),l_2)$ in $(\B_3 \bigtriangledown \B_1) \bigtriangledown \B_2$, rule 2-2 is applicable, i.e. 
    \begin{itemize}
       \item As $l_1 \ntrans{\alpha}$, in $\B_3 \bigtriangledown \B_1$ rule 2-1 has been applied: $(l_3,l_1) \trans{\alpha, \phi_3,  a_3} (l_3', \bot)$
       and  $l_2 \trans{\alpha, \phi_2, a_2} l_2'$
       \item by applying rule 2-1 in $\discomptwoone$:  $((l_3,l_1),l_2) \trans{\alpha,  \phi_3, a_3 } ((l_3', \bot),\bot)$
        \item Comparing $\lambda_{\discompthree}$ and $\lambda_{\discomptwoone}$:
       \begin{itemize}
           \item $\alpha=\alpha$
           \item $\phi_3 \equiv \phi_3 $
            \item and by definition $a_3\compat a_3 $.
       \end{itemize}   
   \end{itemize}   Consequently, $\isomap(l_3',(\bot,\bot))= ((l_3' , \bot),\bot)$

    \item  \textbf{Case 6}: We had that $l_3 \ntrans{\alpha}$ and $l_1 \ntrans{\alpha}$ and $\lambda_{\discompthree}$ is constructed with rule 2-2,  i.e. $(l_3,(l_1,l_2)) \trans{\alpha,\phi_2, a_2 } (\bot,(\bot,l_2'))$.
   Then we have:
   \begin{itemize}
       \item $(l_1,l_2) \xrightarrow{\alpha,\phi_2,a_2} (\bot,l'_2)$
       and  $l_3 \ntrans{\alpha}$
   \end{itemize}
   \item [] Hence we have that for location $\isomap(l_3,(l_1,l_2))=((l_3,l_1),l_2)$ in $(\B_3 \bigtriangledown \B_1) \bigtriangledown \B_2$, rule 2-2 is applicable, i.e. 
    \begin{itemize}
       \item As $l_1 \ntrans{\alpha}$, 
       and $l_3 \ntrans{\alpha}$ no rule is applied in $\B_3 \bigtriangledown \B_1$
       \item by applying rule 2-2 in $\discomptwoone$:  $((l_3,l_1),l_2) \trans{\alpha,  \phi_2, a_2 } ((\bot, \bot),l_2)$
        \item Comparing $\lambda_{\discompthree}$ and $\lambda_{\discomptwoone}$:
       \begin{itemize}
           \item $\alpha=\alpha$
           \item $\phi_2 \equiv \phi_2 $
            \item and by definition $a_2\compat a_2 $.
       \end{itemize}   
   \end{itemize}   Consequently, $\isomap(\bot,(\bot,l_2'))= ((\bot , \bot),l_2')$
   
\end{itemize}

All cases are covered and the isomorphism holds.
\end{proof}
\subsection{Test Behavior Preservation in Disjunction Composition}

To prove the theorems in the paper we need to define a couple of concepts and notations described in this section.

We define the notion of label injection: For $i = 1, 2$, the function $\Proj[i]{l}$ returns the set of pairs in $L$ where the label $l$ appears in the $i$-th position:
\[ \Proj[i]{l}=\gensetof{(l_1,l_2) \in L}{l_i=l}
\]
This identifies all composite labels in which $l$ is injected in position $i$.

For two partial functions $f,g\in Y^X_\bot$, we write $f\ssby g$ (``$f$ is subsumed by $g$'') if, for all $x\in X$, $f\definedon x$ implies $g\definedon x$ and $g(x)=f(x)$. This is extended to paths: we define ${\ssby}\subseteq \Lambda^*\times \Lambda^*$  as the smallest relation such that
\begin{itemize}
\item $\epsilon \ssby \epsilon$
\item If $\pi_1\ssby \pi_2$, $\phi_1\Leftarrow \phi_2$ and $a_1\ssby a_2$, then $\pi_1\cdot(\alpha,\phi_1,a_1)\ssby \pi_2\cdot(\alpha,\phi_2,a_2)$.
\end{itemize}
Hence, $\pi_1\ssby \pi_2$ (``$\pi_1$ is subsumed by $\pi_2$'') if the elements $(\alpha_1,\phi_1,a_1)$ in $\pi_1$ correspond on a one-to-one basis with the elements $(\alpha_2,\phi_2,a_2)$ of $\pi_2$ in the sense that $\alpha_1=\alpha_2$, $\phi_1\Leftarrow \phi_2$ and $a_1\ssby a_2$. The intuition is that, when composing $\B_i$ for $i=1,2$, all paths of each $\B_i$ to some location $l_i$ are subsumed by paths of $\B_1\bigtriangledown \B_2$ to $\Proj[i]l$. To formulate this precisely, we also need the notion of (subsuming) \emph{location path}.
\begin{definition}[Location Path]\label{def:path}
Let $\B$ be a BDDTS. The \emph{location path} function $LP_\B\colon \IAU^*\times L_\B \to 2^{\Lambda_\B}$ returns the paths that reach a location $l$ via an interaction sequence $\sigma$:
\[ \PCB{\B}{\sigma}{l} =
  \{\pi \mid \il_\B\trans\pi_\satK l,\; \sigma_\pi = \sigma \} \enspace.
\]
We also use the subset of location paths subsuming a given path $\pi$:
\[ LP_\B(\sigma,l\mid \pi) = \{\varpi\in LP_\B(\sigma,l) \mid \pi\ssby \varpi\}  \enspace.
\]
\end{definition}
The below lemma, states that for any path in the composition there is at least one path in one of the BDDTSs that is subsumed by the path in the composition:
\begin{lemma}[path subsumption]\label{lemma:path-subsumption}
Let $\B_i = \bddtstupleind{i}$ for $i=1,2$ be compatible saturated BDDTSs and let $\B = \B_1 \bigtriangledown \B_2=\bddtstuple$. Let $\ini \in \TU(\emptyset)^V$. For any $\sigma \in \IAU^*$ and any $\pi \in \PCB{\B}{\sigma}{l}$, there exists $\pi_i \in \PCB{\B_i}{\sigma}{l_i}$ such that $\pi_i \ssby \pi$.
\end{lemma}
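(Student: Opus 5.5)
We argue by induction on the length of $\pi$, that is, on the length of $\sigma$, proving a slightly stronger statement. Here $l=(l_1,l_2)$ is the target location of $\pi$. The strengthened claim is that for every $\pi\in\PCB{\B}{\sigma}{(l_1,l_2)}$ and every $i\in\{1,2\}$ with $l_i\neq\bot$, there is $\pi_i\in\PCB{\B_i}{\sigma}{l_i}$ with $\pi_i\ssby\pi$. Since $(\il_1,\il_2)$ has no $\bot$ component, and components only become $\bot$ via rules 2-1/2-2 and stay $\bot$ afterwards, at least one of $l_1,l_2$ differs from $\bot$ on every reachable location. This yields the lemma as stated, with the existential reading ``in one of the BDDTSs''. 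The strengthening is needed so that the induction can continue along whichever component is still alive.

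\textbf{Base case.} For $\pi=\epsilon$ we have $l=(\il_1,\il_2)$ and $\sigma=\epsilon$. We take $\pi_i=\epsilon\in\PCB{\B_i}{\epsilon}{\il_i}$, and $\epsilon\ssby\epsilon$ holds by definition.

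\textbf{Inductive step.} Let $\pi=\pi'\cdot(\alpha,\phi,a)$ with $\il\trans{\pi'}(l_1,l_2)\trans{\alpha,\phi,a}(l_1',l_2')$, and fix $i$ with $l_i'\neq\bot$. We case-split on the rule of \autoref{def:disjunction} that generated the last switch.
\begin{itemize}
\item \textbf{Rule 1.} Here $\phi=\phi_1\wedge\phi_2$ and $a=a_1\sqcup a_2$, where $l_i\trans{\alpha,\phi_i,a_i}l_i'$ in $\B_i$; in particular $l_i\neq\bot$. By the induction hypothesis there is $\pi_i'\ssby\pi'$ reaching $l_i$ via $\sigma_{\pi'}$. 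We set $\pi_i=\pi_i'\cdot(\alpha,\phi_i,a_i)$. The conditions $\phi_1\wedge\phi_2\Rrightarrow\phi_i$ and $a_i\ssby a_1\sqcup a_2$ both hold; the latter holds because $a_1\compat a_2$, so $a_1\sqcup a_2$ agrees with $a_i$ on its domain. Hence $\pi_i\ssby\pi$ by the second clause of $\ssby$.
\item \textbf{Rule 2-1.} Here $l_2'=\bot$, so $i=1$. The switch is a copy $l_1\trans{\alpha,\phi_1,a_1}l_1'$ with $l_1\neq\bot$. We extend the path obtained from the induction hypothesis by this same label; the conditions $\phi_1\Leftarrow\phi_1$ and $a_1\ssby a_1$ are trivial.
\item \textbf{Rule 2-2.} This case is symmetric to Rule 2-1.
\end{itemize}
In each case the interaction sequence of $\pi_i$ is $\sigma_{\pi'}\alpha=\sigma$, as required.

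\textbf{Expected obstacle.} The main point needing care is the technical side condition in the strengthened claim: we must check that $l_i'\neq\bot$ forces $l_i\neq\bot$, so that the induction hypothesis is applicable. This follows because no rule has a switch from a $\bot$ component; $\bot$ has no outgoing switches in $\B_i$, so rules 1 and 2-$i$ never fire from it. A second point is the compatibility fact $a_i\ssby a_1\sqcup a_2$. Since compatibility is only semantic ($\equiv$), this requires $\ssby$ on assignments to be read up to semantic equality of terms; we would note this explicitly. The remaining points are routine. The paths live in $\arrow_\satK$, but the $\B_i$ are already saturated, so $\arrow_\satK$ coincides with $\arrow_i$. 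The initial location of $\B$ is $(\il_1,\il_2)$.
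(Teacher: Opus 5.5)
Your proof is correct and follows the same route as the paper's: induction on the length of $\pi$, extending the component path from the induction hypothesis according to which rule of \autoref{def:disjunction} generated the last switch. Your strengthened hypothesis (a subsumed path for \emph{every} non-$\bot$ component) is sharper than the paper's argument. The paper leaves $i$ unfixed, so it never checks that the component path supplied by the hypothesis is the one a rule 2-1/2-2 step actually extends, and your per-component form is exactly what \autoref{prop:path-subsumption} later relies on. The only inaccuracy is your side remark that $\arrow_\satK$ coincides with $\arrow_i$ for saturated $\B_i$: re-saturating still adds $\sattop$, $\satbot$ and, e.g., $\neg\IG_i$-guarded initial switches. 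Reading location paths over the plain switch relations, as the paper implicitly does, is clearly the intended meaning.
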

\begin{proof}[Induction on the length of $\pi$]
We prove the following statement by induction on $n \geq 0$:
\[
P(n):
\text{For all } \sigma \in \IAU^*,\ \text{and for all } \pi \in \PCB{\B}{\sigma}{l}\ \text{with } |\pi|=n,\]\[
\text{there exists}\ \pi_i \in \PCB{\B_i}{\sigma}{l_i}\ \text{such that } \pi_i \ssby \pi.
\]

\noindent
\textbf{Base case ($n=0$).}  
Suppose $\pi \in \PCB{\B}{\sigma}{l}$ with $|\pi|=0$. Then $\pi$ is the empty path.  
Let $\pi_i$ be the empty path in $\PCB{\B_i}{\sigma}{l_i}$. Clearly $\pi_i \ssby \pi$.  
Thus $P(0)$ holds.

\medskip\noindent
\textbf{Induction hypothesis.}  
Assume $P(n)$ holds for some $n \geq 0$, i.e.,
every $\pi' \in \PCB{\B}{\sigma'}{l'}$ of length $n$ admits 
$\pi'_i \in \PCB{\B_i}{\sigma'}{l_i'}$ such that $\pi_i' \ssby \pi'$.

\medskip\noindent
\textbf{Induction step ($n \Rightarrow n+1$).}  
Let $\pi \in \PCB{\B}{\sigma}{l}$ with $|\pi|=n+1$.  
We write $\pi = \pi' \cdot \lambda$ where $\pi'$ is the prefix of length $n$ and $\lambda$ is the label of the last switch $t$ where $\lambda=\lambda_t$.

\begin{enumerate}
    \item By construction, $\pi' \in \PCB{\B}{\sigma'}{l'}$ with $|\pi'|=n$.  
    By the induction hypothesis, there exists $\pi'_i \in \PCB{\B_i}{\sigma'}{l_i'}$ such that $\pi_i' \ssby \pi'$.
   
    \begin{itemize}
        \item If rule 1 of disjunction is applied on  $t_1 \in \arrow_1$ and $t_2 \in \arrow_2$ we have: $\alpha_{\lambda}=\alpha_{t_1}=\alpha_{t_2}$, $\phi_{\lambda}= \phi_{t_1}\land \phi_{{t_2}}$ and $a_{\lambda}= a_{t_1} \sqcup a_{{t_2}}$
        \item If rule 2-1 of disjunction is applied on  $t_1 \in \arrow_1$ and $t_2 \in \arrow_2$ we have: $\alpha_{\lambda}=\alpha_{{t_1}}$, $\phi_{\lambda}= \phi_{{t_1}}$ and $a_{\lambda}= a_{{t_1}}$
        \item If rule 2-2 of disjunction is applied on  $t_1 \in \arrow_1$ and $t_2 \in \arrow_2$ we have: $\alpha_{\lambda}=\alpha_{{t_2}}$, $\phi_{\lambda}= \phi_{{t_2}}$ and $a_{\lambda}= a_{{t_2}}$
    \end{itemize}
    \item All the three cases above satisfy the subsumption condition $\alpha_{\lambda}=\alpha_{{t_i}}$, $\phi_{t_i}\Leftarrow \phi_{\lambda}$ and $a_{t_i}\ssby a_{\lambda}$.
\end{enumerate}

Hence $P(n+1)$ holds.

\medskip
\noindent
\textbf{Conclusion.}  
By induction, $P(n)$ holds for all $n \geq 0$.  
Therefore, for any $\sigma \in \IAU^*$ and any $\pi \in \PCB{\B}{\sigma}{l}$, there exists $\pi_i \in \PCB{\B_i}{\sigma}{l_i}$ such that $\pi \ssby \pi_i$.
\end{proof}

The following proposition formalises the claim about path subsumption stated in \autoref{def:path} :
\begin{proposition}\label{prop:path-subsumption}
Let $\B_i=\bddtstupleind{i}$ be compatible BDDTSs for $i=1,2$, let $\B = \B_1 \bigtriangledown \B_2$ and let $\ini \in \TU(\emptyset)^V$. For all $i\in\{1,2\}$, $l_i\in L_i$ and $l\in\Proj[i]{l_i}$:
\[ LP_{\B}(\sigma,l) = \biguplus_{\pi\in LP_{\B_i}(\sigma,l_i)} LP_\B(\sigma,l\mid\pi) \enspace. \]
\end{proposition}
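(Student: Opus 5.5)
The equality of the proposition amounts to three facts about the family of sets $LP_\B(\sigma,l\mid\pi)$, $\pi\in LP_{\B_i}(\sigma,l_i)$. First, each is a subset of $LP_\B(\sigma,l)$; this holds by definition. Second, they cover $LP_\B(\sigma,l)$. Third, they are pairwise disjoint, which is what $\biguplus$ asserts. I will prove covering and disjointness through a single construction: a \emph{projection} of composite paths onto component $i$.

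\textbf{Projection.} Fix $i$, say $i=1$. For a path $\varpi$ of $\B$ from $\il=(\il_1,\il_2)$ to $l\in\Proj[1]{l_1}$ with $l_1\in L_1$ (so $l_1\neq\bot$), I proceed by induction on $|\varpi|$. Since the first component of a location reached via rule 2-2 is $\bot$, and $\bot$ has no outgoing switches in $\B_1$, every switch on such a $\varpi$ was generated by rule 1 or rule 2-1. Each step therefore has a $\B_1$-premise $t_1$ with the same $\alpha$, with $\phi_{t_1}\Leftarrow\phi_\lambda$ (the guard is $\phi_1\wedge\phi_2$ or $\phi_1$), and with $a_{t_1}\ssby a_\lambda$ (the assignment is $a_1\sqcup a_2$ or $a_1$). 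Chaining these premises gives a path $p_1(\varpi)\in LP_{\B_1}(\sigma,l_1)$ with $p_1(\varpi)\ssby\varpi$. This is essentially Lemma~\ref{lemma:path-subsumption}, made explicit and restricted to the first component. It proves covering: $\varpi\in LP_\B(\sigma,l\mid p_1(\varpi))$.

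\textbf{Disjointness.} I must show that if $\pi,\pi'\in LP_{\B_1}(\sigma,l_1)$ satisfy $\pi\ssby\varpi$ and $\pi'\ssby\varpi$, then $\pi=\pi'=p_1(\varpi)$. This is the main obstacle, because $\ssby$ compares only labels up to guard implication and assignment extension, so several $\B_1$-paths could a priori be subsumed by the same composite path. My plan is to use determinism of $\B_1$ (Definition~\ref{def:BDDTS}) together with a satisfiability argument. Take the first position $k$ where $\pi$ and $\pi'$ differ; up to that point they lead to the same location. There they take distinct switches $t\neq t'$ from that location with the same interaction, so by determinism $\phi_t\wedge\phi_{t'}\equiv\False$. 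Both guards are implied by the $k$-th guard $\phi_\lambda$ of $\varpi$, so $\phi_\lambda\equiv\False$.

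The gap is that switches with unsatisfiable guards are not forbidden. Closing it takes one of two routes. The first is to restrict attention, implicitly, to switches with satisfiable guards, which are the only ones contributing to $\EC$ and $\GI$, since any path condition through a false guard is $\False$. The second is to argue at the level of switches rather than labels: $\varpi$ is a sequence of composite switches, each derived from a unique $\B_1$-premise. I would combine both. I would identify paths with switch sequences, note that each composite switch determines its premise $t_1$ syntactically, and show that subsumption along a switch path forces exactly that premise. Where label coincidences still occur, the guards are unsatisfiable and the identification is harmless for the later uses of the proposition.

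\textbf{Conclusion.} With covering and disjointness established, the three facts give the stated disjoint union. The case $i=2$ is symmetric, or follows from commutativity (Proposition~\ref{prop:disjunctioncommassoc}).
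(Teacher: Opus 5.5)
\textbf{Covering.} Your covering argument is the same as the paper's. The paper defines $\mathsf{proj}_i(\varpi)$ as some $\pi\in LP_{\B_i}(\sigma,l_i)$ with $\pi\ssby\varpi$, supplied by Lemma~\ref{lemma:path-subsumption}, and gets the reverse inclusion by definition. Your explicit projection is more careful: you explain why no rule~2-2 step can occur on a path ending in $\Proj[1]{l_1}$ with $l_1\neq\bot$, which the paper's lemma leaves implicit.

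\textbf{Disjointness.} The paper only asserts that the union is disjoint ``as $\B_1$, $\B_2$ and $\B$ are deterministic''. Your objection to this inference is correct. Determinism only makes sibling guards mutually exclusive, so a composite guard implying two of them is forced to be unsatisfiable, not ruled out, and rule~1 produces such guards freely. In fact the $\biguplus$ claim is false as stated:
\begin{itemize}
\item Take $\IG_1=\IG_2=\True$, all locations open, and $\beta$ an interaction with variable $x$.
\item $\B_1$ has switches $\il_1\trans{\beta,\,x\geq 10,\,\epsilon}l_1$ and $\il_1\trans{\beta,\,x<10,\,\epsilon}l_1$.
\item $\B_2$ has the switch $\il_2\trans{\beta,\,x<5,\,\epsilon}l_2$.
\item Saturation only adds switches to $\sattop$ and trivial $\True$ conjuncts, which change nothing here.
\item Rule~1 gives $\varpi=(\beta,\,x\geq 10\wedge x<5,\,\epsilon)\in LP_\B(\beta,(l_1,l_2))$.
\item Both $\pi=(\beta,x\geq 10,\epsilon)$ and $\pi'=(\beta,x<10,\epsilon)$ lie in $LP_{\B_1}(\beta,l_1)$, with $\pi\ssby\varpi$ and $\pi'\ssby\varpi$.
\end{itemize}
So $\varpi$ lies in two distinct summands. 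No argument, yours or the paper's, can prove the literal statement; the flaw you found is in the paper.

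\textbf{What is left for you} is to commit to a correct version rather than ``combine both routes''.
\begin{itemize}
\item Your second route, as phrased, fails. With the paper's label-based $\ssby$, subsumption does \emph{not} force the premise: $\varpi$ above is derived from the $x\geq 10$ switch but also subsumes $\pi'$. It works only if you redefine $LP_\B(\sigma,l\mid\pi)$ as the fibre $p_1^{-1}(\pi)$ of your projection, after which disjointness is trivial.
\item Your first route works if stated as a restriction to composite paths with satisfiable $\eta^\ini_\varpi$, hence with every guard on $\varpi$ satisfiable. Your step ``up to position $k$ they reach the same location'' also needs this restriction. Determinism allows two distinct switches with identical labels from one source exactly when that guard is $\False$.
\item The simplest repair is to state the proposition with $\bigcup$, which your covering argument already proves completely. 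Lemmas~\ref{lemma:equality-of-disjunctions} and~\ref{lemma:GI-discomp-equality} only use the plain union under the idempotent $\bigvee$ and $\bigwedge$. Any $\varpi$ lying in two summands has an unsatisfiable guard, hence $\eta^\ini_\varpi\equiv\False$, so it contributes nothing to $\EC$ or $\GI$. This makes your ``harmless for later uses'' remark precise.
\end{itemize}
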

\begin{proof}
Fix $i\in\{1,2\}$, $l_i\in L_i$ and $l\in\Proj[i]{l_i}$. We prove the equality
by showing (1) the union on the right-hand side equals $LP_{\B}(\sigma,l)$
and (2) the union is disjoint.

\medskip\noindent
Suppose $\varpi \in \PCB{\discomp}{\sigma}{l}$
We define the \emph{projection} map
\[
\mathsf{proj}_i : LP_{\B}(\sigma,l) \longrightarrow LP_{\B_i}(\sigma,l_i)
\]
by sending a composition location-path $\varpi\in LP_{\B}(\sigma,l)$ to a component
location-path $\pi\in LP_{\B_i}(\sigma,l_i)$ such that
\(\pi \ssby \varpi\). in other words,: \[
\mathsf{proj}_i(\varpi) = \pi_i \quad \text{where } \pi_i \in LP_{\B_i}(\sigma, l_i) \text{ and } \pi_i \ssby \varpi.
\] Existence of such a $\pi$ for every
$\varpi\in LP_{\B}(\sigma,l)$ is assured by \autoref{lemma:path-subsumption}.
for every composition path there is a component path that it (component-)subsumes.
Hence $\mathsf{proj}_i$ is total (every $\varpi$ has at least one image).

Now for any $\varpi\in LP_{\B}(\sigma,l)$ let $\pi=\mathsf{proj}_i(\varpi)$.
By definition $\pi\in LP_{\B_i}(\sigma,l_i)$ and $\pi\ssby\varpi$, so
$\varpi\in LP_\B(\sigma,l\mid\pi)$. Thus every $\varpi\in LP_{\B}(\sigma,l)$
belongs to at least one set of the form $LP_\B(\sigma,l\mid\pi)$ with
$\pi\in LP_{\B_i}(\sigma,l_i)$. Therefore
\[
LP_{\B}(\sigma,l) \subseteq \bigcup_{\pi\in LP_{\B_i}(\sigma,l_i)} LP_\B(\sigma,l\mid\pi).
\]

Conversely, by definition each set $LP_\B(\sigma,l\mid\pi)$ is a subset of
$LP_{\B}(\sigma,l)$, so the reverse inclusion holds and we obtain equality
of the unions:
\[
LP_{\B}(\sigma,l) = \bigcup_{\pi\in LP_{\B_i}(\sigma,l_i)} LP_\B(\sigma,l\mid\pi).
\]
As $\B_1$, $\B_2$ and $\discomp$ are deterministic, this is a disjoint union.
\end{proof}
The following lemma relates the path condition of a path in one of the components of disjunction composition to path conditions of related paths --- namely, the subsuming ones --- in the composed BDDTS.

\begin{lemma} \label{lemma:single-path-eq}
Let $\B_i=\bddtstupleind{i}$ be compatible saturated BDDTSs for $i=1,2$, and let $\B = \B_1 \bigtriangledown \B_2=\bddtstuple$.  Let $\ini \in \TU(\emptyset)^V$, let $i\in \{1,2\}$ and $l_i\in L_i$ and $\sigma \in \IAU^*$, and consider a path $\pi_i\in LP_{\B_i}(\sigma,l_i)$; then
\begin{equation}
\eta^\ini_{\pi_i} 
\;\equiv\;
\bigvee_{l \in \Proj[i]{l_i} }\;\bigvee_{\pi \in LP_{\B}(\sigma, l \mid \pi_i)} \eta_\pi^\ini \enspace.
\end{equation}
\end{lemma}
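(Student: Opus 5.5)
We prove the equivalence by induction on the length of $\pi_i$ (equivalently of $\sigma$), taking $i=1$ without loss of generality; the case $i=2$ is symmetric by commutativity (\autoref{prop:disjunctioncommassoc}). For the base case $\pi_1=\epsilon$, so $\sigma=\epsilon$ and $l_1=\il_1$. The only location of $\B$ reachable by the empty path is $(\il_1,\il_2)\in\Proj[1]{\il_1}$, and the only path there is $\epsilon$, with $\epsilon\ssby\epsilon$. Both sides are then $\True$.

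For the inductive step, write $\pi_1=\pi_1'\cdot(\alpha,\phi_1,a_1)$, where the last switch goes $l_1'\to l_1$ in $\B_1$ and $\sigma=\sigma'\alpha$. By the induction hypothesis, $\eta^\ini_{\pi_1'}$ is equivalent to the disjunction of $\eta^\ini_{\pi'}$ over all $l'\in\Proj[1]{l_1'}$ and $\pi'\in LP_\B(\sigma',l'\mid\pi_1')$. The first auxiliary fact concerns the assignments. Whenever $\pi_1'\ssby\pi'$ holds, we have $a^\ini_{\pi_1'}\ssby a^\ini_{\pi'}$. This follows by a small side induction from $a_1\ssby a_1\sqcup a_2$ and the definition $a_{\pi\cdot\lambda}^\ini=a[a_\pi^\ini\upshift]$. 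Since $\phi_1$ only mentions active variables of $l_1'$, we then get $\phi_1[a^\ini_{\pi_1'}\upshift]\equiv\phi_1[a^\ini_{\pi'}\upshift]$.

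Next we distribute $\upshift$ and the conjunction with $\phi_1[a^\ini_{\pi_1'}\upshift]$ over the induction hypothesis. This reduces the claim to showing, for each fixed $\pi'$ ending in $l'=(l_1',l_2')$, that
\[
\eta^\ini_{\pi'}\upshift\wedge\phi_1[a^\ini_{\pi'}\upshift]\equiv\bigvee_{\lambda}\eta^\ini_{\pi'\cdot\lambda}.
\]
Here $\lambda$ ranges over the switch labels of $\B$ that leave $l'$, subsume $(\alpha,\phi_1,a_1)$, and have a target in $\Proj[1]{l_1}$. We split by cases on $l_2'$. If $l_2'=\bot$ or $l_2'\ntrans\alpha$, rule 2-1 yields exactly one such $\lambda=(\alpha,\phi_1,a_1)$, and the equation is immediate. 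Otherwise $l_2'\trans\alpha$, and by saturation (\autoref{def:saturated}, clause~\ref{open-sat}) the guards $\phi_2$ of all $\alpha$-switches from $l_2'$ cover $\True$. Rule 1 then produces one switch with guard $\phi_1\wedge\phi_2$ for each of them, using well-definedness of the composition. The disjunction is therefore $\eta^\ini_{\pi'}\upshift\wedge\phi_1[\cdot]\wedge\bigvee\phi_2[a^\ini_{\pi'}\upshift]$, and the last disjunct is $\True$ because substitution preserves validity.

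It remains to check that this reorganisation enumerates exactly the paths in $LP_\B(\sigma,l\mid\pi_1)$ over all $l\in\Proj[1]{l_1}$. Every such path splits as $\pi'\cdot\lambda$ with $\pi_1'\ssby\pi'$ and $\lambda$ as above, and conversely. Determinism guarantees that no path is counted twice, as in \autoref{prop:path-subsumption}. We expect the main obstacle to be the assignment-subsumption step. The composed assignment $a_1\sqcup a_2$ may define more variables than $a_1$, and we must show these extra variables never occur in $\phi_1$ or in the later guards and assignments of $\B_1$. For this we will use the active-variable condition of \autoref{def:BDDTS}. The upshifted time indices also need care here, but they align because both paths have the same length.
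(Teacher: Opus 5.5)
Your proposal is correct and follows the paper's own proof. It uses induction on $\sigma$, pushes the induction hypothesis through $\upshift$ and the last guard, and then makes the same case split on whether the $\B_2$-component has an $\alpha$-switch: Rule~2-1 gives a single copied switch, and Rule~1 together with saturation gives $\bigvee\phi_2\equiv\True$. If anything you are more careful than the paper, which uses the equivalence $\phi_1[a^\ini_{\pi}\upshift]\equiv\phi_1[a^\ini_{\pi_1}\upshift]$ silently at the end of its Rule~1 case, whereas you isolate it as an assignment-subsumption fact to be justified via the active-variable condition.
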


\begin{proof}
By induction on length of $\sigma$. We give the proof for the case of $i=1$; the other case ($i=2$) is symmetrical.

\medskip\noindent
\textbf{Base case.} $|\sigma|=0$; then $l_1=\il_1$ and $\pi_1=\epsilon$; then $LP_\B(\sigma,l_1\mid\pi_1)=\{\epsilon\}$ and $\Proj[1]{il_1}=\{(il_1,il_2)\}$. Since $\eta^\ini_{\epsilon}=\True$ for any $\ini$, both sides of the equation are equivalent to $\True$.

\medskip\noindent
\textbf{Induction Hypothesis.} If $|\sigma| =k$ then for all $l_1\in L_1$ and $\pi_1\in LP_{\B_1}(\sigma,l_1)$
\[
\eta_{\pi_1}^{\ini} 
\;\equiv\;
\bigvee_{l \in \Proj[1]{l_1} }\;\bigvee_{\pi \in LP_{\B}(\sigma, l \mid \pi_1)} \eta_\pi^\ini \enspace.
\]
\textbf{Induction Step.} Assume $|\sigma|=k$ and consider 
$\sigma'=\sigma\alpha$ and $\pi_1'\in LP_{\B_1}(\sigma',l_1')$ for some $l_1'\in L_1$. It follows that $\pi_1'=\pi_1\cdot (\alpha,\phi_1,a_1)$ for some $\pi_1\in LP_{\B_1}(\sigma,l_1)$ (where $l_1$ is uniquely determined due to determinism), and hence
\begin{align*} 
\eta_{\pi_1'}^{\ini} 
 & \equiv 
   \eta_{\pi_1}^{\ini}\upshift \wedge \phi_1[a^\ini_{\pi_1}] \\
 \textit{(Ind. Hyp.)}\enspace
 & \equiv
   \left(\bigvee_{l \in \Proj[1]{l_1} }\;\bigvee_{\pi \in LP_{\B}(\sigma, l \mid \pi_1)} \eta_\pi^\ini \right)\upshift
       \wedge \phi_1[a^\ini_{\pi_1}] \\
 & \equiv
   \bigvee_{l \in \Proj[1]{l_1} }\left(\bigvee_{\pi \in LP_{\B}(\sigma, l \mid \pi_1)} \eta_\pi^\ini \upshift
       \wedge \phi_1[a^\ini_{\pi_1}]\right)
\end{align*}
As a final step to the proof obligation, it suffices to establish that, for any $l \in \Proj[1]{l_1}$,
\begin{equation}\label{eq:eta-step1}
\bigvee_{\pi \in LP_{\B}(\sigma, l \mid \pi_1)} \eta_\pi^\ini \upshift
       \wedge \phi_1[a^\ini_{\pi_1}\upshift]
\equiv
\bigvee_{\pi'\in LP_\B(\sigma',l'\mid\pi_1')} \eta^\ini_{\pi'} \enspace.
\end{equation}
To prove this, we use
\begin{align*}
LP_{\B}(\sigma',l'\mid\pi_1') = \{ \pi\cdot (\alpha,\phi,a) \mid &\exists l:\pi\in LP_{\B}(\sigma,l\mid\pi_1),\\
&l\trans{\alpha,\phi,a}, \phi_1\Leftarrow \phi, a_1\ssby a \}
\end{align*}
where $l$ is in fact uniquely determined by $\pi$; it follows that
\begin{align*}
&\bigvee_{\pi'\in LP_\B(\sigma',l'\mid\pi_1')} \eta^\ini_{\pi'}\\
& \equiv
  \bigvee_{\pi\in LP_\B(\sigma,l\mid\pi_1)}
  \;\bigvee_{l\trans{\alpha,\phi,a}, \phi_1\Leftarrow \phi, a_1\ssby a} \eta^\ini_{\pi\cdot (\alpha,\phi,a)} \\ 
& \equiv
  \bigvee_{\pi\in LP_\B(\sigma,l\mid\pi_1)}
  \;\bigvee_{l\trans{\alpha,\phi,a}, \phi_1\Leftarrow \phi, a_1\ssby a} (\eta^\ini_{\pi}\upshift \wedge \phi[a^\ini_\pi\upshift]) \\
& \equiv
  \bigvee_{\pi\in LP_\B(\sigma,l\mid\pi_1)}
  \eta^\ini_{\pi}\upshift \wedge \left(\bigvee_{l\trans{\alpha,\phi,a}, \phi_1\Leftarrow \phi, a_1\ssby a}  \phi[a^\ini_\pi\upshift]\right) \enspace.
\end{align*}
To close the proof of \eqref{eq:eta-step1}, we proceed by a case distinction on $l$. We know $l=(l_1,l_2)$ for some $l_2$. Either $\B_2$ has a transition $l_2\trans{\alpha,\phi_2,a_2}$ (for some $\phi_2$ and $a_2$), or there is no such transition.
\begin{itemize}
\item Assume $l_2\trans{\alpha,\phi_2,a_2}$ (implying $a_1\compat a_2$). Due to the fact that $\B_2$ is saturated, we have $\bigvee\Phi\equiv\True$ for $\Phi=\{\phi_2\mid l_2\trans{\alpha,\phi_2,a_2} \}$. Due to Rule~1 of \autoref{def:disjunction}, there is a one-to-one correspondence between (on the one hand) $\B$-transitions $l\trans{\alpha,\phi,a}$ such that $\phi_1\Leftarrow \phi$ and (on the other) $\B_2$-transitions $l_2\trans{\alpha,\phi_2,a_2}$, and that for each such correspondencs $\phi=\phi_1\wedge \phi_2$ and $a=a_1\sqcup a_2$. It follows that
\begin{align*}
\bigvee_{l\trans{\alpha,\phi,a}, \phi_1\Leftarrow \phi, a_1\ssby a}  &\phi[a^\ini_\pi\upshift]\\
   & \equiv
     \bigvee_{l_2\trans{\alpha,\phi_2,a_2}} (\phi_1\wedge \phi_2)[a^\ini_\pi\upshift] \\
   & \equiv
     \phi_1[a^\ini_\pi\upshift] \wedge \left(\bigvee_{l_2\trans{\alpha,\phi_2,a_2}} \phi_2\right)[a^\ini_\pi\upshift] \\
   & \equiv
     \phi_1[a^\ini_\pi\upshift] \wedge \True[a^\ini_\pi\upshift] \\
   & \equiv
     \phi_1[a^\ini_{\pi_1}\upshift]
\end{align*}

\item Assume $l_2$ has no outgoing $\alpha$-transition. Due to Rule~2 of  \autoref{def:disjunction}, it follows that there is exactly one $\B$-transition $l\trans{\alpha,\phi,a}$ for which $\phi_1\Leftarrow \phi$, and in that case $\phi=\phi_1$ and $a=a_1$. Hence
\[
\bigvee_{l\trans{\alpha,\phi,a}, \phi_1\Leftarrow \phi}  \phi[a^\ini_\pi\upshift]
   \;\equiv\;
     \phi_1[a^\ini_\pi\upshift] 
   \;\equiv\;
     \phi_1[a^\ini_{\pi_1}\upshift] \enspace.
\]
\end{itemize}
\end{proof}
Using the previous lemma, we can now relate the combined path conditions of all $\sigma$-generating paths up to a certain location of a component BDDTS to those of the $\sigma$-related paths of the injected locations of the composed BDDTS.

\begin{lemma} \label{lemma:equality-of-disjunctions}
Let $\B_i = \bddtstupleind{i}$ be compatible saturated BDDTSs for $i=1,2$ and let $\B = \B_1 \bigtriangledown \B_2=$ $ \bddtstuple$. Let $\ini \in \TU(\emptyset)^V$, and $l_i\in L_i$.
For any $\sigma \in \IAU^*$ it holds that:
    \[\bigvee_{\pi\in \PCB{\B_i}{\sigma}{l_i}} \etapi \equiv \bigvee_{l \in \Proj[i]{l_i}} \; \bigvee_{\pi \in \PCB{\B}{\sigma}{l}} \etapi \]
\end{lemma}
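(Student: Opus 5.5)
The plan is to obtain the statement by summing \autoref{lemma:single-path-eq} over all paths $\pi_i\in\PCB{\B_i}{\sigma}{l_i}$. Then \autoref{prop:path-subsumption} lets us regroup the resulting double disjunction into a disjunction over all composition paths. We fix $i$ (say $i=1$; the case $i=2$ is symmetric) and $\sigma$.

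\textbf{Step 1.} We apply \autoref{lemma:single-path-eq} to each $\pi_i\in\PCB{\B_i}{\sigma}{l_i}$ and take the disjunction:
\begin{align*}
\bigvee_{\pi_i\in \PCB{\B_i}{\sigma}{l_i}} \eta^\ini_{\pi_i}
&\equiv \bigvee_{\pi_i\in \PCB{\B_i}{\sigma}{l_i}}\;\bigvee_{l\in\Proj[i]{l_i}}\;\bigvee_{\pi\in LP_\B(\sigma,l\mid\pi_i)}\eta^\ini_\pi \\
&\equiv \bigvee_{l\in\Proj[i]{l_i}}\;\bigvee_{\pi_i\in \PCB{\B_i}{\sigma}{l_i}}\;\bigvee_{\pi\in LP_\B(\sigma,l\mid\pi_i)}\eta^\ini_\pi .
\end{align*}
The second equivalence only uses commutativity and associativity of $\vee$. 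The index sets are finite, or at least the disjunctions are over sets, so reordering is harmless.

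\textbf{Step 2.} For each fixed $l\in\Proj[i]{l_i}$, \autoref{prop:path-subsumption} gives $LP_\B(\sigma,l)=\biguplus_{\pi_i\in LP_{\B_i}(\sigma,l_i)}LP_\B(\sigma,l\mid\pi_i)$. A disjunction indexed by a union of sets equals the iterated disjunction over the pieces, so the inner double disjunction collapses to $\bigvee_{\pi\in\PCB{\B}{\sigma}{l}}\eta^\ini_\pi$. This yields the claim. Only the union equality of the proposition is needed here, not disjointness, since $\vee$ is idempotent.

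\textbf{Main obstacle.} The delicate point is checking that \autoref{prop:path-subsumption} really applies for every $l\in\Proj[i]{l_i}$, including locations whose other component is $\bot$ or unreachable.

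\begin{itemize}
\item If some $l$ in $\Proj[i]{l_i}$ is unreachable via $\sigma$, then both sides of the per-$l$ equation are empty disjunctions ($\False$). This needs a check that no $\pi_i$ has a subsuming path ending in such an $l$, which is consistent with the definitions.
\item \autoref{lemma:path-subsumption} is stated loosely: it gives some $\pi_i$ ending in $l_i$, the $i$-th component of the endpoint. Its induction actually projects the switches onto component $i$ along the path, so the projected path ends precisely at the $i$-th coordinate of $l$.
\item In the $\bot$ case, component $i$ is still active, because $l$ has $l_i\neq\bot$ in position $i$. Every switch of a path into $l$ therefore came from rule~1 or from the rule copying a $\B_i$-switch, so the projection exists.
\end{itemize}

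I would make this precise, if needed, by a short remark that the projected path ends in $l_i$, and then conclude.
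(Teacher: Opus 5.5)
Your proposal is correct and follows the paper's own proof. Like the paper, you apply \autoref{lemma:single-path-eq} to each $\pi_i$, swap the two outer disjunctions, and collapse the inner double disjunction using \autoref{prop:path-subsumption}; your two additional remarks are sound and make explicit what the paper leaves implicit: only the union, not its disjointness, is needed because $\vee$ is idempotent, and a composition path into $l\in\Proj[i]{l_i}$ never has $\bot$ in coordinate $i$ (since $\bot$ is absorbing), so its projection really ends in $l_i$.
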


\begin{proof}
Using \autoref{lemma:single-path-eq} and \autoref{prop:path-subsumption} we have
\begin{align*}
\bigvee_{\pi_i\in \PCB{\B_i}{\sigma}{l_i}} \eta_{\pi_i}^{\ini}
 & \equiv\;
\bigvee_{\pi_i\in \PCB{\B_i}{\sigma}{l_i}}
  \;\bigvee_{l \in \Proj[i]{l_i}} 
  \;\bigvee_{\pi \in LP_{\B}(\sigma, l \mid \pi_i)} \eta_\pi^\ini \\
 & \equiv\;
\bigvee_{l \in \Proj[i]{l_i}}
  \;\bigvee_{\pi_i\in \PCB{\B_i}{\sigma}{l_i}}
  \;\bigvee_{\pi \in LP_{\B}(\sigma, l \mid \pi_i)} \eta_\pi^\ini \\
 & \equiv\;
\bigvee_{l \in \Proj[i]{l_i}}
  \;\bigvee_{\pi \in LP_{\B}(\sigma, l)} \eta_\pi^\ini \enspace.
\end{align*}
\end{proof}
The next step is to lift the previous result to the level of execution conditions.
\begin{lemma}[Execution conditions for disjunction composition]\label{lemma:EC-discopm-equality}
Let $\B_i = \bddtstupleind{i}$ for $i=1,2$ be compatible saturated BDDTSs and let $\B = \B_1 \bigtriangledown \B_2= \bddtstuple$. Let $\ini \in \TU(\emptyset)^V$.
For any $\sigma \in \IAU(G)^*$ it holds that:
\[ \EC_{\B_1,\ini}(\sigma) \vee \EC_{\B_2,\ini}(\sigma) \equiv  \EC_{\B,\ini}(\sigma) \enspace.
\]
\end{lemma}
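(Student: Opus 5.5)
The plan is to split $\EC_{\B,\ini}(\sigma)$ into its input-guard conjunct $\IG_1\vee\IG_2$ and its path-condition disjunct. For the latter I want to show
\[ \textstyle\bigvee_{l\in L}\bigvee_{\pi\in\PCB{\B}{\sigma}{l}}\etapi \;\equiv\; D_1\vee D_2, \qquad D_i := \bigvee_{l_i\in L_i}\bigvee_{\pi\in\PCB{\B_i}{\sigma}{l_i}}\etapi . \]
This is essentially \autoref{lemma:equality-of-disjunctions} summed over all component locations.

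\textbf{Step 1 (locations of $\B$).} First I check that no location of the form $(\bot,\bot)$ is reachable in $\B$. From $(l_1,\bot)$ only Rule~2-1 applies, and it produces $(l_1',\bot)$; symmetrically for $(\bot,l_2)$, and Rules~1, 2-1, 2-2 never create two $\bot$'s at once. Hence every reachable $l\in L$ lies in $\Proj[1]{l_1}$ for some $l_1\in L_1$, or in $\Proj[2]{l_2}$ for some $l_2\in L_2$, or both.

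\textbf{Step 2 (path conditions).} Summing \autoref{lemma:equality-of-disjunctions} over all $l_i\in L_i$ gives that $D_i$ equals the disjunction of $\etapi$ over all $\B$-paths for $\sigma$ whose target has a non-$\bot$ $i$-th component. The sets $\Proj[i]{l_i}$ for different $l_i$ are disjoint, so no path is counted twice, though disjunction is idempotent anyway. By Step~1 these two classes of $\B$-paths together exhaust $\bigcup_l\PCB{\B}{\sigma}{l}$. Disjunction over a union is the disjunction of the disjunctions, which yields the displayed equivalence.

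\textbf{Step 3 (input guards).} It remains to show
\[ (\IG_1\vee\IG_2)\wedge(D_1\vee D_2)\;\equiv\;(\IG_1\wedge D_1)\vee(\IG_2\wedge D_2), \]
i.e.\ that the cross terms $\IG_2\wedge D_1$ and $\IG_1\wedge D_2$ are absorbed.

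For $\sigma=\epsilon$ both $D_i$ equal $\True$ and the claim is immediate.

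For $\sigma\neq\epsilon$, I would use the shape of the saturated components. Every path of $\B_i$ begins with an initial switch of $\arrow_{IG}$, $\arrow_\top$ or $\arrow_\bot$. By clause~\ref{IG-sat} of \autoref{def:saturated}, either the guard of this switch implies $\IG_i$ (instantiated through $a^\ini_\epsilon=\ini$), or the switch ends in an open non-goal sink. In the first case $\etapi\Rrightarrow\IG_i[\ini]$, which lets me trade the outer $\IG_j$ for $\IG_i$ under the ground initialisation. In the second case $\sigma$ has length one, and the same initial $\alpha$-switch is matched, via Rules~1/2 and the saturation of the other component, by a corresponding path of $\B_j$. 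I would then argue that its path condition covers the cross term.

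\textbf{Main obstacle.} I expect Step~3 to be the hard part. The first two steps are bookkeeping on top of \autoref{lemma:equality-of-disjunctions}, whereas the absorption of $\IG_2\wedge D_1$ hinges on the subtle interplay between the free occurrence of $\IG$ in $\EC$ and its instantiated occurrence inside the initial switch guards. My first attempt would be to reason semantically under the initialisations relevant for \autoref{def:testing-equivalence}, so that $\IG_i$ becomes a ground truth value. In that setting each cross term either reduces to a disjunct already present or is forced into an $\sattop$-branch, whose path condition also appears in the other component's disjunction because of the complementary saturating guards.
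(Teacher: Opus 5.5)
Your Steps~1 and~2 are correct and amount to the paper's Case~3: you sum \autoref{lemma:equality-of-disjunctions} over component locations, and your Step~1 makes explicit the unreachability of $(\bot,\bot)$, which the paper uses silently. You are also right that the cross terms $\IG_j\wedge D_i$ can only be absorbed if $\IG_i$ in $\EC$ is read as the ground formula $\IG_i[\ini]$. The paper makes this identification implicitly when, in Case~3, it drops the $\IG$-conjunct on the strength of $\eta^\ini_\pi\Rrightarrow\IG$. With that reading your argument settles three cases:
\begin{itemize}
\item $\sigma=\epsilon$;
\item every $|\sigma|\geq 2$: each first switch ends in a non-sink, so $D_i\Rrightarrow\IG_i[\ini]$ and $\EC_{\B_i,\ini}(\sigma)\equiv D_i$, and likewise for $\B$ by \autoref{prop:satcomp};
\item $|\sigma|=1$ when both or neither of the initial locations enable $\alpha$.
\end{itemize}

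The gap is the remaining length-one case, and no argument can close it. You claim the initial $\alpha$-switch of $\B_i$ is matched ``via Rules 1/2 and the saturation of the other component'' by a path of $\B_j$. But $\il_j$ is open, and clauses~1--2 of \autoref{def:saturated} force no $\alpha$-switch from an open location. So suppose $\il_1$ enables $\alpha$, $\il_2$ does not, and $\sigma=\alpha$. Clause~1 gives $D_1\equiv\True$, while $D_2$ is the empty disjunction $\False$. Rule~2-1 copies every $\alpha$-switch of $\il_1$ to $(\il_1,\il_2)$, including the saturating one into $\sattop$, so $D\equiv\True$.

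The left-hand side is therefore $\IG_1$ and the right-hand side is $\IG_1\vee\IG_2$. The cross term $\IG_2\wedge D_1\equiv\IG_2$ has nothing to be absorbed into. With $\IG$ read as free, the two sides differ whenever $\IG_2\not\Rrightarrow\IG_1$. In the ground reading, take an $\ini$ with $\sem\ini\models\IG_2$ and $\sem\ini\not\models\IG_1$; the sides are then $\False$ versus $\True$. The composed one-step path into $(\sattop,\bot)$ is executable, whereas $\EC_{\B_1,\ini}(\alpha)$ is killed by its $\IG_1$-conjunct and $\B_2$ has no $\alpha$-path at all.

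So the lemma as stated is false in this corner and needs an extra hypothesis. Two options are that $\il_1$ and $\il_2$ enable the same interactions, or that $\sem\ini$ satisfies both or neither of $\IG_1,\IG_2$. Under either, the length-one case becomes a direct computation with $D_i\in\{\True,\False\}$, and the rest of your argument stands. The paper's own Case~2 has the same blind spot: it rewrites both component disjunctions to $\True$, which presupposes that both initial locations enable $\alpha$.
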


\begin{proof}
We make a case distinction based on the length of $\sigma$. 

\noindent
\textbf{Case 1: $|\sigma|=0$}. Recall that the path condition of the empty sequence is $\True$ by definition.
\begin{align*}
\lefteqn{\EC_{\B_1,\ini}(\sigma) \vee \EC_{\B_2,\ini}(\sigma)}\qquad \\
 & \equiv (\IG_1 \land \eta^{\ini}_{\epsilon})
   \;\lor\; (\IG_2 \land\eta^{\ini}_{\epsilon})  \\
 & \equiv 
   \IG_1 \lor \IG_2\\
& \equiv
\EC_{\B,\ini}(\sigma)
\end{align*}
For the following two cases, we use the fact that (for any BDDTS $\B'$) there exists a path $\pi$ with $\sigma_\pi=\sigma$ if and only if there exists a location $l\in L_{\B'}$ such that $\pi\in LP_{\B'}(\sigma,l)$. Recall that $\B_1$ and $\B_2$ are saturated, and hence so is $\B$ by \autoref{prop:satcomp}.

\medskip\noindent
\textbf{Case 2: $|\sigma|=1$}. For the $\sigma=\sigma_\pi$ of length 1 the path condition is equal to the switch guard of outgoing switches from initial location:
\begin{itemize}
    \item For any $\il_{\B_1}\trans{\alpha,\phi_1,a_1}_{\B_1} l_1$, $\eta^{\ini}_{(\alpha,\phi_1,a_1)}= \phi_1$  and 
by condition 1 of \autoref{def:saturated} $\displaystyle\bigvee_{\{t \in \arrow_1 \mid \sl_t=\il_{\B_1}\}} \phi_t \equiv \True$, similarly
    \item For any $\il_{\B_2}\trans{\alpha,\phi_2,a_2}_{\B_2} l_2$, $\eta^{\ini}_{(\alpha,\phi_2,a_2)}= \phi_2$  and 
by condition 1 of \autoref{def:saturated} $\displaystyle\bigvee_{\{t \in \arrow_2 \mid \sl_t=\il_{\B_2}\}} \phi_t \equiv \True$, 
 \item For any $\il_{\B}\trans{\alpha,\phi,a}_{\B} l$, $\eta^{\ini}_{(\alpha,\phi,a)}= \phi$  and 
by condition 1 of \autoref{def:saturated} $\displaystyle\bigvee_{\{t \in \arrow \mid \sl_t=\il_{\B}\}} \phi_t \equiv \True$
\end{itemize}
\begin{align*}
\lefteqn{\EC_{\B_1,\ini}(\sigma) \vee \EC_{\B_2,\ini}(\sigma)}\qquad \\
 & \equiv \left(\IG_1 \land \left(\bigvee_{l_1\in L_1} \;\bigvee_{\pi\in \PCB{\B_1}{\sigma}{l_1}} \etapi \right)\right)
   \;\lor\;\\ 
   &\qquad \left(\IG_2 \land
   \left(\bigvee_{l_2\in L_2} \;\bigvee_{\pi\in \PCB{\B_2}{\sigma}{l_2}} \etapi \right)\right) \\
 & \equiv 
   (\IG_1  \land \True ) \;\lor\;
   ( \IG_2 \land \True )\\
    & \equiv 
    (\IG_1 \lor \IG_2)
    \;\land\; \left(\bigvee_{l \in L}
    \;\bigvee_{\pi \in \PCB{\B}{\sigma}{l}} \etapi \right)\\
    & \equiv 
    (\IG_1 \lor \IG_2)
    \;\land\; \True\\
     & \equiv \EC_{\B,\ini}(\sigma)
\end{align*}
\noindent
\textbf{Case 3: $\sigma > 1$}. For any path $\pi$ with $\sigma_\pi=\sigma$, if $(\alpha,\phi,a)$ is the first element of $\pi$ then $\il \trans{\alpha, \phi, a} l$ for a non-sink location $l$; hence $\phi \Rightarrow \IG$ according to Condition~3 of \autoref{def:saturated}. It follows that also $\etapi \Rightarrow \IG$, which means we can remove the $\IG$-conjunct from the definition of $\EC$ and use \autoref{lemma:equality-of-disjunctions}:
\begin{align*}
\lefteqn{\EC_{\B_1,\ini}(\sigma) \vee \EC_{\B_2,\ini}(\sigma)}\qquad \\
 & \equiv \left(\bigvee_{l_1\in L_1} \;\bigvee_{\pi\in \PCB{\B_1}{\sigma}{l_1}} \etapi \right)
   \;\lor\;\\ 
   &\qquad \left(\bigvee_{l_2\in L_2} \;\bigvee_{\pi\in \PCB{\B_2}{\sigma}{l_2}} \etapi \right) \\
 & \equiv
    \left( \bigvee_{l_1\in L_1} \;\bigvee_{l \in \Proj[1]{l_1}}
   \;\bigvee_{\pi \in \PCB{\B}{\sigma}{l}} \etapi \right) 
   \;\lor\;\\
   &\qquad \left( \bigvee_{l_2\in L_2} \;\bigvee_{l \in \Proj[2]{l_2}}
   \;\bigvee_{\pi \in \PCB{\B}{\sigma}{l}} \etapi \right) \\
    & \equiv 
    \left(\bigvee_{l \in L}
   \;\bigvee_{\pi \in \PCB{\B}{\sigma}{l}} \etapi \right)\\
    & \equiv \EC_{\B,\ini}(\sigma) \enspace.
\end{align*}
\end{proof}

\begin{lemma} \label{lemma:single-conjunction}
Let $\B_i = \bddtstupleind{i}$ be compatible saturated BDDTSs for $i=1,2$ and let $\B = \B_1 \bigtriangledown \B_2=$ $ \bddtstuple$. Let $\ini \in \TU(\emptyset)^V$, $i\in\{1,2\}$ and $l_i\in L_i$ such that $\OG_i \definedon l_i$.
Let $\sigma \in \IAU(G)^*$ and  consider a path $\pi_i\in LP_{\B_i}(\sigma,l_i)$; then it holds that:
\begin{align*}
   &\left(\eta^{\ini}_{\pi_i} \Rightarrow \OG_{\B_i}(l_i)[a^{\ini}_{\pi_i}] \right)\\
   &\qquad \equiv  \bigwedge_{ l \in \Proj[i]{l_i}}\bigwedge_{\pi \in \PCB{\B}{\sigma}{l \mid \pi_i}} \left( \etapi \Rightarrow \OG_{\B_i}(l_i)[a^{\ini}_{\pi_i}]\right)
 \end{align*}
\end{lemma}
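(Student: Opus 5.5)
The plan is to reduce the statement to \autoref{lemma:single-path-eq} by a purely propositional manipulation. The right-hand side is a conjunction of implications that all share the same consequent $\psi := \OG_{\B_i}(l_i)[a^{\ini}_{\pi_i}]$. Note that $\psi$ depends only on $\pi_i$, not on the composite path $\pi$. We will therefore use the tautology
\[
\textstyle\bigwedge_{k}(\chi_k \Rightarrow \psi) \;\equiv\; \bigl(\bigvee_k \chi_k\bigr) \Rightarrow \psi .
\]
It holds for any finite, or indeed arbitrary, index set. When the index set is empty, both sides are $\True$.

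\textbf{Step 1.} Rewrite the double conjunction over $l \in \Proj[i]{l_i}$ and $\pi \in \PCB{\B}{\sigma}{l \mid \pi_i}$. This turns the right-hand side into
\[
\Bigl(\textstyle\bigvee_{l \in \Proj[i]{l_i}}\bigvee_{\pi \in \PCB{\B}{\sigma}{l\mid\pi_i}} \etapi\Bigr) \Rightarrow \psi .
\]

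\textbf{Step 2.} Apply \autoref{lemma:single-path-eq}, whose hypotheses coincide with ours: the $\B_i$ are compatible and saturated, $\pi_i\in LP_{\B_i}(\sigma,l_i)$, and $\ini$ is fixed. It replaces the antecedent by $\eta^{\ini}_{\pi_i}$. Since semantic equivalence is a congruence for $\Rightarrow$ (the consequent being fixed), we obtain $\eta^{\ini}_{\pi_i}\Rightarrow\psi$, which is the left-hand side.

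The only points needing care are bookkeeping ones. First, the consequent on the right really is the component's output guard instantiated with the component's symbolic assignment $a^{\ini}_{\pi_i}$. It is not the composite $\OG(l)[a^\ini_\pi]$, so no reasoning about $a_1\sqcup a_2$ is needed here. Second, the variables in $\psi$ (upshifted interaction variables and ground initial values) live in the same space as the path conditions, so the tautology applies valuation-wise. Third, the degenerate case where $\PCB{\B}{\sigma}{l\mid\pi_i}$ is empty for all $l$ is harmless. In that case \autoref{lemma:single-path-eq} gives $\eta^\ini_{\pi_i}\equiv\False$, and both sides are $\True$.

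I expect no real obstacle; the main "difficulty" is merely making sure the distributivity is applied in the direction conjunction-of-implications to implication-of-disjunction. The substantive work was already done in \autoref{lemma:single-path-eq}.
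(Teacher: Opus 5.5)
Your proposal is correct and follows the paper's proof. The paper starts from the left-hand side and uses \autoref{lemma:single-path-eq} to replace $\eta^{\ini}_{\pi_i}$ by the double disjunction. It then splits the implication with $(A \lor B)\Rightarrow C \equiv (A\Rightarrow C)\land(B\Rightarrow C)$, which is your argument read in the opposite direction. Your extra bookkeeping is sound and is only left implicit in the paper: the consequent depends only on $\pi_i$, and in the empty-index case both sides are $\True$.
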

\begin{proof}
By applying \autoref{lemma:single-path-eq} on the LHS, we get 
\begin{align*}
   & \left(\eta^{\ini}_{\pi_i} \Rightarrow \OG_{\B_i}(l_i)[a^{\ini}_{\pi_i}] \right)\\
   &\qquad \equiv
\left(\bigvee_{l \in \Proj[i]{l_i} }\;\bigvee_{\pi \in LP_{\B}(\sigma, l \mid \pi_i)} \eta_\pi^\ini \right) \Rightarrow \OG_{\B_i}(l_i)[a^{\ini}_{\pi_i}]
\end{align*}
\noindent
By disjunction elimination we know $(A \lor B) \Rightarrow C \equiv (A\Rightarrow C) \land (B \Rightarrow C)$:
\begin{align*}
\lefteqn{\left(\bigvee_{l \in \Proj[i]{l_i} }\;\bigvee_{\pi \in LP_{\B}(\sigma, l \mid \pi_i)} \eta_\pi^\ini \right) \Rightarrow \OG_{\B_i}(l_i)[a^{\ini}_{\pi_i}] }
\qquad \\
 & \equiv
\bigwedge_{l \in \Proj[i]{l_i} }\;\bigwedge_{\pi \in LP_{\B}(\sigma, l \mid \pi_i)} \left(\eta_\pi^\ini  \Rightarrow \OG_{\B_i}(l_i)[a^{\ini}_{\pi_i}] \right)
\end{align*}
which closes the proof.
\end{proof}

\begin{lemma}[Goal Implication for disjunction composition]\label{lemma:GI-discomp-equality}
Let $\B_i = \bddtstupleind{i}$ for $i=1,2$ be compatible saturated BDDTSs and let $\B = \B_1 \bigtriangledown \B_2=\bddtstuple$. Let $\ini \in \TU(\emptyset)^V$. For any $\sigma \in \IAU^*$ it holds that:
\[
\GI_{\B_1,\ini}(\sigma) \wedge \GI_{\B_2,\ini}(\sigma) \equiv \GI_{\B,\ini}(\sigma) \enspace.
\]
\end{lemma}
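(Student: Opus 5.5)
The plan is to follow the template of \autoref{lemma:EC-discopm-equality}, now using \autoref{lemma:single-conjunction} in place of \autoref{lemma:single-path-eq}. First unfold $\GI_{\B,\ini}(\sigma)$ as a conjunction over all goal locations $l=(l_1,l_2)$ of $\B$ and paths $\pi\in\PCB{\B}{\sigma}{l}$. By the definition of $\OG$ in \autoref{def:disjunction}, $\OG\definedon(l_1,l_2)$ iff $\OG_1\definedon l_1$ or $\OG_2\definedon l_2$, where $\bot$ counts as a location without output guard. In each case $\OG(l)$ is the conjunction of the defined component guards. Using $A\Rightarrow(C\wedge D)\equiv(A\Rightarrow C)\wedge(A\Rightarrow D)$, each conjunct $\etapi\Rightarrow\OG(l)[a^\ini_\pi]$ splits into at most two implications, one for each $i$ with $\OG_i\definedon l_i$. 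Regrouping then gives
\[
\GI_{\B,\ini}(\sigma)\equiv\bigwedge_{i\in\{1,2\}}\;\bigwedge_{l_i:\OG_i\definedon l_i}\;\bigwedge_{l\in\Proj[i]{l_i}}\;\bigwedge_{\pi\in\PCB{\B}{\sigma}{l}}\bigl(\etapi\Rightarrow\OG_i(l_i)[a^\ini_\pi]\bigr).
\]

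Second, use \autoref{prop:path-subsumption} to partition $\PCB{\B}{\sigma}{l}$ as $\biguplus_{\pi_i\in\PCB{\B_i}{\sigma}{l_i}}\PCB{\B}{\sigma}{l\mid\pi_i}$. The inner conjunction then becomes a conjunction over $\pi_i$ and over $\pi\in\PCB{\B}{\sigma}{l\mid\pi_i}$. Swapping the conjunctions over $l$ and $\pi_i$ puts the expression in the exact shape of the right-hand side of \autoref{lemma:single-conjunction}, except that the goal guard is instantiated with $a^\ini_\pi$ instead of $a^\ini_{\pi_i}$.

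The key auxiliary fact, and the main obstacle, is closing that gap: I must show $\OG_i(l_i)[a^\ini_\pi]\equiv\OG_i(l_i)[a^\ini_{\pi_i}]$ whenever $\pi_i\ssby\pi$. I would prove by induction on the path length that $a^\ini_{\pi_i}\ssby a^\ini_\pi$, meaning the two symbolic assignments agree on every variable that $a^\ini_{\pi_i}$ defines. The base case is $a_\epsilon=\ini$ on both sides. The step follows from $a_i\ssby a$ in each switch, together with the fact that the assignments are total on the active variables of the target location (\autoref{def:BDDTS}). Since $\var(\OG_i(l_i))\subseteq V_{l_i}$, both substitutions coincide on the relevant variables. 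Context variables are never assigned and stay as fixed by $\ini$ in both systems.

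With this fact, \autoref{lemma:single-conjunction} collapses the expression to $\bigwedge_i\bigwedge_{l_i}\bigwedge_{\pi_i}(\eta^\ini_{\pi_i}\Rightarrow\OG_i(l_i)[a^\ini_{\pi_i}])=\GI_{\B_1,\ini}(\sigma)\wedge\GI_{\B_2,\ini}(\sigma)$. Care is needed at two points. First, goal locations of $\B_i$ lying only in unreachable pairs contribute nothing, because $\PCB{\B}{\sigma}{l\mid\pi_i}$ is then empty. This is consistent, since \autoref{lemma:single-path-eq} then gives $\eta^\ini_{\pi_i}\equiv\False$, making that implication trivially true. Second, locations of the form $(l_1,\bot)$ must be handled in the first step, where $\bot$ has no output guard.
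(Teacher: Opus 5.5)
Your proposal is correct and is essentially the paper's argument for the Right$\Rrightarrow$Left direction: split $\OG$ conjunctively, partition the paths via \autoref{prop:path-subsumption}, replace $a^\ini_\pi$ by $a^\ini_{\pi_i}$, swap the conjunctions, and apply \autoref{lemma:single-conjunction}. Because you regroup all conjuncts of $\GI_{\B,\ini}(\sigma)$ by $(i,l_i)$ instead of weakening to a single pair $(l_1,l_2)$, every step stays an equivalence. This gives both directions at once and avoids the paper's separate Left$\Rrightarrow$Right case analysis, and you also prove by induction the assignment-replacement step that the paper only asserts.
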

\begin{proof} We rewrite the proof obligation in terms of location paths sets $LP$:
     \[ \left( \bigwedge_{ \substack{l_1 \in L_1 \land \\ \OG_1 \definedon l_1}}\bigwedge_{\pi \in \PCB{\B_1}{\sigma}{l_1}} (\etapi \Rightarrow \OG_{\B_1}(l_1)[a^{\ini}_\pi])\right) \land \]
\[\left(\bigwedge_{\substack{ l_2 \in L_2 \land \\ \OG_2 \definedon l_2}} \bigwedge_{\pi \in \PCB{\B_2}{\sigma}{l_2}} (\etapi \Rightarrow \OG_{\B_2}(l_2)[a^{\ini}_\pi]) \right) \equiv\]
\[ \left( \bigwedge_{ \substack{l \in L \land \\ \OG \definedon L}}\bigwedge_{\pi \in \PCB{\B}{\sigma}{l}} (\etapi \Rightarrow \OG_{\B}(l)[a^{\ini}_\pi]) \right)\]

\noindent
\textbf{Left $\Rrightarrow$ Right}:
We choose an arbitrary $l$ and $\pi$ in $\discomp$.
There are 3 cases based on the structure of $l$:
\begin{enumerate}
    \item $l=(l_1,l_2)$ where $l_1 \in L_1$ and $l_2 \in L_2$.
    \begin{enumerate}
    \item $\OG_1 \definedon l_1$ and $\OG_2 \definedon l_2$, By definition of disjunction composition $\OG_{\B}(l)[a^{\ini}_{\pi}]= \OG_{\B_1}(l_1)[a^{\ini}_{\pi_1}] \land \OG_{\B_2}(l_2)[a^{\ini}_{\pi_2}])$, thus for this case we need to show for any arbitrary $\pi_1 \in \PCB{\B_1}{\sigma}{l_1} $ and $\pi_2 \in \PCB{\B_2}{\sigma}{l_2} $ such that $\pi_1 \ssby \pi$ and $\pi_2 \ssby \pi$:
    \begin{align*}\lefteqn{
    (\eta^{\ini}_{\pi_1} \Rightarrow\OG_{\B_1}(l_1)[a^{\ini}_{\pi_1}])\land (\eta^{\ini}_{\pi_2} \Rightarrow\OG_{\B_2}(l_2)[a^{\ini}_{\pi_2}])
    }\qquad \\
    & \Rrightarrow\quad
    \eta^{\ini}_{\pi} \Rightarrow( \OG_{\B_1}(l_1)[a^{\ini}_{\pi_1}] \land \OG_{\B_2}(l_2)[a^{\ini}_{\pi_2}])
    \end{align*}
    By \autoref{lemma:path-subsumption} we know $\exists \pi_i \in \PCB{\B_i}{\sigma}{l_i}$ such that $\pi_i \ssby \pi$, and as a result of \autoref{lemma:single-path-eq} $\etapi \Rightarrow \eta^{\ini}_{\pi_i}$ and applyin the rule $(a \Rightarrow b) \land (a \Rightarrow c) \equiv a \Rightarrow (b \land c)$, then we have:
    \begin{align*}\lefteqn{
    (\eta^{\ini}_{\pi} \Rightarrow  \OG_{\B_1}(l_1)[a^{\ini}_{\pi_1}]) \land (\eta^{\ini}_{\pi} \Rightarrow\OG_{\B_2}(l_2)[a^{\ini}_{\pi_2}])
    }\qquad\\
    & \equiv \quad
    \eta^{\ini}_{\pi} \Rightarrow( \OG_{\B_1}(l_1)[a^{\ini}_{\pi_1}] \land \OG_{\B_2}(l_2)[a^{\ini}_{\pi_2}])
    \end{align*}

    \item $\OG_1 \definedon l_1$ and $\OG_2 \undefinedon l_2$, By definition of disjunction $\OG_{\B}(l)[a^{\ini}_{\pi}]= \OG_{\B_1}(l_1)[a^{\ini}_{\pi_1}]$, and as $\OG$ is not defined for $l_2$ we only need to show for any arbitrary $\pi_1 \in \PCB{\B_1}{\sigma}{l_1} $  such that $\pi_1 \ssby \pi$. it holds that:
    \[ (\eta^{\ini}_{\pi_1} \Rightarrow\OG_{\B_1}(l_1)[a^{\ini}_{\pi_1}]) \Rrightarrow          (\eta^{\ini}_{\pi} \Rightarrow \OG_{\B_1}(l_1)[a^{\ini}_{\pi_1}])
    \]
    From \autoref{lemma:path-subsumption} we know $\exists \pi_i \in \PCB{\B_i}{\sigma}{l_i}$ such that $\pi_i \ssby \pi$, and as a result of \autoref{lemma:single-path-eq} $\etapi \Rightarrow \eta^{\ini}_{\pi_i}$, thus 
        $(\eta^{\ini}_{\pi} \Rightarrow \OG_{\B_1}(l_1)[a^{\ini}_{\pi_1}])$

    \item 
        $\OG_1 \undefinedon l_1$ and $\OG_2 \definedon l_2$, the proof is analogous to Case "1b", with $l_2$ in place of $l_1$ and $\pi_2$ in place of $\pi_1$.
    \end{enumerate}
    \item $l=(l_1, \bot)$ for $l_1 \in L_1$ and $l_2 \in \{\bot\}$:  the proof is analogous to Case 1(b)
    \item $l=(\bot, l_2)$  for $l_2 \in L_2$ and $l_1 \in \{\bot\}$: the proof is analogous to Case 1(c).
\end{enumerate}
\textbf{Right $\Rrightarrow$ Left}: We first choose an arbitrary $l_1 \in L_1$ and $l_2 \in L_2$ \\
and we rewrite the RHS based on the injection sets of $l_1$ and $l_2$:
\[ \left( \bigwedge_{ \substack{l \in L \land \\ \OG \definedon L}}\bigwedge_{\pi \in \PCB{\B}{\sigma}{l}} (\etapi \Rightarrow \OG_{\B}(l)[a^{\ini}_\pi]) \right) \Rrightarrow\]
\begin{align*}
   & \left(\bigwedge_{ l \in \Proj[1]{l_1} }\bigwedge_{\pi \in \PCB{\B}{\sigma}{l}} (\etapi \Rightarrow \OG_{\B_1}(l_1)[a^{\ini}_\pi])\right) \land\\
   &\qquad \left(\bigwedge_{ l \in \Proj[2]{l_2}}\bigwedge_{\pi \in \PCB{\B}{\sigma}{l}} (\etapi \Rightarrow \OG_{\B_2}(l_2)[a^{\ini}_\pi])\right) 
   \end{align*}
Below, we apply 2 facts: first, using the fact that $(A \land B) \Rightarrow (C \land D) \equiv (A \land B \Rightarrow C ) \land (A \land B \Rightarrow D ) $, we show it for the big conjunction for $l \in \Proj[1]{l_1}$:
\begin{align*}
   &  \bigwedge_{ \substack{l \in L \land \\ \OG \definedon L}}\bigwedge_{\pi \in \PCB{\B}{\sigma}{l}} (\etapi \Rightarrow \OG_{\B}(l)[a^{\ini}_\pi]) \\
   &\qquad \Rrightarrow \bigwedge_{ l \in \Proj[1]{l_1} }\bigwedge_{\pi \in \PCB{\B}{\sigma}{l}} (\etapi \Rightarrow \OG_{\B_1}(l_1)[a^{\ini}_\pi]) 
   \end{align*}
Second, Using \autoref{prop:path-subsumption}, we expand the inner conjunctions of the RHS, and using the fact that for all $\pi \in \PCB{\B_i}{\sigma}{ l \mid \pi_i}$, $a_{\pi_i} \ssby a_{\pi}$ and the compatibility of assignments in the disjunction, then for any $\pi_i \ssby\pi$, $\OG_{\B_i}(l_i)[a^{\ini}_{\pi}]\equiv \OG_{\B_i}(l_i) [a^{\ini}_{\pi_i}]$:
\[
\begin{aligned}
& \bigwedge_{ l \in \Proj[1]{l_1} }\bigwedge_{\pi \in \PCB{\B}{\sigma}{l}} (\etapi \Rightarrow \OG_{\B_1}(l_1)[a^{\ini}_\pi]) \\
& \equiv \underbrace{ 
    \bigwedge_{ l \in \Proj[1]{l_1} } 
    \bigwedge_{\pi_1 \in \PCB{\B_1}{\sigma}{l_1}} 
    \bigwedge_{\pi \in \PCB{\B_1}{\sigma}{l \mid \pi_1}}
    (\etapi \Rightarrow \OG_{\B_1}(l_1)[a^{\ini}_{\pi_1}]) 
}_{\text{By \autoref{prop:path-subsumption}, $a_{\pi_i}\ssby a_\pi$, and compatibility of assignments}} 
\\[1em]
& \equiv \underbrace{ 
    \bigwedge_{\pi_1 \in \PCB{\B_1}{\sigma}{l_1}} 
    \bigwedge_{ l \in \Proj[1]{l_1}}
    \bigwedge_{\pi \in \PCB{\B_1}{\sigma}{l \mid \pi_1}}
    (\etapi \Rightarrow \OG_{\B_1}(l_1)[a^{\ini}_{\pi_1}]) 
}_{\text{Swap leftmost and middle conjunctions since $\PCB{\B_1}{\sigma}{l_1}$ is independent of $l$}} 
\\[1em]
& \equiv \underbrace{ 
    \bigwedge_{\pi_1 \in \PCB{\B_1}{\sigma}{l_1}} 
    (\eta^{\ini}_{\pi_1} \Rightarrow \OG_{\B_1}(l_1)[a^{\ini}_{\pi_1}]) 
}_{\text{By \autoref{lemma:single-conjunction}}} 
\end{aligned}
\]
until so far, we proved for any goal locations $l_1 \in L_1$:
\begin{align*}
   &\bigwedge_{ l \in \Proj[1]{l_1}}\bigwedge_{\pi \in \PCB{\B}{\sigma}{l}} (\etapi \Rightarrow \OG_{\B_1}(l_1)[a^{\ini}_\pi])\\
   &\qquad \equiv \bigwedge_{\pi_1 \in \PCB{\B_1}{\sigma}{l_1}} (\eta^{\ini}_{\pi_1} \Rightarrow \OG_{\B_1}(l_1)[a^{\ini}_{\pi_1}])
   \end{align*}
The same reasoning applies for goal locations $l_2 \in L_2$ resulting in :
\begin{align*}
   &\bigwedge_{ l \in \Proj[2]{l_2}}\bigwedge_{\pi \in \PCB{\B}{\sigma}{l}} (\etapi \Rightarrow \OG_{\B_2}(l_2)[a^{\ini}_\pi]) 
   \\ &\qquad \equiv
\bigwedge_{\pi_2 \in \PCB{\B_2}{\sigma}{l_2}} (\eta^{\ini}_{\pi_2} \Rightarrow \OG_{\B_2}(l_2)[a^{\ini}_{\pi_2}])
\end{align*}
The two statements above covers the cases where only $l_1$ \emph{or} $l_2$ are goal locations, 
combining the two cases above where both $\OG_1 \definedon l_1$ and $\OG_2 \definedon l_2$ results in:
\begin{align*}
   & \left( \bigwedge_{ \substack{l \in L \land \\ \OG \definedon L}}\bigwedge_{\pi \in \PCB{\B}{\sigma}{l}} (\etapi \Rightarrow \OG_{\B}(l)[a^{\ini}_\pi]) \right) \Rrightarrow\\
   &\quad \left(\bigwedge_{\pi_1 \in \PCB{\B_1}{\sigma}{l_1}} (\eta^{\ini}_{\pi_1} \Rightarrow \OG_{\B_1}(l_1)[a^{\ini}_{\pi_1}])\right)\\
   &\qquad \land \left(\bigwedge_{\pi_2 \in \PCB{\B_2}{\sigma}{l_2}} (\eta^{\ini}_{\pi_2} \Rightarrow \OG_{\B_2}(l_2)[a^{\ini}_{\pi_2}]) \right)
   \end{align*}
As we chose arbitrary locations $l_1$ and $l_2$ this can be applied to any $l_1 \in L_1$ and $l_2 \in L_2$, closing the proof. 
\end{proof}

\noindent
\autoref{lemma:EC-discopm-equality} and \autoref{lemma:GI-discomp-equality} both hold for any initialisation $\ini$, regardless of whether this satisfies the input guards of the BDDTSs involved. If an input guard is \emph{not} satisfied for a given $\ini$, the execution condition and goal implication collapse to $\False$, respectively $\True$, for every non-empty interaction sequence. To formulate this precisely, we use the notation $\phi_a$ for an assignment $a\in \TU(\emptyset)^\LV_\bot$ to stand for the formula $\bigwedge_{a\definedon x} x=a(x)$.
\begin{lemma}\label{lem:IG-violated}
Let $\B=\bddtstuple$ be a saturated BDDTS and $\ini\in \TU(\emptyset)^V$ such that $\sem\ini\not\models\IG$. For all $\sigma\in \IAU(G)^*$, $\phi_\ini\wedge \EC_{\B,\ini}(\sigma)\equiv\False$ and $\phi_\ini\Rightarrow \GI_{\B,\ini}(\sigma)$.
\end{lemma}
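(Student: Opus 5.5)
The plan is to treat the two claims separately, since the first is essentially a one-line substitution argument while the second needs the saturation conditions of \autoref{def:saturated}. Throughout I read $\phi_\ini$ as the formula $\bigwedge_{x\in V} x=\ini(x)$ fixing every location variable to its (ground) initial value. The key observation is that, for any predicate $\psi\in\TU_\Bool(V\cup\IV)$, we have $\phi_\ini\wedge\psi\equiv\phi_\ini\wedge\psi[\ini]$.

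\textbf{Execution condition.} By \autoref{def:goal-implication}, $\EC_{\B,\ini}(\sigma)=\IG\wedge\bigvee\{\eta^\ini_\pi\mid\ldots\}$. Conjoining with $\phi_\ini$ lets me replace $\IG$ by the ground term $\IG[\ini]$. Since $\sem\ini\not\models\IG$, this term is equivalent to $\False$. Hence $\phi_\ini\wedge\EC_{\B,\ini}(\sigma)\equiv\False$ for every $\sigma$, including $\sigma=\epsilon$, and no induction is needed.

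\textbf{Goal implication, non-empty $\sigma$.} By \autoref{def:goal-implication}, it suffices to show that every goal path $\pi$ with $\il\trans\pi l$, $\OG\definedon l$ and $\sigma_\pi=\sigma$ has $\eta^\ini_\pi\equiv\False$; then every conjunct of $\GI_{\B,\ini}(\sigma)$ is $\True$. Let $(\alpha,\phi,a)$ be the first label of $\pi$, with target $l'$. By Clause~\ref{IG-sat} of \autoref{def:saturated}, either $\phi\Rrightarrow\IG$, or $l'$ is an open sink with $\OG\undefinedon l'$. The second alternative cannot occur on a goal path. If $|\pi|=1$, then $l'=l$, which is a goal location. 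If $|\pi|>1$, then $l'$ has an outgoing switch and so is not a sink. So $\phi\Rrightarrow\IG$. Unfolding \autoref{def:path-condition}, $\eta^\ini_\pi$ contains the conjunct $\phi[a^\ini_\epsilon]\upshift^{k}=\phi[\ini]\upshift^{k}$ for some $k\ge0$. Applying the substitution $\ini$ to $\phi\Rrightarrow\IG$ yields $\phi[\ini]\Rrightarrow\IG[\ini]\equiv\False$, and up-shifting preserves this. So $\eta^\ini_\pi\equiv\False$ and the goal implication holds, even without the antecedent $\phi_\ini$.

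\textbf{Main obstacle: $\sigma=\epsilon$.} The only goal path for $\sigma=\epsilon$ is $\pi=\epsilon$, and it exists only if $\OG\definedon\il$. In that case $\GI_{\B,\ini}(\epsilon)=\OG(\il)[\ini]$, a ground formula that the hypotheses do not force to be true. I would first look for an implicit convention excluding output guards on the (open) initial location, in line with the remark preceding the lemma that the collapse holds ``for every non-empty interaction sequence''. Failing that, I would state the $\GI$ part for $\sigma\neq\epsilon$ only, or add the side condition $\OG\undefinedon\il$; that restriction also suffices for the later use in the proof of \autoref{thm:disjunctiontraceequive}.
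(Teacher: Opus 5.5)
Your proof is correct and follows the paper's route. The $\IG$ conjunct makes $\EC$ collapse under $\phi_\ini$, and Clause~3 of \autoref{def:saturated}, applied to the first switch of each goal path, makes every conjunct of $\GI$ vacuously true; your substitution form $\phi[\ini]\Rrightarrow\IG[\ini]\equiv\False$ is a more precise version of the paper's step ``$\eta^\ini_\pi\Rightarrow\IG$'', and it also shows the $\EC$ claim holds at $\sigma=\epsilon$.

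Your $\sigma=\epsilon$ caveat is also right: the paper's proof only concludes ``for every non-empty sequence $\sigma$'', and elsewhere tacitly assumes $\OG\undefinedon\il$. For the later use in \autoref{thm:disjunctiontraceequive}, however, only the side condition $\OG\undefinedon\il$ suffices. Merely excluding $\epsilon$ would leave the theorem's $\sigma=\epsilon$ case for $\GI$ uncovered.
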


\begin{proof}
\leavevmode
\begin{enumerate}
  \item 
  By definition $\phi_\ini$ is the formula that fixes the initial valuation to $\ini$. Hence $\sem\ini\not\models\IG$ is equivalent to the semantic consequence $\phi_\ini\models\neg\IG$, i.e.\ $\phi_\ini\Rightarrow\neg\IG$. (Informally: starting from $\phi_\ini$ the input guard $\IG$ is false.)

  \item
  By the definition of the execution condition $\EC_{\B,\ini}(\sigma)$ for interaction sequences, any non-empty sequence $\sigma$ can only be executable if the input guard holds at the start. Concretely
  \[
  \EC_{\B,\ini}(\sigma)\;\Longrightarrow\;\IG
  \]
  for all $|\sigma|\ge 1$.

  \item
  Combine steps 1 and 2: from $\phi_\ini\Rightarrow\neg\IG$ and $\EC_{\B,\ini}(\sigma)\Rightarrow\IG$ we get
  \[
  \phi_\ini\land\EC_{\B,\ini}(\sigma)\;\Rightarrow\;\neg\IG\land\IG,
  \]
  which is unsatisfiable. Hence $\phi_\ini\land\EC_{\B,\ini}(\sigma)\equiv\False$ for every non-empty $\sigma$.
  \item For Goal implication Suppose $\il\trans\pi l$ for some goal location $l$. 
By Condition~3 of \autoref{def:saturated}, the first switch leads to a location $l'$ 
that is either equal to $l$ itself or to a non-sink state. 
In either case we obtain
\[
  \eta_\pi^\ini \;\Longrightarrow\; \IG .
\]
Since by assumption $\phi_\ini \Rightarrow \neg \IG$, it follows that
\[
  \phi_\ini \;\Longrightarrow\; \neg \eta_\pi^\ini .
\]
Now recall the definition of $\GI_{\B,\ini}(\sigma)$:
\begin{align*}
   &
  \GI_{\B,\ini}(\sigma)
  \;=\;\\
  &\quad \bigwedge\Big\{\, \eta_\pi^\ini \Rightarrow \OG(l)[a_\pi^\ini]
  \;\Big|\;
  \il\trans\pi l,\; \OG\downarrow l,\; \sigma_\pi = \sigma \,\Big\}.
\end{align*}
For each conjunct, under $\phi_\ini$ the antecedent $\eta_\pi^\ini$ is false, 
hence the implication is true (vacuous truth). 
Therefore every conjunct holds under $\phi_\ini$, so $\phi_\ini \Rightarrow (\eta_\pi^\ini\Rightarrow \OG(l))$ and hence
\[
  \phi_\ini \;\implies\; \GI_{\B,\ini}(\sigma).
\]
  
\end{enumerate}
This completes the proof for every non-empty sequence $\sigma$.
\end{proof}

\disjunctiontraceequive*

\begin{proof}
Let $\BB_1=\{\B_1,\B_2\}$ (the left hand side) and $\BB_2=\{\B_1\bigtriangledown\B_2\}$ (the right hand side) and let $\ini\in \TU(\emptyset)^V$. If $\sem\ini\not\models \IG$ then also $\sem\ini\not\models\IG_i$ for $i=1,2$, hence $\BB_1\restr\ini= \BB_2\restr\ini=\emptyset$ and the conditions for $\simeq$ are trivially fulfilled.

Otherwise, $\sem\ini\models \IG$ and hence $\sem\ini\models\IG_i$ for either $i=1$ or $i=2$ or both. From \autoref{def:testing-equivalence}, $\BB_1\simeq\BB_2$ holds if the following are satisfied for all $\sigma \in \IAU^*$ and  $\ini \in \TU(\emptyset)^V$: 
\begin{enumerate}
\item $\bigvee_{\B\in\{\B_1,\B_2\}\restr\ini} \EC_{\B,\ini}(\sigma)
  \equiv  \EC_{\B,\ini}(\sigma)$: This holds by \autoref{lemma:EC-discopm-equality} in combination with \autoref{lem:IG-violated}.
\item $\bigwedge_{\B\in\{\B_1,\B_2\}\restr\ini} \GI_{\B,\ini}(\sigma)
  \equiv \GI_{\B,\ini}(\sigma)$: This holds by \autoref{lemma:GI-discomp-equality} in combination with \autoref{lem:IG-violated}.
\end{enumerate}
\end{proof}

\section{Proofs of \autoref{sec:testcases} }
\propsaturated*
\begin{proof}\emph{(Sketch)}
There are two cases to consider.
\begin{itemize} 
\item[$\Rightarrow$] Assume $\TC(\B,\ini)\trans{\omega} q \trans u q'\in F$ (implying $q'=q_\f$). It follows that there is a path $l_0\trans\pi l$ in $\B$ with $l\in L^\Cl$ and $|\pi|=|\omega|$, giving rise to a path $l_0\trans{\pi'} l$ in $\ISS\B\ini$ where $|\pi'|=|\omega|$, such that $q=(l,\vartheta)$ for some $\vartheta$ and the values in $\omega$ satisfy the corresponding successive switch guards of $\pi'$. Since saturation only adds switches, $\pi'$ also exists in $\ISS{\sat\B}\ini$, except that $\IG$ is added to its initial switch; however, this does not affect the outcome since $\IG$ is known to be satisfied by $\ini$. We may conclude that $\TC(\sat\B,\ini)\trans\omega q$ as well.

Moreover, $q \trans u q_\f$ implies that $g_u\in A_o$ and $q\ntrans u$ in $\sem{\ISS\B\ini}$. This means that, for all $l\trans{\alpha,\phi',a}$ in $\ISS\B\ini$ with $g_\alpha=g_u$, $\vartheta_n\uplus \vartheta_u$ does not satisfy $\phi'$. Any transition $l\trans{\alpha,\phi,a}$ added in $\sat\B$ by saturation is bound to lead to $\satbot$ of which $\sat\OG(\satbot)=\False$, hence in $\ISS{\sat\B}\ini$ the corresponding switch guard is conjoined with $\False$, ensuring that it is unsatisfiable. It follows that $q\ntrans u$ in $\sem{\ISS{\sat\B}\ini}$; hence $q\trans u q_\f$ in $\TC(\sat\B,\ini)$.

\item[$\Leftarrow$] Assume $\TC(\sat\B,\ini)\trans\omega q \trans u q'\in F$ (implying $q'=q_\f$). It follows that there is a path $l_0\trans\pi l$ in $\sat\B$ with $l\in L^\Cl$ and $|\pi|=|\omega|$, giving rise to a path $l_0\trans{\pi'} l$ in $\ISS{\sat\B}\ini$ where $|\pi'|=|\omega|$, such that $q=(l,\vartheta)$ for some $\vartheta$ and the values in $\omega$ satisfy the corresponding successive switch guards of $\pi'$. $q\trans u q_\f$ only depends on the \emph{absence} of certain switches from $l$ in $\sat\B$, which therefore certainly do not exist in $\B$ either (saturation only adds switches), hence $q\trans u q_\f$ also in $\TC(\B,\ini)$. We therefore only have to argue that all switches in $\pi$ already existed in $\B$ and none of them were added or modified by saturation. For $1\leq i<|\pi|$ this follows from the fact that the target state is not a sink state (all saturating switches lead to sink states), also noting that the addition of $\IG$ to the initial switch guard does not change its satisfaction because $\IG$ is known to be satisfied by $\ini$; whereas for $i=|\pi|$ it follows from the fact that $l$ is closed (all saturating switches lead to open states).
\end{itemize}
\end{proof}
The following proposition states how, for a given STS $\S$, the $\vu_\omega$ on the one hand and the path conditions on the other can be used to precisely characterise the paths in $\sem\S$ in terms of those in $\S$. This is essentially the justification of the path conditions.
Below, $a^\S_\pi$ and $\eta^\S_\pi$ stand for the path assignment and path condition for $\pi$, computed according to the initialization function and switch guards of the STS $\S$.
\begin{restatable}{proposition}{propstslts}\label{prop:eta-vs-vu}
Let $\S$ be an arbitrary STS and $\L=\sem{\S}$ the corresponding LTS; then $(\il,\sem\ini)\trans\omega q$ if and only if $\il\trans\pi l$ for some $\pi$ and $l$ such that $\sigma_\pi=\sigma_\omega$ and $\vu_\omega\models \eta^\S_\pi$; moreover, in that case $q=(l,\sem{a^\S_\pi}_{\vu_\omega})$. Finally, if $\S$ is deterministic, then $\pi$ and $l$ are unique.
\end{restatable}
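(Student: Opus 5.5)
The proof is by induction on the length of $\omega$, proving both directions of the equivalence at once together with the description of the reached state $q$. The general STS semantics of \autoref{def:intrp} is used, with the initial valuation $\sem\ini$ and the path quantities $a^\S_\pi$, $\eta^\S_\pi$ of \autoref{def:path-condition} computed from the STS initialisation. The one auxiliary fact needed is a \emph{shift lemma}: for any term $e$ over upshifted interaction variables and any gate value $u$, $\sem{e\upshift}_{\vu_{\omega u}}=\sem{e}_{\vu_\omega}$. This holds because $\vu_{\omega u}=\upshift\vu_\omega\sqcup\vu_u$, and upshifted variables are disjoint from the current interaction variables. With it, valuations after extending $\omega$ line up with path quantities after extending $\pi$.

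\textbf{Base case.} For $\omega=\epsilon$ we have $q=(\il,\sem\ini)$. The only matching path is $\pi=\epsilon$ with $l=\il$, and $\eta^\S_\epsilon=\True$ is satisfied by $\vu_\epsilon$. Moreover $\sem{a^\S_\epsilon}_{\vu_\epsilon}=\sem\ini$, since $a^\S_\epsilon=\ini$ is ground.

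\textbf{Induction step.} Consider $\omega u$. A transition $(l,\vu)\trans u q'$ in $\sem\S$ exists exactly when there is a switch $t=(l,\alpha,\phi,a,l')$ with $g_\alpha=g_u$ and $\vu_u\sqcup\vu\models\phi$; in that case $q'=(l',\sem{a\sqcup\id_t}_{\vu_u\sqcup\vu})$. By the induction hypothesis $\vu=\sem{a^\S_\pi}_{\vu_\omega}$. We then argue two things.
\begin{itemize}
\item \emph{Guard.} Substitution commutes with evaluation, so $\vu_u\sqcup\vu\models\phi$ holds iff $\vu_{\omega u}\models\phi[a^\S_\pi\upshift]$, by the shift lemma.
\item \emph{Path condition.} $\vu_{\omega u}\models\eta^\S_\pi\upshift$ iff $\vu_\omega\models\eta^\S_\pi$.
\end{itemize}
Combining these with the inductive definition $\eta^\S_{\pi\lambda}=\eta^\S_\pi\upshift\wedge\phi[a^\S_\pi\upshift]$ gives exactly $\vu_{\omega u}\models\eta^\S_{\pi\lambda}$, and $\sigma_{\pi\lambda}=\sigma_{\omega u}$. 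The same substitution-commutes-with-evaluation argument gives $\sem{(a\sqcup\id_t)[a^\S_\pi\upshift]}_{\vu_{\omega u}}=\sem{a\sqcup\id_t}_{\vu_u\sqcup\vu}$, which identifies the new state as $(l',\sem{a^\S_{\pi\lambda}}_{\vu_{\omega u}})$.

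\textbf{Main obstacle.} The step I expect to be most delicate is the assignment bookkeeping with $\id_t$. \autoref{def:path-condition} writes $a_{\pi\lambda}=a[a_\pi\upshift]$, which silently drops the variables that $a$ leaves undefined. I would read it as $(a\sqcup\id_t)[a_\pi\upshift]$, or equivalently restrict attention to the active variables, which $a$ defines by \autoref{def:BDDTS}. The claim then holds for all variables that are ever read.

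\textbf{Uniqueness.} For deterministic $\S$, suppose two paths with the same interaction sequence were both satisfied by $\vu_\omega$. Take their first point of divergence. The two switches there leave the same location with the same interaction, and both guards are satisfied under the same valuation. This contradicts the determinism condition $\phi_1\wedge\phi_2\equiv\False$.
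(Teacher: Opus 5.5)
Your proposal is correct and follows essentially the same route as the paper: induction on the length of $\omega$, splitting $\eta^\S_{\pi\lambda}$ into the upshifted prefix condition and the substituted last guard, and using $\sem{a^\S_\pi\upshift}_{\upshift\vu_\omega}=\sem{a^\S_\pi}_{\vu_\omega}$ to identify the reached state. Your treatment is slightly more complete than the paper's. The paper's proof silently drops $\id_t$ from the target valuation and never argues the uniqueness claim for deterministic $\S$, whereas you handle both correctly.
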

\begin{proof}
By induction on the length of $\omega$ and $\pi$. For $\omega=\epsilon$ and $\pi=\epsilon$ the property is immediate (take $l=\il$ and $\omega=\pi=\epsilon$; the conditions of the proof obligation are trivially satisfied by $\eta^\S_\pi=\True$ and $a^\S_\pi=\ini$). Now assume the proposition holds for a given $\pi$, resp.\ $\omega$.
\begin{description}
\item[If.] Assume $\il\trans\pi l\trans{\alpha,\phi,a} l'$ and let $u\in\GUU$ be such that $g_u=g_\alpha$ and $\vu_{\omega u}\models \eta^\S_{\pi\cdot(\alpha,\phi,a)}$. Due to $\vu_{\omega u}=\upshift\vu_\omega\sqcup \vu_u$ and $\eta^\S_{\pi\cdot(\alpha,\phi,a)}=\eta^\S_\pi\upshift \wedge \phi[a^\S_\pi\upshift]$ we have 
\begin{itemize}
\item $\upshift\vu_\omega\models \eta^\S_\pi\upshift$, which holds if and only if $\vu_\omega\models \eta^\S_\pi$. By the induction hypothesis, it follows that $(\il,\sem\ini)\trans\omega (l,\vu)$ for $\vu=\sem{a^\S_\pi}_{\vu\omega}$.
\item $\upshift\vu_\omega\sqcup \vu_u\models \phi[a^\S_\pi\upshift]$, which holds if and only if $\sem{a^\S_\pi\upshift}_{\upshift\vu_\omega}\sqcup \vu_u\models \phi$, which in turn, due to $\sem{a^\S_\pi\upshift}_{\upshift\vu_\omega}=\sem{a^\S_\pi}_{\vu_\omega}$, holds if and only if $\vu\sqcup \vu_u\models \phi$. According to the construction of $\L$ (\autoref{def:intrp}), we therefore have $(l,\vu) \trans{u} (l',\sem a_{\vu\sqcup \vu_u})$. The proof is then completed by the observation that
\begin{equation}\begin{array}{rl}\label{eq:obs}
\sem{a^\S_{\pi\cdot(\alpha,\phi,a)}}_{\vu_{\omega u}}
 & = \sem{a^\S_\pi\upshift\circ a}_{\upshift\vu_\omega\sqcup \vu_u} \\[\smallskipamount]
 & = \sem{a}_{\sem{a^\S\pi\upshift}_{\upshift\vu_\omega}\sqcup \vu_u} \\[\smallskipamount]
 & = \sem{a}_{\sem{a^\S\pi}_{\vu_\omega}\sqcup \vu_u} \\[\smallskipamount]
 & = \sem{a}_{\vu\sqcup \vu_u} \enspace.
\end{array}\end{equation}
\end{itemize}

\item[Only if.] Assume $(\il,\ini)\trans\omega q\trans u q'$. The induction hypothesis implies $\il\trans\pi l$ such that $\sigma_\pi=\sigma_\omega$, $\vu_\omega\models\eta^\S_\pi$ and $q=(l,\vu)$ for $\vu=\sem{a^\S_\pi}_{\vu_\omega}$; furthermore, there is a $l\trans{\alpha,\phi,a} l'$ such that $g_u=g_\alpha$, $\vu\sqcup \vu_u\models \phi$ and $q'=(l',\sem a_{\vu\sqcup \vu_u})$.

From $\vu_\omega\models\eta^\S_\pi$ and $\vu\sqcup \vu_u\models \phi$, using the same steps as in the ``if'' part (all of which were if-and-only-if) we can deduce that $\vu_{\omega u}\models \eta^\S_{\pi\cdot(\alpha,\phi,a)}$. Moreover, observation \eqref{eq:obs} now implies $q'=(l',\sem{a^\S_{\pi\cdot(\alpha,\phi,a)}}_{\vu_{\omega u}})$, closing the proof.
\end{description}
\end{proof}

The next sequence of lemmas leads up to the proof of \autoref{thm:sym-concrete-relation}. First we show that, if the valuation of a gate value sequence satisfies the execution condition of its underlying interaction sequence, then this automatically also holds for all prefixes.

\begin{lemma}\label{theta-prefix}
Let $\B$ be a saturated output-rich BDDTS and let $\ini \in \TU(\emptyset)^V$ such that $\sem\ini \models \IG_\B$. For all $\xi,\omega\in \GUU^*$ such that $\xi\preceq \omega$, if $\vu_\omega \models\EC_{\B,\ini}(\sigma_\omega)$ then $\vu_\xi \models \EC_{\B,\ini}(\sigma_\xi)$.
\end{lemma}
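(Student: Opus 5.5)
The plan is to argue by unfolding the definition of $\EC$ and using the recursive structure of path conditions (\autoref{def:path-condition}). By induction it suffices to treat $\omega=\xi u$ for a single gate value $u$: if the claim holds for one-step prefixes, then it holds for arbitrary $\xi\preceq\omega$ by repeatedly stripping the last gate value. So assume $\vu_{\xi u}\models\EC_{\B,\ini}(\sigma_{\xi u})$. By definition this means $\vu_{\xi u}\models\IG$ and that there is a path $\pi'$ with $\il\trans{\pi'}$, $\sigma_{\pi'}=\sigma_{\xi u}$ and $\vu_{\xi u}\models\eta^\ini_{\pi'}$ (a disjunction is satisfied iff one disjunct is).

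Write $\pi'=\pi\cdot(\alpha,\phi,a)$ with $\il\trans\pi$ and $\sigma_\pi=\sigma_\xi$. By \autoref{def:path-condition}, $\eta^\ini_{\pi'}=\eta^\ini_\pi\upshift\wedge\phi[a^\ini_\pi\upshift]$, so $\vu_{\xi u}\models\eta^\ini_\pi\upshift$. Since $\vu_{\xi u}=\upshift\vu_\xi\sqcup\vu_u$ and $\eta^\ini_\pi\upshift$ only mentions upshifted interaction variables (the $\ini$-part is ground), its truth under $\vu_{\xi u}$ coincides with its truth under $\upshift\vu_\xi$, which in turn equals the truth of $\eta^\ini_\pi$ under $\vu_\xi$. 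This is exactly the step already used in the ``If'' part of the proof of \autoref{prop:eta-vs-vu}, and I would cite that reasoning. Hence $\vu_\xi\models\eta^\ini_\pi$, and $\pi$ witnesses the disjunction in $\EC_{\B,\ini}(\sigma_\xi)$.

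It remains to handle the conjunct $\IG$. Here $\IG\in\TU_\Bool(V)$ contains no interaction variables, and in the execution condition it must be read under the initialization, i.e.\ as $\IG[\ini]$ (equivalently, as evaluated via $\sem\ini$). Since $\sem\ini\models\IG$ by assumption, this conjunct holds under any valuation, in particular $\vu_\xi$. This also covers the degenerate case $\xi=\epsilon$, where $\eta^\ini_\epsilon=\True$.

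The main obstacle is this bookkeeping of how $\IG$ and the ground terms from $\ini$ are evaluated, together with the upshift/valuation correspondence. Neither is deep, but the paper's notation is loose. I would make explicit that the satisfaction of $\eta^\ini_\pi\upshift$ depends only on the variables in $\var(\eta^\ini_\pi\upshift)$, which lie in $\IV\upshift$, so the freshly added $\vu_u$ is irrelevant. Saturation and output-richness are not actually needed for this argument.
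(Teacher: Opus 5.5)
Your proof is correct and follows essentially the same route as the paper. Both reduce to a one-step prefix, unfold $\EC$ via \autoref{def:path-condition} to discard the last guard conjunct $\phi[a^\ini_\pi\upshift]$, use $\vu_{\xi u}=\upshift\vu_\xi\sqcup\vu_u$ to transfer satisfaction of $\eta^\ini_\pi\upshift$ back to $\vu_\xi$, and conclude by transitivity; the only differences are cosmetic (you pick a single witnessing path and discharge the $\IG$ conjunct explicitly via $\sem\ini\models\IG$, whereas the paper manipulates the whole disjunction and simply carries $\IG$ along).
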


\begin{proof}
Using \autoref{def:goal-implication}, the definition of $\eta$ and logical reasoning:
\begin{align*} \EC_{\B,\ini}&(\sigma_{\omega u})\\
 & = \IG_{\B} \land \textstyle\bigvee \{ \eta_{\pi\cdot(\alpha,\phi,a)}^\ini \mid \il\trans{\pi\cdot(\alpha,\phi,a)}, \sigma_{\pi\cdot(\alpha,\phi,a)} =\sigma_{\omega u} \} \\
 & = \IG_{\B} \land \textstyle\bigvee \{ \eta_\pi^\ini\upshift \wedge \phi[a_\pi^\ini\upshift] \mid \il\trans{\pi}\trans{\alpha,\phi,a}, \sigma_\pi=\sigma_\omega \} \\
 & \Models \IG_{\B} \land \textstyle\bigvee \{ \eta_\pi^\ini\upshift \mid \il\trans{\pi}, \sigma_\pi=\sigma_\omega \} \\
 & = \EC_{\B,\ini}(\sigma_\omega)\upshift \enspace.
\end{align*}
Furthermore, $\vu_\omega$ satisfies the recursive equation $\vu_{\omega u} = \vu_\omega\upshift \sqcup \vu_u$. It follows that $\vu_{\omega u}\models \EC_{\B,\ini}(\sigma_{\omega u})$ implies $\vu_\omega\models \EC_{\B,\ini}(\sigma_\omega)$; by transitivity, this implies the statement of the lemma.
\end{proof}

Next, we show the relation between (satisfying) the path conditions of a BDDTS and those of the derived STS (in which the output guards have been shifted to the switch guards).

\begin{lemma}\label{lem:eta-B-to-S}
Let $\B=\bddtstuple$ be a saturated output-rich BDDTS and let $\S=\ISS{\B}{\ini}$ be its derived STS.  
Consider any gate value sequence $\omega \in \GU\UU^*$ Then the following two statements are equivalent
\begin{enumerate}
\item There is a path $\pi^\S \in \Lambda_\S^*$ in $\S$ such that
  \[
    \sigma_\omega = \sigma_{\pi^\S} 
    \quad\text{and}\quad 
    \vu_\omega \models \eta^\S_{\pi^\S}.
  \]

\item There is a path $\pi^\B \in \Lambda_B^*$  in $\B$ such that
  \[
    \sigma_\omega = \sigma_{\pi^\B} 
    \quad\text{and}\quad 
    \vu_\omega \models \eta^\ini_{\pi^\B},
  \]
  and moreover, for every prefix $\xi \preceq \omega$ there exists a path $\varpi \preceq \pi^\B$ with $\sigma_\xi = \sigma_\varpi$, such that for all locations $l$ where $\B \trans{\varpi} l$ and $\OG \definedon l$, the prefix valuation satisfies
  \[
    \hat\vu_\xi \models \OG(l).
  \]
\end{enumerate}
\end{lemma}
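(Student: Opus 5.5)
The plan is to exploit the fact that $\S=\ISS{\B}{\ini}$ has exactly the same locations and switch structure as $\B$. Its switches differ only in that a switch $t$ with $\OG\definedon\tl_t$ carries the guard $\phi_t\wedge\OG(\tl_t)[\mapcv^{g_t}]$ instead of $\phi_t$. Assignments and initialisation coincide: the context variables are fixed by $\ini$, and the model variables are assigned identically. Hence there is a bijection $\pi^\B\mapsto\pi^\S$ between paths of $\B$ and of $\S$. It preserves interaction sequences and target locations, and it satisfies $a^\S_{\pi^\S}=a^\ini_{\pi^\B}$. We would prove this first, by a trivial induction on path length using \autoref{def:path-condition}.

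The core is a second induction on $|\omega|$. It establishes a decomposition of the $\S$-path condition. For every path $\pi^\B=\lambda_1\cdots\lambda_n$ of $\B$ with target locations $l_1,\dots,l_n$,
\[ \eta^\S_{\pi^\S}\;\equiv\;\eta^\ini_{\pi^\B}\wedge\textstyle\bigwedge_{k\leq n,\ \OG\definedon l_k}\bigl(\OG(l_k)[\mapcv^{g_k}][a^\ini_{\lambda_1\cdots\lambda_{k-1}}]\bigr)\upshift^{\,n-k}. \]
The step case unfolds $\eta^\S_{\pi\cdot\lambda}=\eta^\S_\pi\upshift\wedge\phi^\S[a^\S_\pi\upshift]$. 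It then splits $\phi^\S$ into $\phi$ and the possible output-guard conjunct, and uses $a^\S_\pi=a^\ini_\pi$. Under the valuation $\vu_\omega$, the $k$-th extra conjunct is evaluated on the upshifted copy of the interaction variables of the $k$-th gate value. By the definitions of $\vu_\omega$ and $\hat\vu_\xi$, its truth is exactly $\hat\vu_\xi\models\OG(l_k)$, where $\xi$ is the length-$k$ prefix of $\omega$. The renaming $\mapcv^{g}$ supplies the context variables from the interaction variables of the last gate $g$ of $\xi$. Output-richness ensures that $\mapcv^{g}$ is defined on all context variables occurring in $\OG(l_k)$, and that the relevant switch is the one entering the goal location. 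Model variables are instantiated by the path assignment, which $\hat\vu_\xi$ must be read against; we would make this reading of $\hat\vu_\xi\models\OG(l)$ explicit, i.e.\ as $\OG(l)[a^\ini_\varpi]$.

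With the decomposition in hand, both directions follow. For (1)$\Rightarrow$(2), take $\pi^\B$ to be the $\B$-counterpart of $\pi^\S$. Then $\vu_\omega\models\eta^\ini_{\pi^\B}$ is the first conjunct. For a prefix $\xi$, choose $\varpi$ as the prefix of $\pi^\B$ of length $|\xi|$. By determinism of $\B$, $\varpi$ reaches a unique location, and the goal condition there is one of the remaining conjuncts. For (2)$\Rightarrow$(1), map $\pi^\B$ to $\pi^\S$ and reassemble the conjuncts. Here we must check that the $\varpi$ provided for each prefix $\xi$ is the length-$|\xi|$ prefix of $\pi^\B$. This is forced because $\varpi\preceq\pi^\B$ and $\sigma_\varpi=\sigma_\xi$ fix its length.

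The main obstacle will be the bookkeeping of time indices. Specifically, we must show that $(\OG(l_k)[\mapcv^{g_k}][a^\ini_{\cdots}])\upshift^{n-k}$ under $\vu_\omega$ coincides with $\hat\vu_\xi\models\OG(l_k)$ under the prefix valuation. This needs a small auxiliary fact, $\vu_\omega\models e\upshift^{\,m}$ iff $\vu_\xi\models e$ whenever $\xi$ is the prefix with $m$ fewer elements. It is proved by iterating $\vu_{\omega u}=\upshift\vu_\omega\sqcup\vu_u$, as in the proof of \autoref{prop:eta-vs-vu}. We would also make sure that the substitution $\mapcv^{g}$ commutes with the path assignment, since $\mapcv$ only touches context variables and $a^\ini$ only model variables.
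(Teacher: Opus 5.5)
Your strategy is the paper's. Both arguments rest on $\S$ having the same switches and assignments as $\B$, apart from the extra conjunct $\OG(\tl_t)[\mapcv^{g_t}]$ on goal-entering switches. Both then unfold $\eta^{\S}_{\pi\cdot\lambda}=\eta^{\S}_{\pi}\upshift\wedge\phi^{\S}[a^{\S}_{\pi}\upshift]$ inductively. Proving one closed-form equivalence and reading off both directions is cleaner than the paper's two directional inductions, and it exposes time-index bookkeeping the paper glosses over.

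Two small repairs are needed. First, the unfolding actually gives the conjunct $\bigl(\OG(l_k)[\mapcv^{g_k}][a^\ini_{\lambda_1\cdots\lambda_{k-1}}\upshift]\bigr)\upshift^{\,n-k}$; you dropped the inner $\upshift$. Second, for $\xi=\epsilon$ your $\varpi=\epsilon$ ends in $\il$. Its output guard is not among your conjuncts, and $\hat\vu_\epsilon$ has no last gate. So you need $\OG\undefinedon\il$. The paper asserts this in its base case, but output-richness as defined does not imply it.

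The genuine gap is the step identifying this conjunct with $\hat\vu_\xi\models\OG(l_k)$ under the reading you commit to, $\OG(l_k)[a^\ini_\varpi]$ with $\varpi=\lambda_1\cdots\lambda_k$. The conjunct evaluates the output guard under the assignment \emph{before} the $k$-th switch fires. This is forced: in $\S$ the output guard is part of the guard of the entering switch, and guards are evaluated in the source state (Definition~\ref{def:intrp}). Your reading instead uses the post-switch assignment $a^\ini_\varpi=a_k[a^\ini_{\lambda_1\cdots\lambda_{k-1}}\upshift]$.

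The two agree only if $a_k$ leaves the model variables of $\OG(l_k)$ unchanged, and nothing guarantees that (Definition~\ref{def:BDDTS} only demands $a_t\definedon V_{\tl_t}$). For example, take $\OG(l_k)=(m=1)$, $\phi_k=\True$, $a^\ini_{\lambda_1\cdots\lambda_{k-1}}(m)=0$ and $a_k(m)=1$. Your condition~2 holds, but condition~1 fails, because in $\S$ the guard $m=1$ is checked while $m$ is still $0$.

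Read literally, the statement only makes sense when $\OG(l)$ contains no model variables, since otherwise $\hat\vu_\xi$ does not cover it. That is what the paper's proof tacitly assumes, and under that assumption your argument goes through with the repairs above. If you want model variables explicit, you must use the pre-switch reading $\OG(l)[a^\ini_{\varpi'}\upshift]$ with $\varpi=\varpi'\lambda_k$, restricted to $\LV$ as your commutation argument presupposes. Be aware that $\GI$ uses the post-switch reading. So this is a mismatch between the STS construction and $\GI$ that resurfaces in Lemma~\ref{lem:omega-vs-EC}, and it cannot be repaired inside this lemma.
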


\begin{proof}
\textbf{From 1 to~2}\\
\textbf{Assumption:}\[\sigma_\omega = \sigma_{\pi^\S} 
    \quad\text{and}\quad 
    \vu_\omega \models \eta^\S_{\pi^\S}. \]
\\
\textbf{To prove:} There is a $\pi^\B$ such that
\begin{itemize}
\item $\sigma_\omega = \sigma_{\pi^\B}$ and $\vu_\omega \models \eta^\ini_{\pi^\B}$;
\item For all $\B\trans\varpi l$ with $\varpi\preceq \pi^\B$ and $\OG\definedon l$, if $\xi\preceq \omega$ such that $\sigma_\xi=\sigma_\varpi$ then $\hat\vu_\xi \models \OG(l)$.
\end{itemize}
We know from definition \autoref{def:bddtstosts}, the set of switches in $\ISS{\B}{\ini}$ is:
    \begin{align*}
        \arrow_c =\ &  \{(sl_t,\alpha_t, \phi_t, a_t , tl_t) \mid t \in \arrow, \OG\undefinedon \tl_t \} \\
 \cup\ & \{ (sl_t, \alpha_t, \phi_t \wedge \OG[\mapcv^{g_t}], a_t, tl_t) \mid  t \in  \arrow,\ \OG \definedon tl_t \}
    \end{align*}
      meaning that first the STS and BDDTS share the same sequence of interactions and second,the switch labels of a path in STS, is either equal to ones in the BDDTS (for switches ending in a non-goal location) or the switch guard is different with regard to the partial renaming function $\rho$.

We prove the lemma by the induction on length of $\omega$ which by definition has the same length as $\pi^\S$.

\noindent
\textbf{Base case} is trivially true
for $\epsilon$, by definition the path conditions are evaluated to $\True$ and as there is no prefix that leads to a goal state and the fact that we only have output rich systems the OG cannot be placed in the initial location and the base case holds.

\noindent
\textbf{Inductive Hypothesis} for the path with length $n$ we assume that when $ \sigma_\omega = \sigma_{\pi^\S} 
    \quad\text{and}\quad 
    \vu_\omega \models \eta^\S_{\pi^\S}$ then $\sigma_\omega = \sigma_{\pi^\B} 
    \quad\text{and}\quad 
    \vu_\omega \models \eta^\ini_{\pi^\B},$ and  for every prefix $\xi \preceq \omega$ with length $|\xi| \leq n$ there is a path $\varpi \preceq \pi$ with $\sigma_\xi = \sigma_\varpi$, such that for all locations $l$, where $\B \trans{\varpi} l$ and $\OG \definedon l$, 
  $\hat\vu_\xi \models \OG(l)  $ holds  .
  
\noindent
\textbf{Inductive Step}
We want to prove that for $|\omega'|=|\pi'|=n+1$ if $ \sigma_\omega' = \sigma_{\pi'^\S} 
    \quad\text{and}\quad 
    \vu_\omega' \models \eta^\S_{\pi'^\S}$ then $\sigma_\omega' = \sigma_{\pi'^\B} 
    \quad\text{and}\quad 
    \vu_\omega' \models \eta^\ini_{\pi'^\B},$ and for $\xi' \preceq \omega'$ we have a path $\varpi' \preceq \pi'$ with $\sigma_{\xi'}=\sigma_{\varpi'}$ and for all locations $l$,where $\B \trans{\varpi'}l$, $\hat\vu_{\xi'} \models \OG(l)$

Suppose $\omega'=\omega.(g\bar{w})$ and $\pi'^\S=\pi^S.(\alpha,\phi_\S,a_\S)$  and $\pi'^\B=\pi^\B.(\alpha,\phi_\B,a_\B)$ with $\alpha=(g,\ivbar)$, 
\begin{itemize}
    \item we know $\vu_\omega' \models \eta^\S_{\pi'^\S} $ so $\upshift\vu_{\omega} \sqcup \vu_{g\bar{w}}  \models \eta^\S_{\pi^\S}\upshift \land \phi_\S[a^\ini_{\pi^\S} \upshift] $
    \item [] From the induction hypothesis   $\vu_\omega \models \eta^\S_{\pi^\S}$ so $\vu_{g\bar{w}} \models \phi_\S[a^\ini_{\pi^\S} \upshift]$
    \item We want to show that $\vu_\omega' \models \eta^\ini_{\pi'^\B}$, lets also split this:
    \item $\upshift\vu_{\omega} \sqcup \vu_{g\bar{w}}  \models \eta^\B_{\pi^\B}\upshift \land \phi_\B[a^\ini_{\pi^\B} \upshift] $
    \item [] Again from the induction hypothesis   $\vu_\omega \models \eta^\B_{\pi^\B}$ so we only need to show $\vu_{g\bar{w}} \models \phi_\B[a^\ini_{\pi^\B} \upshift]$
    \item First  $a^\ini_{\pi^\B}=a^\ini_{\pi^\S} \upshift$ because from \autoref{def:bddtstosts} the assignments are equal for all switches and also $\phi_\S=\phi_\B$ or if $\B \trans{\pi.(\alpha,\phi_\B,a_\B)} l$ and $\OG \definedon l$ then $\phi_\S=\phi_\B \land \OG[\mapcv^g]$ which is stronger than $\phi_\B$
    \item [] in both cases,as $\vu_{g\bar{w}} \models \phi_\S[a^\ini_{\pi^\S} \upshift]$ we can conclude that $\vu_{g\bar{w}} \models \phi_\B[a^\ini_{\pi^\B} \upshift]$
    \item For the second part of the proof we should show $\hat\vu_{\xi'} \models \OG(l)$. By definition of concrete semantics we could write it as $\vu_{\xi'} \sqcup (\vu_{\xi'} \circ \mapcv^g) \models \OG(l)$ and as $\vu_{\xi'}=\upshift \vu_\xi \sqcup \vu_{g\bar{w}}$ we have $(\vu_\xi \sqcup \vu_{g\bar{w}}) \sqcup ((\vu_\xi \sqcup \vu_{g\bar{w}}) \circ \mapcv^g) \models \OG(l)$
    \item by induction hypothesis for all $\ell$ such that $\B \trans{\varpi} \ell$, $\vu_\xi \sqcup (\vu_\xi \circ \mapcv^g) \models \OG(\ell)$, so we only need to show $\vu_{g\bar{w}}\sqcup (\vu_{g\bar{w}} \circ \mapcv^g) \models \OG(l)$ for the location $l$ where $\B\trans{\varpi.(\alpha,\phi_B, a_B)} l$
    \item from previous item we had $\vu_{g\bar{w}} \models \phi_B \land \OG[\mapcv^g]$ by conjunction elimination $\vu_{g\bar{w}} \models \OG[\mapcv^g]$ and $\OG(l)= \OG[\mapcv^{g^{-1}}]$ we conclude that $\vu_{g\bar{w}}\sqcup (\vu_{g\bar{w}} \circ \mapcv^g) \models \OG(l)$ 
    
\end{itemize}
\textbf{From 2 to~1}

\noindent
  
\noindent
\begin{description}
\item[Base case.] Immediate, since then $\pi^\B=\pi^\S=\epsilon$, $l=\il$ for which $\OG\undefinedon l$, and all derived valuations are $\True$.

\item[Induction Hypothesis.] for the path with length $n$ we assume that when  $\sigma_\omega = \sigma_{\pi^\B} 
    \quad\text{and}\quad 
    \vu_\omega \models \eta^\ini_{\pi^\B},$ and  for every prefix $\xi \preceq \omega$ with length $|\xi| \leq n$ there is a path $\varpi \preceq \pi$ with $\sigma_\xi = \sigma_\varpi$, such that for all locations $l$, where $\B \trans{\varpi} l$ and $\OG \definedon l$, 
  $\hat\vu_\xi \models \OG(l)$ holds, then $\sigma_\omega = \sigma_{\pi^\S}$
     and  
    $\vu_\omega \models \eta^\S_{\pi^\S}$.

\item[Induction Step]
We want to prove that for $|\omega'|=|\pi'|=n+1$ if $ \sigma_\omega' = \sigma_{\pi'^\B} 
    \quad\text{and}\quad 
    $  $\sigma_\omega' = \sigma_{\pi'^\B} 
    \quad\text{and}\quad 
    \vu_\omega' \models \eta^\ini_{\pi'^\B},$ and for $\xi' \preceq \omega'$ we have a path $\varpi' \preceq \pi'$ with $\sigma_{\xi'}=\sigma_{\varpi'}$ and for all locations $l$,where $\B \trans{\varpi'}l$, $\hat\vu_\xi' \models \OG(l)$ then $ \vu_\omega' \models \eta^\S_{\pi'^\S}$
    
Suppose $\omega'=\omega.(g\bar{w})$  and $\pi'^\B=\pi^\B.(\alpha,\phi_\B,a_\B)$ with $\alpha=(g,\ivbar)$, and $\pi'^\S=\pi^S.(\alpha,\phi_\S,a_\S)$ 
\begin{itemize}
    \item $\upshift\vu_{\omega} \sqcup \vu_{g\bar{w}}  \models \eta^\B_{\pi^\B}\upshift \land \phi_\B[a^\ini_{\pi^\B} \upshift] $
    \item From induction hypothesis $\vu_{\omega} \models \eta^\B_{\pi^\B}$ then $\vu_{g\bar{w}}  \models \phi_\B[a^\ini_{\pi^\B} \upshift]  $
    \item  we want to show $\vu_\omega' \models \eta^\S_{\pi'^\S} $ splitting the statement: $\upshift\vu_{\omega} \sqcup \vu_{g\bar{w}}  \models \eta^\S_{\pi^\S}\upshift \land \phi_\S[a^\ini_{\pi^\S} \upshift] $
    \item From induction hypothesis $\vu_{\omega} \models \eta^\S_{\pi^\S}$ (both sides are upshifted here, which doesn't change the meaning, old values assigned to upshifted variables) then we only need  to show $\vu_{g\bar{w}}  \models \phi_\S[a^\ini_{\pi^\S} \upshift]  $
    \item By definition of \autoref{def:bddtstosts}, as assignments of all switches are the same in BDDTS and STS then $a^\ini_{\pi^\S} \upshift=a^\ini_{\pi^\B} \upshift$
    and again by the same definition $\phi_\S=\phi_\B$ for switches that don't end in a goal location and $\phi_\S=\phi_\B \land \OG[\mapcv^g] $ for switches ending in a goal location. For the former case we can directly conclude $\vu_{\omega'} \models \eta^\S_{\pi'^\S}$ but for the latter we need the second part of the assumption:
    \item if $l$ is the goal location after the path $\varpi'$ meaning:
    $\B \trans \varpi' l$ which is equal to  $\B \trans{ \varpi.(\alpha_B,\phi_B,a_B)} l$ from the assumption we know $\hat\vu_\xi' \models \OG(l)$, rewriting it as $(\vu_\xi \sqcup \vu_{g\bar{w}}) \sqcup ((\vu_\xi \sqcup \vu_{g\bar{w}}) \circ \mapcv^g) \models \OG(l)$
    \item by induction hypothesis for all $\ell$ such that $\B \trans{\varpi} \ell$, $\vu_\xi \sqcup (\vu_\xi \circ \mapcv^g) \models \OG(\ell)$, so up to $\omega$ this holds: ($\B \trans{\varpi} \ell$, $\vu_\omega \sqcup (\vu_\omega \circ \mapcv^g) \models \OG(\ell)$) and also as $\vu_{\xi'} \models \OG(l)$ from the assumption it holds that $\vu_{g\bar{w}}\sqcup (\vu_{g\bar{w}} \circ \mapcv^g) \models \OG(l)$ for the location $l$
    \item We had that $\vu_{g\bar{\omega}}\models \phi_\B[a^\ini_{\pi^\B} \upshift]$ and from previous item $\vu_{g\bar{w}}\sqcup (\vu_{g\bar{w}} \circ \mapcv^g) \models \OG(l)$ so we can conclude that $\vu_{g\bar{\omega}} \models \phi_\B[a^\ini_{\pi^\B} \upshift] \land \OG[\mapcv^g] $ which is equivalent to $\vu_{g\bar{\omega}} \models \phi_\S[a^\ini_{\pi^\S}]$ closing the proof.
\end{itemize}
\end{description}
\end{proof}

\begin{lemma}\label{lem:omega-vs-EC}
Let $\B$ be a saturated output-rich BDDTS and let $\ini \in \TU(\emptyset)^V$ such that $\sem\ini \models \IG_\B$, and consider the LTS $\L=\sem{\ISS\B\ini}$. For any gate value sequence $\omega \in \GUU^*$, $q_0\trans\omega$ in $\L$ if and only if $\vu_\omega \models \EC_{\B,\ini}(\sigma_\omega)$ and for all $\xi\preceq \omega$, $\hat\vu_\xi \models \GI_{\B,\ini}(\sigma_\xi)$.
\end{lemma}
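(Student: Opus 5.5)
The plan is to chain \autoref{prop:eta-vs-vu} with \autoref{lem:eta-B-to-S}, and then reformulate the output-guard condition of the latter in terms of goal implications. Let $\S=\ISS\B\ini$. By \autoref{prop:eta-vs-vu}, $q_0\trans\omega$ in $\L$ holds iff there is a path $\pi^\S$ in $\S$ with $\sigma_{\pi^\S}=\sigma_\omega$ and $\vu_\omega\models\eta^\S_{\pi^\S}$. Since $\S$ is deterministic, this path is unique. By \autoref{lem:eta-B-to-S}, this is in turn equivalent to the existence of a path $\pi^\B$ in $\B$ with $\sigma_{\pi^\B}=\sigma_\omega$ and $\vu_\omega\models\eta^\ini_{\pi^\B}$. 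In addition, every prefix $\xi\preceq\omega$, with matching prefix $\varpi\preceq\pi^\B$ leading to a goal location $l$, must satisfy $\hat\vu_\xi\models\OG(l)$. It therefore remains to show that this last condition is equivalent to the right-hand side of the statement.

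\textbf{Execution condition.} First I show that the existence of $\pi^\B$ with $\vu_\omega\models\eta^\ini_{\pi^\B}$ is equivalent to $\vu_\omega\models\EC_{\B,\ini}(\sigma_\omega)$. The forward direction holds because $\eta^\ini_{\pi^\B}$ is one of the disjuncts of $\EC_{\B,\ini}(\sigma_\omega)$ and $\sem\ini\models\IG$. Conversely, some disjunct of $\EC_{\B,\ini}(\sigma_\omega)$ is satisfied, and that disjunct yields the required path.

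\textbf{Goal implications.} Next I show that the prefix output-guard condition is equivalent to $\hat\vu_\xi\models\GI_{\B,\ini}(\sigma_\xi)$ for all $\xi\preceq\omega$. The key observation is that, by determinism of $\B$, for a fixed valuation at most one path $\varpi$ with $\sigma_\varpi=\sigma_\xi$ has a path condition satisfied by $\vu_\xi$. Indeed, switches sharing a source location and an interaction have mutually exclusive guards, and this exclusivity is inherited inductively by the path conditions. By \autoref{theta-prefix}, $\vu_\xi\models\EC_{\B,\ini}(\sigma_\xi)$, and the satisfied path is exactly the prefix $\varpi$ of $\pi^\B$.

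Every conjunct of $\GI_{\B,\ini}(\sigma_\xi)$ whose path condition fails under $\hat\vu_\xi$ holds vacuously. Here I use that $\hat\vu_\xi$ agrees with $\vu_\xi$ on $\IV$ and that path conditions mention only interaction variables. Consequently, $\hat\vu_\xi\models\GI_{\B,\ini}(\sigma_\xi)$ reduces to $\hat\vu_\xi\models\OG(l)[a^\ini_\varpi]$ when $\varpi$ ends in a goal location $l$, and holds trivially otherwise.

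\textbf{Main obstacle.} The hardest step is reconciling $\OG(l)[a^\ini_\varpi]$, the form appearing in $\GI$, with $\OG(l)$ evaluated under $\hat\vu_\xi$, the form produced by \autoref{lem:eta-B-to-S}. Model variables must be interpreted through the symbolic assignment $a^\ini_\varpi$, while context variables must be read through $\mapcv^{g}$ from the last output interaction. My approach is to argue that $a^\ini_\varpi$ leaves context variables untouched, since only $\LV$ is ever assigned. Context variables occurring in $\OG(l)$ are covered by $\mapcv^g$, because $\B$ is output-rich and the last switch originates in a closed location. Hence evaluating $\OG(l)[a^\ini_\varpi]$ under $\hat\vu_\xi$ coincides with the evaluation used in \autoref{lem:eta-B-to-S}, namely $\OG(l)[\mapcv^g]$ inside the STS guard, with model variables replaced by their current symbolic values. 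I expect to need to make this implicit convention in \autoref{lem:eta-B-to-S} explicit, and to handle the upshift of $\hat\vu_\xi$ carefully. Once this identification is in place, both directions of the lemma follow by combining the equivalences above.
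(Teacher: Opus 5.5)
Your proposal follows the paper's own proof. \autoref{prop:eta-vs-vu} turns $q_0\trans\omega$ into a satisfied STS path, \autoref{lem:eta-B-to-S} turns that into a satisfied BDDTS path plus output-guard conditions on prefixes, and these are then read off as $\EC$ and $\GI$. Your determinism argument is a real improvement: at most one path with interaction sequence $\sigma_\xi$ has its path condition satisfied by $\vu_\xi$, so all other conjuncts of $\GI_{\B,\ini}(\sigma_\xi)$ hold vacuously. The paper's ``only if'' direction needs exactly this when it says the output-guard condition is ``implying'' $\hat\vu_\xi\models\GI_{\B,\ini}(\sigma_\xi)$, but never states it.

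The step that does not go through is the reconciliation in your last paragraph. The paper's proof has the same hole, hidden in the condition $\hat\vu_\xi\models\OG(l)$ of \autoref{lem:eta-B-to-S}, which never says how model variables are evaluated.
In $\ISS\B\ini$ the output guard of $l$ is conjoined to the guard of the last switch $t$ into $l$, and by \autoref{def:intrp} guards are evaluated in the \emph{source} valuation. So in $\eta^\S$ the model variables of $\OG(l)$ are instantiated by $a^\S_{\varpi'}\upshift$, where $\varpi=\varpi'\cdot\lambda_t$, i.e.\ before $a_t$ is applied. In $\GI$ they are instantiated by $a^\ini_\varpi=a_t[a^\ini_{\varpi'}\upshift]$, i.e.\ after $a_t$.

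Whenever $a_t$ changes a model variable occurring in $\OG(l)$ these disagree, and then the lemma itself fails. Take $\il\trans{?i,\True,\{m\mapsto 0\}}c\trans{!o,\True,\{m\mapsto 1\}}l$ with $c$ closed, $l$ open, $\OG(l)=(m=1)$, $\IG=\True$, $\CV=\emptyset$, and saturate. For $\omega=?i\,!o$, $\EC_{\B,\ini}(\sigma_\omega)$ and all $\GI_{\B,\ini}(\sigma_\xi)$ with $\xi\preceq\omega$ are $\True$. Yet the state reached after $?i$ has $m=0$, so $q_0\trans{?i\,!o}$ does not hold in $\L$.

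You therefore need one of two fixes:
\begin{itemize}
\item an extra hypothesis, e.g.\ every switch $t$ into a goal location is the identity on $\LV\cap\var(\OG(\tl_t))$; or
\item a repaired translation in \autoref{def:bddtstosts} that instantiates the model variables of $\OG(\tl_t)$ through $a_t$.
\end{itemize}

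Similarly, ``$a^\ini_\varpi$ leaves context variables untouched'' is not something you can derive. The declared type $a^\ini_\pi\in\TU(\IV)^V$ with $a^\ini_\epsilon=\ini$ suggests the opposite. It is a convention you must fix explicitly, since it is what gives $\hat\vu_\xi$ any role. With both conventions in place, the rest of your argument is sound.
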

\begin{proof}
Let $\S=\ISS{\B}{\ini}$.
\begin{description}
\item[If.] Assume $\vu_\omega \models \EC_{\B,\ini}(\sigma_\omega)$ and for all $\xi\preceq \omega$, $\hat\vu_\xi \models \GI_{\B,\ini}(\sigma_\xi)$. It follows that there is a path $\il\trans{\pi^\B}$ in $\B$ with $\sigma_\pi=\sigma_\omega$ and $\vu_\omega\models \eta^\ini_\pi$; moreover, for al $\varpi\preceq \pi^\B$ if $\il\trans\varpi l$ for $\OG\definedon l$ then for the corresponding $\xi\preceq\omega$ (i.e., such that $\sigma_\xi=\sigma_\varpi$) both $\vu_\xi\models \eta^\ini_\varpi$ and $\hat\vu\models \OG(l)$. \autoref{lem:eta-B-to-S} then implies that there is a path $\pi^\S$ such that $\il\trans{\pi^\S}$ in $\S$ such that $\omega\models \eta^\S_{\pi^\S}$, which (by \autoref{prop:eta-vs-vu}) implies $(\il,\sem\ini)\trans\omega$.

\item[Only if.] Assume $q_0\trans\omega$ in $\L$. By \autoref{prop:eta-vs-vu}, there is a path $\il\trans{\pi^\S}$ in $\S$ such that $\sigma_{\pi^\S}=\sigma_\omega$ and $\vu_\omega\models \eta^\S_{\pi^\S}$. \autoref{lem:eta-B-to-S} then implies the existence of a path $\pi^\B$ such that $\sigma_{\pi^\B}=\sigma_\omega$ and 
\begin{enumerate}
\item[(i)] $\vartheta_\omega\models \eta^\ini_{\pi^\B}$, implying $\vartheta_\omega\models \EC_{\B,\ini}(\sigma_\omega)$, and 
\item[(ii)] and for all $\xi\preceq \omega$, if $\varpi\preceq\pi^\B$ is the corresponding path prefix (meaning $\sigma_\varpi=\sigma_\xi$) and $\il\trans{\varpi} l$ with $\OG\definedon l$, then $\hat\vu_\xi\models \OG(l)$, implying $\hat\vu_\xi\models \GI_{\B,\ini}(\sigma)$. 
\end{enumerate} 
\end{description}
\end{proof}

\symbolicvsconcrete*

\begin{proof}[of \autoref{thm:sym-concrete-relation}] Let $\S=\ISS{\B}{\ini}$ and $\L=\sem\S$ and $\cal T=\TC(\L)$. We abbreviate $\EC_{\B,\ini}$ to $\EC$ and $\GI_{\B,\ini}$ to $\GI$.
\begin{enumerate}
\item 
\begin{description}
\item[If.] Assume that, for a given $\xi\preceq \omega$, $\vu_{\xi}\models \EC_{\B,\ini}(\sigma_\xi)$ and conditions \emph{(i)} and \emph{(ii)} hold. \autoref{lem:omega-vs-EC} then implies $q_0\trans\xi q$ for some $q$ and, moreover, $q_0\ntrans{\xi u}$ for all $u\in \GUU$; hence $q$ is a sink state in $\L$. If $q\in Q^\Cl$, the underlying BDDTS location $l$ must be closed, but then due to the fact that $\B$ is saturated there is an outgoing transition $l\trans{\alpha,\phi,a}$ with $g_\alpha\in G_o$ and satisfiable $\phi$, contradicting $\sink(q)$. It follows that $q\in Q^\Op$ (in $\L$) and hence $q\in P$ (in $\cal T$), implying $\omega\passes \cal T$.

\item[Only if.]
Assume $\omega\passes \cal T$ due to $q_0\trans\xi q\in P$ for $\xi\preceq \omega$. This transition sequence then also exists in $\L$ and, moreover, $q\in Q^\Op$ and $q\ntrans u$ for all $u\in \GUU$. \autoref{lem:omega-vs-EC} then implies $\vu_\xi\models \EC(\sigma_\xi)$ and for all $\xi'\preceq \xi$, $\hat\vu_{\xi'}\models \GI(\sigma_{\xi'})$ (hence condition~\emph{(i)} of \autoref{thm:sym-concrete-relation}.1 is satisfied), and also for all $u\in \GUU$ either $\vu_{\xi u}\not\models \EC(\sigma_{\xi u})$ or for some $\xi' \preceq \xi u$, $\hat\vu_{\xi'}\not\models \GI(\sigma_{\xi'})$. The only candidate for the latter is $\xi'=\xi u$; however, this would mean that there is an outgoing transition with an output gate from the location underlying $q$, and (due to output richness) this conflicts with $q\in Q^\Op$. We may conclude that condition~\emph{(ii)} of \autoref{thm:sym-concrete-relation}.1 is also satisfied.
\end{description}

\item
\begin{description}
\item[If.] Assume that, for a given $\xi\preceq \omega$, $\vu_{\xi}\models \EC(\sigma_\xi)$ and $\hat\vu_\xi \not\models \GI(\sigma_\xi)$. Let $u_1\cdots u_n\preceq \xi$ be the largest prefix such that $\hat\vu_{u_1\cdots u_i}\models \GI(\sigma_{u_1\cdots u_i})$ for all $i<n$ whereas $\hat\vu_{u_1\cdots u_n}\not\models \GI(\sigma_{u_1\cdots u_n})$. According to \autoref{lem:omega-vs-EC}, it follows that $q_0\trans{u_1\cdots u_{n-1}} q\ntrans{u_n}$. Because $\vu_{u_1\cdots u_n}\models \EC(\sigma_{u_1\cdots u_n})$ it follows that $\B$ has a transition $l\trans{\alpha,\phi,a} l'$ for the underlying location $l$ of $q$ such that $\vu_{u_1\cdots u_n}\models \phi$; since nevertheless $q\ntrans{u_n}$ it must be the case that $\OG\definedon l'$. Because $\B$ is output-rich, this implies $g_\alpha\in G_o$ and $l\in L^\Cl$, hence $u_n\in A_o$ and $q\in Q^\Cl$; therefore $q\trans{u_n} q_\f$ in $\cal T$. We may conclude $\cal T\fails \omega$.

\item[Only if.] Assume that $\omega\fails \cal T$ due to $q_0\trans{\xi u} q_\f$ for $\xi u\preceq \omega$. It follows that $q_0\trans\xi q\ntrans u$ in $\L$ for some $q\in Q^\Cl$ and $u\in A_o$. According to \autoref{lem:omega-vs-EC}, it follows that $\vu_\xi\models \EC(\sigma_\xi)$ and for all $\xi'\preceq \xi$, $\hat\vu_{\xi'}\models \GI(\sigma_{\xi'})$, whereas either $\vu_{\xi u}\not\models \EC(\sigma_{\xi u})$ or $\hat\vu_{\xi u}\not\models \GI(\sigma_{\xi u})$. However, the first contradicts the saturation of $\B$ (noting that the underlying location of $q$ is closed and $g_u$ is an output action). In conclusion, $\xi u$ satisfies the condition of \autoref{thm:sym-concrete-relation}.2.
\end{description}
\end{enumerate}
\end{proof}

\corollarytestingequiv*
\begin{proof}
We prove the left-to-right implications; the inverse direction follows by symmetry.
\begin{itemize}
\item Assume $\TC(\B_1,\ini)\fails\omega$ for some $\B_1\in\BB_1\restr\ini$; then (due to \autoref{thm:sym-concrete-relation}.2) there is a $\xi\prec \omega$ such that $\vu_\xi\models \EC_{\B_1,\ini}(\sigma_\xi)$ and $\hat\vu_\xi \not\models \GI_{\B_1,\ini}(\sigma_{\xi})$.
Due to $\BB_1\simeq \BB_2$ (disjunctive equivalence of $\EC$), there is a $\B_2\in \BB_2\restr\ini$ such that $\vu_\xi\models \EC_{\BB_2,\ini}(\sigma_\xi)$, and (conjunctive equivalence of $\GI$) also $\hat\vu_\xi\not\models \EC_{\B_2,\ini}$. 

\item By contraposition. Assume $\TC(\B_2,\ini)\not\passes\omega$ for some $\B_2\in\BB_2\restr\ini$; then either there is a $\xi\preceq \omega$ such that $\vu_\xi\models \EC_{\B_2,\ini}(\sigma_\xi)$ and $\hat\vu_\xi\not \models \GI_{\B_2,\ini}(\sigma)$, or there is some $\xi\succ \omega$ such that $\vu_\xi\models \EC_{\B_2,\ini}(\sigma_\xi)$. In the first case we have $\omega\fails \TC(\B_2,\ini)$, which by the first clause of this proof implies $\omega\fails\TC(\B_1,\ini)$ and hence $\omega\not\passes \TC(\B_1,\ini)$ for some $\B_1\in\BB_1\restr\ini$. In the second case, due to $\BB_1\simeq \BB_2$ there is a $\B_1\in \BB_1\restr\ini$ such that $\vu_\xi\models \EC_{\BB_1,\ini}(\sigma_\xi)$, again implying $\omega\not\passes \TC(\B_1,\ini)$.
\end{itemize}
\end{proof}


\end{document}